\numberwithin{equation}{section}
\newtheorem{theorem}{Theorem}[section]
\newtheorem{lemma}[theorem]{Lemma}
\newtheorem{define}[theorem]{Definition}
\newtheorem{remark}[theorem]{Remark}
\newtheorem{example}[theorem]{Example}
\newtheorem{proposition}[theorem]{Proposition}
\def\J{\mathrm{J}}
\def\D{\mathcal{D}}
\def\DO{\mathrm{DO}}
\def\ker{\mathrm{Ker\,}}
\def\Ad{\mathrm{Ad\,}}
\def\PBDO{\mathrm{PBDO}}
\def\SDO{\mathrm{SDO}}
\def\hhh{{H\tilde{\times} H\tilde{\times}H}}
\author[B. Wang]{Bao Wang}
\address{School of Mathematics and Statistics, Ningbo University, Ningbo 315211, PR China.}
\email{wangbao@nbu.edu.cn}
\begin{document}

\title[Generalized pentagram maps via $Q$-nets and refactorization mappings]{Generalized pentagram maps via $Q$-nets and refactorization mappings}

\subjclass[2010]{37K20; 37K25; 51A05}
\keywords{pentagram maps;
discrete integrable systems;
discrete differential geometry;
refactorization.}

\maketitle

\begin{abstract}
We introduce a family of generalizations of the pentagram maps related to $Q$-nets.
A specific example is considered,
and we find 
the map can be treated as a refactorization mapping in the Poisson-Lie group of pseudo-difference operators.
This method was firstly proposed by Izosimov,
and we generalize it to fit our needs.
Using this description,
we obtain the corresponding Lax form with a spectral parameter and invariant Poisson brackets.
Finally,
we consider the reduction to $B$-nets and the discrete BKP equation,
offering a geometric explanation for the discrete-time Toda equation of BKP type proposed by Hirota.

\end{abstract}

\section{Introduction}

The pentagram map was introduced by Schwartz \cite{Schwartz1992The} in 1992,
and now it has been a popular discrete dynamic system.
It is defined on the space of plane polygons in projective plane and has
an elegant geometric meaning:
taking a vertex $v_k$ to the intersection of two segments $\overline{v_{k-1}v_{k+1}}$ and $\overline{v_kv_{k+2}}$.
The pentagram map has many amazing connections to integrable systems \cite{2010The,2013Liouville,Soloviev2013Integrability}, 
cluster algebras \cite{2014Integrable,GLICK20111019,cluster2024}, 
dimer models \cite{dimer2022}, Poisson-Lie group \cite{refactorization2020}, Poncelet polygons \cite{izosimov2022pentagram} etc.
In particular,
its continuous limit is the Boussinesq equation \cite{2010The}.

There are some generalizations of the pentagram map,
such as 
the high dimensional map on so-called corrugated polygons \cite{2014Integrable},
the short-diagonal pentagram map in $\mathbb{RP}^d$ \cite{2012Integrability},
the dented pentagram map \cite{dented},
generalized pentagram maps from Y-meshes \cite{glick2016},
the long-diagonal pentagram map \cite{long2022},
the pentagram map on coupled polygons \cite{coupled_sub},
the maps on Grassmannian \cite{grassmann,Non} 
and non-integrable pentagram-type maps \cite{nonintegrable2015}.
\subsection{$Q$-nets and generalized pentagram maps}
In this paper,
we introduce a family of generalization of pentagram map related to $Q$-nets.
Discrete conjugate nets (or multidimensional quadrilateral lattice, or $Q$-nets) were introduced by Doliwa-Santini \cite{doliwaMQL},
providing an interesting description of the relation between integrability and discrete geometry.
A systematic exposition for that is given in the monograph of Bobenko-Suris \cite{bobenko2008discrete}.
It is an open question of how the pentagram map and $Q$-nets relate \cite[Remark 1.5]{glick2016}.
Recently,
Affolter, Glick, Pylyavskyy and Ramassamy \cite{affolter2024vector} resolve the question by studying a geometric model for local transformation of bipartite graphs,
whose evolution for different choices of the graph coincides with many notable dynamical systems including the pentagram map, $Q$-nets and discrete Darboux maps.
In other words,
they have placed the pentagram map and $Q$-nets into a unified framework.

Here
we try a different approach:
consider the reduction of $Q$-nets,
which can be treated as generalized pentagram maps.
Let's
consider the following example.
Suppose $v^1=\{v_i^1\}_{i\in\mathbb{Z}}$, $v^2=\{v_i^2\}_{i\in\mathbb{Z}}$, are a pair of twisted $N$-gons in $\mathbb{RP}^3$ with the same monodromy.
Define a map $T_Q$
\begin{align}\label{map1}
	T_Q\left(v_i^{1},v_i^{2}\right)=\left(v_i^{2},v_i^{3}\right),\quad
	\forall i\in\mathbb{Z},
\end{align}
where $v_i^3$ is the intersection of three planes in $\mathbb{RP}^3$
\begin{align}
	\begin{split}
		v_i^{3}:=\left(v_{i-1}^{1},\,v_{i-1}^{2},\,v_{i}^{2}\right)
		\cap
		\left(v_{i}^{1},\,v_{i-1}^{2},\,v_{i+1}^{2}\right)
		\cap
		\left(v_{i+1}^{1},\,v_{i}^{2},\,v_{i+1}^{2}\right).
	\end{split}
\end{align}
It is clear that three sets of four vertices $\left(v_{i-1}^{1},\,v_{i-1}^{2},\,v_{i}^{2},\,v_i^3\right)$,
$\left(v_{i}^{1},\,v_{i-1}^{2},\,v_{i+1}^{2},\,v_i^3\right)$
and
$\left(v_{i+1}^{1},\,v_{i}^{2},\,v_{i+1}^{2},\,v_i^3\right)$
are coplanar respectively (see Figure \ref{f0}).
\begin{figure}[htbp]
	\centering
	\includegraphics[scale=0.6]{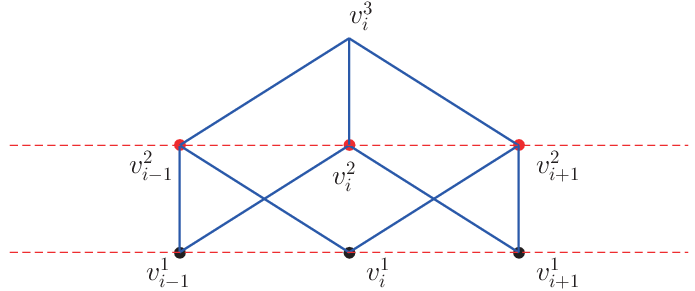}
	\caption{An example of $Q$-type pentagram maps}
	\label{f0}
\end{figure}

\subsection{Refactorization mappings and $Q$-type pentagram maps}\label{subsec_re}
The main result is that
we find the map \eqref{map1} can be treated as a refactorization mapping in the Poisson-Lie group of pseudo-difference operators.
The approach is proposed by Izosimov in \cite{refactorization2020},
which provides new Lax forms with a spectral parameter and invariant Poisson structures for various multidimensional pentagram maps.

Suppose that we are given three $2N$-periodic difference operators $\D_i$, $i=1,2,3$,
such that
$\D_1$ is supported in $\{2,3\}$, $\D_2$ is supported in $\{0,1\}$,
and $\D_3=f+gT+hT^2$ is supported in $\{0,1,2\}$ with $f_{2i}=h_{2i}=0$, for all $i\in\mathbb{Z}$.
Then we can assign a twisted $2N$-gon as follows:
a sequence of vectors $\Phi_i\in\mathbb{R}^4$, $i\in\mathbb{Z}$ solves
\begin{align*}
\left(\D_1\D_3-\D_2\right)\Phi=0.
\end{align*} 
A key ansatz is that we can construct a pair of twisted $N$-gons $(v^1,v^2)\subset\mathbb{RP}^3$ with lifts $(V^1,V^2)\subset\mathbb{R}^4$ as follows:
\begin{align*}
V_i^1=\Phi_{2i},\quad
V_i^2=\Phi_{2i+1},\quad
\forall i\in\mathbb{Z}.
\end{align*}
It is clear that if we transform our operators as
\begin{align*}
	\D_1\mapsto \alpha\D_1\beta^{-1},\quad
	\D_2\mapsto\alpha\D_2\gamma^{-1},\quad
	\D_3\mapsto\beta\D_3\gamma^{-1},
\end{align*}
where $\alpha$, $\beta$, $\gamma          $ are quasi-periodic sequences with the same monodromy,
then
the pair of polygons $(v^1,v^2)$ will remain unchanged.
First, we prove the following theorem.
\begin{theorem}
There is a bijection between triples of operators $\D_i$, $i=1,2,3$, up to the above action,
and pairs of twisted $N$-gons $(v^1,v^2)$ modulo projective transformations.
\end{theorem}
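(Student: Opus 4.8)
The plan is to write down the two maps explicitly and check that, after passing to the quotients, each is the inverse of the other. Throughout, let $T$ be the shift and set $L:=\D_1\D_3-\D_2$. I would begin with the forward map, from operators to polygons. Expanding the product shows $\D_1\D_3$ is supported in $\{2,3,4,5\}$, so $L$ is supported in $\{0,1,\dots,5\}$; the point is that the hypotheses $f_{2i}=h_{2i}=0$ force the coefficient of $T^{2}$ in $L$ to vanish at even sites and the coefficient of $T^{5}$ to vanish at odd sites. Propagating the recurrence $L\Phi=0$ two steps at a time — using the equation at an even site together with the one at the following odd site — I would verify that the space of scalar solutions $\Phi\colon\mathbb{Z}\to\mathbb{R}$ is exactly $4$-dimensional, the two ``missing'' coefficients compensating the two extra orders. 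Hence a generic $\mathbb{R}^4$-valued solution $\Phi$ is, up to $\mathrm{PGL}_4$, the lift of a twisted $2N$-gon in $\mathbb{RP}^3$; since $T^{2N}$ preserves the solution space, $\Phi$ is quasi-periodic, so the subsampling $V^1_i=\Phi_{2i}$, $V^2_i=\Phi_{2i+1}$ yields a pair of twisted $N$-gons with one and the same monodromy. Because $L$ has scalar coefficients, $\mathrm{PGL}_4$ acts on the set of solutions, so the pair $(v^1,v^2)$ is canonically attached to $L$; and under the gauge action $L\mapsto \alpha L\gamma^{-1}$, hence $\Phi\mapsto\gamma\Phi$, which merely rescales each lift vector (the $\beta$ part cancels in $L$). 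Thus the forward map is well defined on the quotient by the gauge action, with values in $\{(v^1,v^2)\}/\mathrm{PGL}_4$.

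Next I would construct the backward map. Given $(v^1,v^2)$, pick lifts, interleave them into $\Phi$, and expand $L\Phi=0$ separately at even and odd indices. Thanks to $f_{2i}=h_{2i}=0$, each identity becomes a linear relation among five of the vectors $V^a_i$ in $\mathbb{R}^4$ (at an even site $V^1_i,V^2_i,V^2_{i+1},V^1_{i+2},V^2_{i+2}$, at an odd site $V^2_i,V^1_{i+1},V^2_{i+1},V^1_{i+2},V^2_{i+2}$); for a generic pair of polygons these five span the space with a one-dimensional relation module. Reading off the two families of relations recovers the operator $\D_2$ and the product $Q:=\D_1\D_3$, each determined up to multiplication on the left by an arbitrary nonvanishing sequence. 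This pointwise scaling, together with the choice of lifts and the normalization of the relations, is exactly absorbed by the $\alpha$- and $\gamma$-parts of the gauge group.

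The crux is the factorization step: one must split $Q=\D_1\D_3$ into a factor supported in $\{2,3\}$ and a factor $f+gT+hT^{2}$ with $f_{2i}=h_{2i}=0$, uniquely up to the residual gauge $\D_1\mapsto\D_1\beta^{-1}$, $\D_3\mapsto\beta\D_3$ (which cancels in $Q$). My plan is to use $\beta$ to normalize one coefficient of $\D_1$ to $1$ and then show that every coefficient of $\D_1$ and of $\D_3$ is an explicit rational function of the coefficients of $Q$ at neighbouring indices — no recurrence around the loop occurs, so solvability is automatic and the $2N$-periodicity (or quasi-periodicity) is preserved. The two vanishing conditions are exactly what balances the count: over one period of length $2$ there are six equations for six unknowns. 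The main obstacle I anticipate is the bookkeeping of the non-vanishing conditions on the coefficients of $Q$ that these formulas require; this is why the statement should be understood as a bijection between suitably generic, gauge-invariant, Zariski-open subsets on the two sides, and one has to check that these open conditions correspond to one another.

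With the three maps in hand, mutual inverseness is essentially formal. Starting from a triple and running forward then backward, the relations extracted from $\Phi$ are literally $L\Phi=0$ rewritten, so they reproduce $\D_2$ and $Q=\D_1\D_3$ up to pointwise scaling, and uniqueness of the factorization returns $(\D_1,\D_3)$ up to $\beta$. In the other direction, the operators built from $(v^1,v^2)$ annihilate the interleaved lift by construction, and since $\ker L$ is $4$-dimensional, re-solving recovers the same twisted $2N$-gon up to $\mathrm{PGL}_4$, hence the same pair of polygons. Matching the three factors of the gauge group with the three choices made along the way — normalization of the relations with $\alpha$, choice of lifts with $\gamma$, normalization of the factorization with $\beta$ — then shows the correspondence is well defined and is the asserted bijection.
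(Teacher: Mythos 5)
Your proposal is correct and follows essentially the same route as the paper: encode the pair of polygons via the interleaved lift $\Phi$ and the relation $(\D_1\D_3-\D_2)\Phi=0$, read off $-\D_2$ and $Q=\D_1\D_3$ from the unique five-term linear relations forced by the vanishing pattern $f_{2i}=h_{2i}=0$, and then split $Q=\D_1\D_3$ uniquely modulo the $\beta$-gauge by explicit local formulas, exactly as in Lemma \ref{lemma_fac}. The only cosmetic differences are that the paper reaches the five-term relations by first invoking Proposition \ref{pro_equi} and performing row operations on the resulting order-four operator, and that it proves uniqueness of the factorization by a left-kernel argument rather than by your six-equations-for-six-unknowns count.
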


Then we consider the following refactorization mapping $\D_i\mapsto \tilde{\D}_i$, $i=1,2,3$, determined by:
\begin{align}\label{eq_i1}
	\left(\tilde{\D}_1^{-1}\tilde{\D}_2\right)\D_3=\tilde{\D}_3\left(\D_1^{-1}\D_2\right).
\end{align}
Recall that the classic refactorization mapping in \cite{refactorization2020} is
\begin{align*}
\tilde{\D}_+\D_-=\tilde{\D}_-\D_+,
\end{align*}
where $\D_\pm$ are difference operators.
However,
our map \eqref{eq_i1} involves a pseudo-difference operator $\D_1^{-1}\D_2$ and a difference operator $\D_3$.
In this case,
we prove the map \eqref{eq_i1} is still well-defined.
\begin{theorem}
The equation \eqref{eq_i1} gives a well-defined mapping of the space of triples of operators, up to the action \eqref{map1}, to itself.
\end{theorem}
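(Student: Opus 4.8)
\noindent\emph{Proof strategy.} The plan is to deduce the statement from three claims: (i) for every admissible triple $(\D_1,\D_2,\D_3)$ there is a triple $(\tilde{\D}_1,\tilde{\D}_2,\tilde{\D}_3)$ of the same type solving \eqref{eq_i1}; (ii) any two such triples differ by the action of quasi-periodic sequences recalled above; (iii) replacing the input by a gauge-equivalent triple replaces the output by a gauge-equivalent one. Together, (i)--(iii) show that \eqref{eq_i1} induces a well-defined self-map of the quotient.

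The first step is a reformulation. Put $L:=\D_1^{-1}\D_2$ and $\tilde{L}:=\tilde{\D}_1^{-1}\tilde{\D}_2$, so that \eqref{eq_i1} reads $\tilde{L}\,\D_3=\tilde{\D}_3\,L$, that is, $\tilde{L}=\tilde{\D}_3\,L\,\D_3^{-1}$. Because $\D_1$ is supported in $\{2,3\}$ and $\D_2$ in $\{0,1\}$, $L$ is a genuine pseudo-difference operator whose expansion in powers of $T$ starts at $T^{-2}$ and runs to $+\infty$; conversely, any pseudo-difference operator of this shape can be written as $\D_1^{-1}\D_2$ with the $\D_i$ of the prescribed supports, uniquely up to $(\D_1,\D_2)\mapsto(\alpha\D_1,\alpha\D_2)$ for a quasi-periodic multiplier $\alpha$. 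Hence ``admissible triple modulo gauge'' is the same datum as ``the pair $(L,\D_3)$ modulo $L\mapsto\beta L\gamma^{-1}$, $\D_3\mapsto\beta\D_3\gamma^{-1}$'', and it suffices to refactor the pair $(L,\D_3)$.

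For (i) I would choose the coefficients of $\tilde{\D}_3=f+gT+hT^2$ --- subject only to the normalization $f_{2i}=h_{2i}=0$ --- so that $\tilde{L}:=\tilde{\D}_3 L\D_3^{-1}$ again has the shape of an admissible $\tilde{\D}_1^{-1}\tilde{\D}_2$; concretely, so that left multiplication of the positive tail of $\tilde{L}$ by a suitable two-term $\{2,3\}$-supported operator produces a $\{0,1\}$-supported operator. A Newton-polygon bookkeeping shows that only finitely many coefficients per period need to be corrected and that this imposes exactly as many scalar equations as there are free entries of $\tilde{\D}_3$; one then solves them, either by ordering the equations so that each successive one determines the next coefficient (a triangular, Birkhoff/Gauss-type recursion in the group of pseudo-difference operators), or, on a Zariski-dense open set of inputs, by nondegeneracy of the resulting finite linear system, extending to the closure by continuity. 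Having produced $\tilde{\D}_3$ and $\tilde{L}$, one factors $\tilde{L}=\tilde{\D}_1^{-1}\tilde{\D}_2$ by the previous paragraph and checks that the supports and the even-vanishing normalization of the output triple are those of an admissible triple. For (ii) and (iii): if $(\tilde{\D}_3,\tilde{L})$ and $(\tilde{\D}_3',\tilde{L}')$ both solve $\tilde{L}\D_3=\tilde{\D}_3 L$, then $\tilde{\D}_3'\tilde{\D}_3^{-1}=\tilde{L}'\tilde{L}^{-1}$, the left side a ratio of $\{0,1,2\}$-supported operators and the right side a ratio of admissible $L$-type operators; comparing Newton polygons forces the common value to be order zero, i.e.\ a quasi-periodic multiplier, which is precisely the residual gauge freedom. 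And under $(\D_1,\D_2,\D_3)\mapsto(\alpha\D_1\beta^{-1},\alpha\D_2\gamma^{-1},\beta\D_3\gamma^{-1})$ one has $L\mapsto\beta L\gamma^{-1}$ and $\D_3\mapsto\beta\D_3\gamma^{-1}$; substituting into $\tilde{L}=\tilde{\D}_3 L\D_3^{-1}$ and using that the shape and normalization conditions cutting out the admissible class are equivariant, one finds that the refactored data transform by the gauge action as well.

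The main obstacle, and the genuinely new point compared with the classical refactorization $\tilde{\D}_+\D_-=\tilde{\D}_-\D_+$ of \cite{refactorization2020}, is that $L=\D_1^{-1}\D_2$ is a true infinite-support pseudo-difference operator, so (i) is not the familiar problem of refactoring a product of two bounded-degree difference operators but a Birkhoff-type splitting of $L\D_3^{-1}$ into an admissible pseudo-difference factor times a bounded difference factor; the delicate points are to verify that correcting $\tilde{L}$ involves only finitely many coefficients --- so that the system for $\tilde{\D}_3$ is finite and solvable --- and that the degenerate normalization $f_{2i}=h_{2i}=0$, together with the non-invertibility it entails, does not obstruct solvability. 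I expect to deal with the latter either by working first on a generic open set and then passing to the closure, or, more conceptually, by transporting the statement through the bijection of the preceding theorem to the space of polygon pairs, where the relevant genericity and the well-definedness of the underlying geometric construction are transparent.
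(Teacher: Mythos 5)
Your skeleton (existence, uniqueness up to gauge, equivariance) is the right one, and rewriting \eqref{eq_i1} as $\tilde{\D}_1^{-1}\tilde{\D}_2=\tilde{\D}_3\left(\D_1^{-1}\D_2\right)\D_3^{-1}$ is fine, but the heart of the theorem --- existence --- is not actually proved. You reduce it to choosing $\tilde{\D}_3$ so that $\tilde{\D}_3\left(\D_1^{-1}\D_2\right)\D_3^{-1}$ is again of the form $\tilde{\D}_1^{-1}\tilde{\D}_2$, and then assert that a ``Newton-polygon bookkeeping'' turns this into finitely many scalar equations matching the free entries of $\tilde{\D}_3$. That count is precisely what has to be established: being of the form $\D_1^{-1}\D_2$ with $\D_1$ supported in $\{2,3\}$ and $\D_2$ in $\{0,1\}$ is a priori an infinite set of conditions (one per power of $T$), and your intermediate claim that \emph{any} pseudo-difference operator supported in $[-2,+\infty)$ is of this form is false: per period the products $\D_1^{-1}\D_2$ form a finite-dimensional family, while operators with that support do not. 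You flag this yourself as ``the main obstacle,'' so the proposal stops exactly where the proof must begin; the fallback of transporting the problem to polygon pairs would at best give existence generically and leaves uniqueness untouched.

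The paper's resolution rests on a device absent from your proposal: it first refactors in the \emph{opposite} order, producing $\hat{\D}_1$ (supported in $\{2,3\}$) and $\hat{\D}_2$ (supported in $\{0,1\}$) with $\D_1\hat{\D}_2=\D_2\hat{\D}_1$, i.e.\ $\D_1^{-1}\D_2=\hat{\D}_2\hat{\D}_1^{-1}$; this is obtained by dualizing to $\hat{\D}_2^*\D_1^*=\hat{\D}_1^*\D_2^*$, which is the standard refactorization of \cite{refactorization2020}. Substitution converts \eqref{eq_i1} into the purely polynomial identity $\tilde{\D}_2\left(\D_3\hat{\D}_1\right)=\left(\tilde{\D}_1\tilde{\D}_3\right)\hat{\D}_2$, from which $\tilde{\D}_2$ is read off from the one-dimensional space $\D_3\hat{\D}_1\left(\ker\hat{\D}_2\right)\subset\ker\tilde{\D}_2$, the product $\tilde{\D}_1\tilde{\D}_3$ is recovered by the division Lemma \ref{lemma_fa}, and the two factors are separated by the factorization Lemma \ref{lemma_fac}. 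Your uniqueness step has the same soft spot: the claim that $\tilde{\D}_3'\tilde{\D}_3^{-1}=\tilde{L}'\tilde{L}^{-1}$ must have order zero ``by comparing Newton polygons'' is not immediate, since $\tilde{\D}_3$ satisfies $f_{2i}=h_{2i}=0$ and hence is not properly bounded, and both ratios are genuine pseudo-difference operators supported on a half-line. To salvage your route you would need a splitting lemma characterizing the image of $(\D_1,\D_2)\mapsto\D_1^{-1}\D_2$ and showing that conjugation by $\D_3$ followed by left multiplication by an admissible $\tilde{\D}_3$ can always be normalized back into that image; the opposite-order refactorization is exactly the tool that makes this tractable.
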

Define
\begin{align*}
	\mathcal{L}:=\D_3^{-1}\left(\D_1^{-1}\D_2\right),\quad
	\tilde{\mathcal{L}}:=\tilde{\D}_3^{-1}\left(\tilde{\D}_1^{-1}\tilde{\D}_2\right),
\end{align*}
then the equation \eqref{eq_i1} gives a Lax representation 
\begin{align*}
\mathcal{L}\mapsto \tilde{\mathcal{L}}=\D_3\mathcal{L}\D_3^{-1}=\left(\D_1^{-1}\D_2\right)\D_3^{-1}.
\end{align*}
Finally, we provide the following connection.
\begin{theorem}
The $Q$-type pentagram map \eqref{map1}, written in terms of difference operators,
is just the refactorization mapping determined by \eqref{eq_i1}.
\end{theorem}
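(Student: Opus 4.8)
The plan is to transport a kernel solution of $(\D_1\D_3-\D_2)\Phi=0$ through the refactorization \eqref{eq_i1}, and then to recover the image polygon directly from the support conditions on the $\D_i$, using the bijection and the well-definedness established above.

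\emph{Step 1: transporting the kernel.} I would first deduce from \eqref{eq_i1} the operator identity
\[
\tilde{\D}_1\tilde{\D}_3-\tilde{\D}_2=\tilde{\D}_1\tilde{\D}_3\D_1^{-1}\bigl(\D_1\D_3-\D_2\bigr)\D_3^{-1},
\]
by writing $\D_1\D_3-\D_2=\D_1\bigl(\D_3-\D_1^{-1}\D_2\bigr)$ and using the rearrangement $\tilde{\D}_3\bigl(\D_1^{-1}\D_2\bigr)\D_3^{-1}=\tilde{\D}_1^{-1}\tilde{\D}_2$ of \eqref{eq_i1}. Since $\D_1$ and $\D_3$ are invertible in the group of pseudo-difference operators, the left factor $\tilde{\D}_1\tilde{\D}_3\D_1^{-1}$ is invertible, so $\ker\bigl(\tilde{\D}_1\tilde{\D}_3-\tilde{\D}_2\bigr)=\D_3\cdot\ker\bigl(\D_1\D_3-\D_2\bigr)$. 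Hence, if $\Phi$ is the solution attached by the ansatz to $(\D_1,\D_2,\D_3)$, so that $V_i^1=\Phi_{2i}$ and $V_i^2=\Phi_{2i+1}$, then $\tilde{\Phi}:=\D_3\Phi$ is the solution attached to the refactorized triple, and therefore $\tilde{V}_i^1=\tilde{\Phi}_{2i}$, $\tilde{V}_i^2=\tilde{\Phi}_{2i+1}$.

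\emph{Step 2: even rows give $\tilde v^1=v^2$.} With $\D_3=f+gT+hT^2$, the even component of $\D_3\Phi$ is $f_{2i}\Phi_{2i}+g_{2i}\Phi_{2i+1}+h_{2i}\Phi_{2i+2}$, which collapses to $g_{2i}\Phi_{2i+1}$ because $f_{2i}=h_{2i}=0$. Thus $\tilde V_i^1$ is proportional to $V_i^2$, i.e.\ $\tilde v_i^1=v_i^2$, the first component of $T_Q$.

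\emph{Step 3: odd rows and the three incidence planes.} The odd component of $\D_3\Phi$ is $f_{2i+1}\Phi_{2i+1}+g_{2i+1}\Phi_{2i+2}+h_{2i+1}\Phi_{2i+3}$, so $\tilde V_i^2$ lies in the span of $V_i^2,V_{i+1}^1,V_{i+1}^2$; projectively, $\tilde v_i^2$ lies on the plane $\bigl(v_{i+1}^1,v_i^2,v_{i+1}^2\bigr)$, the third plane in the definition of $v_i^3$. For the other two planes I would write the original relation as $\D_1\tilde{\Phi}=\D_2\Phi$ and extract rows of fixed parity. Since $\D_1$ is supported in $\{2,3\}$ and $\D_2$ in $\{0,1\}$, the $2j$-th row relates $\tilde{\Phi}_{2j+2},\tilde{\Phi}_{2j+3}$ to $\Phi_{2j},\Phi_{2j+1}$; replacing $\tilde{\Phi}_{2j+2}=\tilde V_{j+1}^1$ by its multiple of $V_{j+1}^2$ from Step 2 and putting $i=j+1$, this forces $\tilde v_i^2$ onto $\bigl(v_{i-1}^1,v_{i-1}^2,v_i^2\bigr)$. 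The $(2j+1)$-st row relates $\tilde{\Phi}_{2j+3},\tilde{\Phi}_{2j+4}$ to $\Phi_{2j+1},\Phi_{2j+2}$; replacing $\tilde{\Phi}_{2j+4}=\tilde V_{j+2}^1$ by its multiple of $V_{j+2}^2$ and putting $i=j+1$, it forces $\tilde v_i^2$ onto $\bigl(v_i^1,v_{i-1}^2,v_{i+1}^2\bigr)$. Hence $\tilde v_i^2$ lies on all three planes of the definition, so $\tilde v_i^2=v_i^3$, and together with Step 2 this gives $(\tilde v^1,\tilde v^2)=(v^2,v^3)=T_Q(v^1,v^2)$, as claimed.

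The main obstacle I anticipate is not conceptual but organizational: keeping the index shifts and row parities aligned so that exactly the three planes from the definition of $v_i^3$ are produced, and restricting to the dense open locus where $g_{2i}\neq0$, where the extreme coefficients of $\D_1$ and $\D_2$ are nonvanishing at the relevant indices, and where the three planes are in general position, so that $\tilde v_i^2$ and the triple intersection are genuine points. On that locus the above is an identity, and the full statement then follows by continuity, since both the refactorization map and $T_Q$ act on dense open subsets of the respective moduli spaces.
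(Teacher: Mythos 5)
Your proposal is correct and follows essentially the same route as the paper: the key step in both is to transport the kernel element $\Phi$ of $\D_1\D_3-\D_2$ through the refactorization relation, concluding that $\tilde{\Phi}=\D_3\Phi$ is the sequence encoded by the new triple (your operator identity in Step 1 is just a cleaner packaging of the paper's computation). Your Steps 2--3, which read off $\tilde v^1=v^2$ and the three incidence planes for $\tilde v_i^2$ from the supports of $\D_1,\D_2,\D_3$, re-derive inline the geometric content that the paper delegates to Lemma \ref{lemma_d123}, so the argument is sound and, if anything, more self-contained.
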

\subsection{Reduction to the discrete BKP equation}
The author previously considered the relation of pentagram-type maps and the discrete KP equation in
\cite{wang2023pentagram},
posing an open question on how to relate pentagram-type maps and the multi-component discrete KP equation.
It is known that the $Q$-nets are related to the
multi-component discrete KP equation \cite{doliwa1999generating},
so this paper resolves that question.
Finally,
we consider the reduction to the discrete BKP equation
\begin{align}
	\tau\tau_{123}=\tau_{12}\tau_3-\tau_{13}\tau_2+\tau_{23}\tau_1,
\end{align}
whose  linear problem
was constructed in \cite{Nimmo1997}.
The equation can also be obtained from some reduction of the quadrilateral lattice,
which was called $B$-quadrilateral lattice (BQL) \cite{Doliwa_BQL}.
We will consider the corresponding reductions for the reduced $Q$-nets and the $Q$-type pentagram maps.
The latter turns out to be the discrete Toda equation of BKP type proposed by Hirota \cite{hirota_iwao_tsujimoto_2001}.

The reminder of this paper is organized as follows.
Section \ref{sec_back} covers the background on difference operators and refactorization mappings.
In Section \ref{sec_qnet},
we illustrate how to define reduced $Q$-nets and $Q$-type pentagram maps.
The cross-ratios coordinates and explicit formulas for the map \eqref{map1} are demonstrated in Section \ref{sec_expl}.
The interpretation of refactorization mappings for $Q$-type pentagram maps is given in Section \ref{sec_refa}.
Section \ref{sec_redu} discusses the reduction of $Q$-type pentagram maps to B-type discrete equations.
Finally,
we conclude with some indications of future work.
\section{A brief review on difference operators and refactorization mappings}\label{sec_back}
\subsection{Difference and pseudo-difference operators}
In this subsection,
we recall some basic notions and facts about (pseudo-)difference operators  \cite{refactorization2020,izosimov2022pentagram}.

Let $\mathbb{R}^\infty$ be the vector space of bi-infinite sequences of real numbers,
and let $\J$ be a finite collection of integers.
A linear operator $\D:\mathbb{R}^\infty\rightarrow \mathbb{R}^\infty$ is called a difference operator supported in $\J$ if it can be written as
\begin{align*}
	(\D \xi)_k=\sum_{j\in\J}a_{j,k}\xi_{k+j},
\end{align*}
or equivalently,
\begin{align*}
	\D=\sum_{j\in\J}a_jT^j,
\end{align*}
where $T:\mathbb{R}^\infty\rightarrow\mathbb{R}^\infty$ is the shift operator $(T\xi)_k=\xi_{k+1}$.
Here $a_j$ are bi-infinite sequences in $\mathbb{R}^\infty$.

The order of a difference operator is defined as the number $\max\J-\min\J$.
A difference operator is called properly bounded if none of the sequences $a_{\min \J}$, $a_{\max \J}$ vanish.
A difference operator $\D$ is called periodic if all its coefficients $a_j$ are $n$-periodic sequences.
The set of $n$-periodic difference operators supported in $\J$ is denoted by $\DO_n(\J)$,
while $\PBDO_n(\J)\subset\DO_n(\J)$ stands for the (dense) subset of properly bounded operators. 

Furthermore,
let $H\tilde{\times}H$ be the group that consists of pairs of non-vanishing $n$-quasi-periodic scalar sequences with the same monodromy
\begin{align*}
	H\tilde{\times}H:=\{(\lambda,\mu)\mid \forall i\in\mathbb{Z},\,\lambda_i\mu_i\neq 0,\,\text{and}\,\,\exists z\in\mathbb{R}^*\,\text{s.t.}\,\forall i\in\mathbb{Z},\,\lambda_{i+n}=z\lambda_i,\,\mu_{i+n}=z\mu_i\},
\end{align*}
which acts on the space $\DO_n(\J)$ by means of the action
\begin{align}\label{eq_hh}
	\D\rightarrow \lambda\D\mu^{-1}.
\end{align}
Then we have the following one-to-one correspondence \cite{refactorization2020}.
\begin{proposition}\label{pro_equi0}
	The polygons in $\mathbb{RP}^d$ modulo the projective transformations can be encoded by means of difference operators in $\PBDO_n(\J)/H\tilde{\times}H$ with $\J=\{0,1,\ldots,d+1\}$, that is
	\begin{align}\label{op_D}
		\D=a^0+a^1T+a^2T^2+\cdots+a^{d+1}T^{d+1},
	\end{align}
	where $a^j\in\mathbb{R}^\infty$ are sequences such that $a^0_k\neq 0$, $a^{d+1}_k\neq 0$ for any $k\in\mathbb{R}$. 
\end{proposition}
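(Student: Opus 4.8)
The plan is to exhibit the correspondence explicitly in both directions and then check that the two constructions are mutually inverse.

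\emph{Polygons $\to$ operators.} Given a twisted $n$-gon $v=(v_k)_{k\in\mathbb{Z}}$ in $\mathbb{RP}^d$ in general position, I choose arbitrary lifts $V_k\in\mathbb{R}^{d+1}$ of $v_k$. General position means $(V_k,V_{k+1},\ldots,V_{k+d})$ is a basis of $\mathbb{R}^{d+1}$ for every $k$, so $V_{k+d+1}$ is a unique linear combination of them; packaging this as $\D V=0$ with $\D=T^{d+1}-\sum_{j=0}^{d}c_jT^j$, I then check two things. First, $a^{d+1}=1$ and $a^0=-c_0$ are nowhere zero — if $c_{0,k}=0$ then $V_{k+1},\ldots,V_{k+d+1}$ would span only a hyperplane, contradicting general position — so $\D\in\PBDO_n(\J)$ with $\J=\{0,\ldots,d+1\}$. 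Second, $\D$ is $n$-periodic: the monodromy relation $V_{k+n}=\mathcal{M}V_k$ with $\mathcal{M}\in GL_{d+1}$ a lift of the monodromy, together with uniqueness of the combination, gives $c_{j,k+n}=c_{j,k}$.

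\emph{Operators $\to$ polygons.} Conversely, given $\D=a^0+\cdots+a^{d+1}T^{d+1}\in\PBDO_n(\J)$, I note that the scalar equation $\D\xi=0$ has a $(d+1)$-dimensional solution space: since $a^{d+1}_k\neq0$ a solution propagates forward and since $a^0_k\neq0$ it propagates backward, so it is determined by its values at any $d+1$ consecutive indices, and evaluation at $k_0,\ldots,k_0+d$ is a linear isomorphism onto $\mathbb{R}^{d+1}$. Choosing a basis $\xi^{(1)},\ldots,\xi^{(d+1)}$, I set $V_k=(\xi^{(1)}_k,\ldots,\xi^{(d+1)}_k)$ and $v_k=[V_k]\in\mathbb{RP}^d$. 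For general position I use $W_k:=\det(V_k,\ldots,V_{k+d})$: substituting $\D V=0$ to eliminate $V_{k+d+1}$ yields the recursion $W_{k+1}=(-1)^{d+1}(a^0_k/a^{d+1}_k)\,W_k$, and since $a^0,a^{d+1}$ never vanish while $W_{k_0}\neq0$ (a basis of solutions maps to a basis of $\mathbb{R}^{d+1}$ under the evaluation isomorphism), we get $W_k\neq0$ for all $k$. Twistedness follows because $n$-periodicity of $\D$ makes $T^n$ preserve the solution space, so it acts there as some $\mathcal{M}\in GL_{d+1}$ and $V_{k+n}=\mathcal{M}V_k$, $v_{k+n}=[\mathcal{M}]v_k$.

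\emph{Matching the ambiguities and mutual inverseness.} The lift above is free up to rescaling $V_k\mapsto\mu_kV_k$ by a scalar sequence, and the operator annihilating a fixed lift is free up to left multiplication by a scalar sequence $\lambda$; combined, these give $\D\mapsto\lambda\D\mu^{-1}$, and inspecting the transformed coefficients $a^j_k\mapsto\lambda_ka^j_k\mu_{k+j}^{-1}$ shows that staying inside $\DO_n$ forces both $\lambda$ and $\mu$ to be quasi-periodic with a common monodromy, that is, $(\lambda,\mu)\in H\tilde{\times}H$, and conversely every such pair occurs. A projective transformation sends $V_k\mapsto AV_k$, $A\in GL_{d+1}$, leaving $\D$ untouched (it commutes with $T$), while in the second construction a change of basis of the solution space by $A$ is exactly the projective transformation $[A]$. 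So both maps descend to $\{\text{twisted $n$-gons in }\mathbb{RP}^d\}/PGL_{d+1}$ and $\PBDO_n(\J)/H\tilde{\times}H$, and they are mutually inverse essentially by construction: reading the linear recursion off a solution basis of $\D$ returns $\D$, and the solution space of the $\D$ built from a lift is the span of that lift. The one point needing genuine care is this monodromy bookkeeping — verifying that the quasi-periodicity of the lift, the normalization freedom of the operator, and the $n$-periodicity constraint fit together to produce exactly the ``same monodromy'' condition of $H\tilde{\times}H$, nothing more and nothing less; the rest is routine linear algebra, and the statement itself can alternatively be quoted from \cite{refactorization2020}.
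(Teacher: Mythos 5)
The paper does not actually prove this proposition; it is stated as a quoted result from \cite{refactorization2020}, so there is no in-paper argument to compare against. Your proof is correct and is essentially the standard construction from that reference: read the order-$(d+1)$ recurrence off a lift of the polygon in one direction, take the $(d+1)$-dimensional solution space of $\D\xi=0$ in the other, use the discrete Wronskian $W_{k+1}=(-1)^{d+1}(a^0_k/a^{d+1}_k)W_k$ for general position, and identify the residual ambiguities on both sides with the $H\tilde{\times}H$-action. The only imprecision worth flagging is in the first step: you take ``arbitrary lifts'' $V_k$ and then invoke $V_{k+n}=\mathcal{M}V_k$, which holds only for lifts chosen equivariantly with respect to a fixed linear lift $\mathcal{M}$ of the monodromy; a genuinely arbitrary lift yields coefficients that are merely quasi-periodic, and one must absorb this by the rescaling $V_k\mapsto\mu_kV_k$ — precisely the $\mu$-freedom you account for later — so the argument is easily repaired and otherwise complete.
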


The $n$-periodic pseudo-difference operator is a formal Laurent series of $T$
\begin{align}\label{eq_a}
	\sum_{j=k}^{+\infty}a_jT^j,
\end{align}
where each $a_j$ is an $n$-periodic bi-infinite sequence.
The set of $n$-periodic pseudo-difference operators, denoted by $\Psi\DO_n$,
is an associative algebra with respect to addition and composition of operators.
The operator \eqref{eq_a} is invertible if none of the elements of $a_k$ vanish.
The set of all invertible operators, denoted by $\mathrm{I}\Psi\DO_n$, is a group with respect to composition of operators,
which can be regarded as an infinite-dimensional Lie group with Lie algebra $\Psi\DO_n$.

Recall that a Lie group $G$ equipped with a Poisson structure is called a Poisson-Lie group if the group multiplication $G\times G\rightarrow G$ is a Poisson map.
Note that $\mathrm{I}\Psi\DO_n$ is isomorphic to the group of matrices over Laurent series with non-vanishing determinant,
which is a version of the loop group of $GL_n$.
The Poisson structure on the latter is known,
so
there also exists a natural Poisson structure on the group $\mathrm{I}\Psi\DO_n$ \cite{refactorization2020}.
\subsection{The refactorization maps in Poisson-Lie groups}
The classical pentagram map can be treated as a refactorization map in the Poisson-Lie group of pseudo-difference operators \cite{refactorization2020}.
Below we briefly describe the construction.

For a twisted $n$-gon $\{v_k\}_{k\in\mathbb{Z}}\in\mathbb{RP}^2$,
its lift $\{V_k\}_{k\in\mathbb{Z}}$ in $\mathbb{R}^3$ satisfies a recurrence relation 
\begin{align*}
	a_kV_{k+3}+b_kV_{k+2}+c_kV_{k+1}+d_kV_k=0.
\end{align*}
It can be equivalently written as $\mathcal{D}V=0$ with $\D=aT^3+bT^2+cT+d$.
According to proposition \ref{pro_equi0},
there is a one-to-one correspondence between the $n$-periodic difference operators (up to the action of scalar sequences) and the twisted $n$-gons (up to projective transformations).

If we split the operator into two parts $\mathcal{D}=\mathcal{D}_++\mathcal{D}_-$,
where $\mathcal{D}_\pm$ are supported in $\J_\pm$ with $\J_+=\{0,2\}$, $\J_-=\{1,3\}$,
then the pentagram map is equivalent to a multivalued map
\begin{align*}
	\mathcal{D}=\mathcal{D}_++\mathcal{D}_-\longmapsto \tilde{\mathcal{D}}=\tilde{\mathcal{D}}_++\tilde{\mathcal{D}}_-
\end{align*}
determined by the condition $\tilde{\mathcal{D}}_+\mathcal{D}_-=\tilde{\mathcal{D}}_-\mathcal{D}_+$,
or equilentely, $\tilde{\mathcal{D}}^{-1}_-\tilde{\mathcal{D}}_+=\mathcal{D}_+\mathcal{D}_-^{-1}$.
Define $\mathcal{L}:=\mathcal{D}_-^{-1}\mathcal{D}_+$,
then the map $\mathcal{L}\rightarrow\tilde{\mathcal{L}}:=\mathcal{D}_+\mathcal{D}_-^{-1}$ is a refactorization map in the Poisson-Lie group of pseudo-difference operators.

Various integrable pentagram-type maps have such a similar description \cite{refactorization2020} (see some examples in Table \ref{tab1}).
\begin{table}[htbp]
	\caption{Examples of pentagram-type maps corresponding to disjoint progressions}
	\label{tab1}
	\begin{tabular}{c|c|c}
		\hline
		$\mathrm{J}_+$ & $\mathrm{J}_-$& \textbf{The corresponding map} \\
		\hline
		$\{0,2\}$&$\{1,3\}$&Classical pentagram map\\
		\hline
		$\{0,1\}$&$\{2,3\}$&Inverse pentagram map\\
		\hline
		$\{0,d\}$&$\{1,d+1\}$&Pentagram map on corrugated polygons in $\mathbb{RP}^d$\\
		\hline
		$\{0,2,\ldots,2k\}$&$\{1,3,\ldots,2k+1\}$&Short-diagonal pentagram map in $\mathbb{RP}^{2k}$\\
		\hline
		$\{0,1,\ldots,m\}$&$\{m+1,\ldots,d+1\}$&Inverse dented pentagram maps in $\mathbb{RP}^d$\\
		\hline
	\end{tabular}
\end{table}
This description automatically provides new Lax forms with a spectral parameter and invariant Poisson structures.

\section{Reduced $Q$-nets and $Q$-maps}\label{sec_qnet}

Recall that a $3$-dimensional $Q$-net in $\mathbb{R}^N$, $N\geq 3$, is defined as a map $f:\mathbb{Z}^3\rightarrow \mathbb{R}^N$ such that all the elementary quadrilaterals with vertices
\begin{align*}
f,\quad T_if,\quad
T_jf,\quad
T_{i}T_jf
\end{align*} 
are coplanar \cite{doliwaMQL,bobenko2008discrete},
where $T_i$ is the shift operator in the $i$th direction.

Firstly,
we replace $\mathbb{R}^N$ with the projective space $\mathbb{RP}^N$.
Note that the map $f$ depends on three variables in $\mathbb{Z}$.
In the following,
we consider its reduction to two discrete variables.
\subsection{Reduced $Q$-net}
\begin{define}
	Let $a$, $b$, $c\in\mathbb{Z}^2$ be distinct and assume $1\leq a_2\leq b_2\leq c_2$, $a\neq b\neq c\neq a$.
	Say that $Q=\{a,b,c\}$ is a $Q$-pin if the vectors $\overrightarrow{Oa}$, $\overrightarrow{Ob}$, $\overrightarrow{Oc}$ are not colinear in $\mathbb{Z}^2$,
	where $O$ is the origin point $(0,0)$.
	When $Q$ is a $Q$-pin,
	we will always assume its elements are called $a$, $b$, $c$ and $1\leq a_2\leq b_2\leq c_2$.
\end{define}
\begin{remark}
	If $a_2=b_2$ or $b_2=c_2$,
	then we allow any choice of $a$, $b$, $c$ satisfying $1\leq a_2\leq b_2\leq c_2$.
	The corresponding dynamical system we define will not depend on the choice.
\end{remark}

\begin{define}\label{def_Qpin}
	Let $Q=\{a,b,c\}$ be a $Q$-pin and suppose $N\geq 3$.
	A reduced $Q$-net of type $Q$ is a grid of points $v_{i,j}$ and planes $P_{i,j}^k$ ($k=1,2,3$) in $\mathbb{RP}^N$ and such that
	
	(i) $v_r$, $v_{r+a}$, $v_{r+b}$, $v_{r+a+b}$ all lie on $P_r^1$ and are distinct for all $r\in\mathbb{Z}^2$;
	
	(ii) $v_r$, $v_{r+b}$, $v_{r+c}$, $v_{r+b+c}$ all lie on $P_r^2$ and are distinct for all $r\in\mathbb{Z}^2$;
	
	(iii) $v_r$, $v_{r+c}$, $v_{r+a}$, $v_{r+c+a}$ all lie on $P_r^3$ and are distinct for all $r\in\mathbb{Z}^2$;
	
	(iv) Any three points in $v_r$, $v_{r+i}$, $v_{r+j}$, $v_{r+i+j}$ are not colinear, where $i$, $j\in\{a,b,c\}$, $i\neq j$ and $r\in\mathbb{Z}^2$;
	
	(v) $P_r^1$, $P_r^2$, $P_r^3$ are distinct for all $r\in\mathbb{Z}^2$;
	
	(vi) $P_{r+c}^1$, $P_{r+a}^2$, $P_{r+b}^3$ are distinct for all $r\in\mathbb{Z}^2$.
\end{define}

\begin{remark}
	Two $Q$-pins $Q$ and $Q'$ are equivalent if $Q'=g(Q)$ for some map $g:\mathbb{Z}^2\rightarrow\mathbb{Z}^2$ of the form 
	\begin{align*}
		g(i,j)=(\pm i+f(j),j),
	\end{align*}
	with $f:\mathbb{Z}\rightarrow \mathbb{Z} $. 
	In this case,
	$g$ sends rows to rows and preserve the $j$ direction,
	so $Q$ and $Q'$ are essentially the same objects.
	When $(i,j)\rightarrow(i,-j)$,
	then $Q'$ is the inverse of $Q$.
\end{remark}

The reduced $Q$-net can be treated as a geometric dynamical system.
Let $\mathcal{U}_N$ be the space of infinite polygons $V$ with vertices $\{v_i\}_{i\in\mathbb{Z}}$ in $\mathbb{RP}^N$,
considered modulo projective equivalence.
Let $\mathcal{U}_{N,m}$ denote the space of $m$-tuples $\{V^{(1)},\ldots, V^{(m)}\}$.
Let $V^{(j)}$ denote the $j$th row of a grid $\{v_{i,j}\}_{i,j\in\mathbb{Z}}$,
that is $V_i^{(j)}=v_{i,j}$.
We begin with the case of $\mathbb{RP}^3$,
then generalize it to higher dimensions.

\subsection{The $Q$-maps in $\mathbb{RP}^3$}

\begin{proposition}
	Let $Q=\{a,b,c\}$ be a $Q$-pin and let $P$ be a reduced $Q$-net of type $Q$.
	Then
	\begin{align}\label{def_1}
		v_{r+a+b+c}
		=\left<v_{r+a},v_{r+a+b},v_{r+a+c}\right>
		\cap
		\left<v_{r+b},v_{r+a+b},v_{r+b+c}\right>
		\cap
		\left<v_{r+c},v_{r+a+c},v_{r+b+c}\right>
	\end{align}
	and
	\begin{align}\label{def_2}
		v_{r}
		=\left<v_{r+a},v_{r+b},v_{r+a+b}\right>
		\cap
		\left<v_{r+b},v_{r+c},v_{r+b+c}\right>
		\cap
		\left<v_{r+a},v_{r+c},v_{r+a+c}\right>
	\end{align}
	for all $r\in\mathbb{Z}^2$.
\end{proposition}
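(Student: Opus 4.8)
The plan is to recognize each of the three spanning planes appearing in \eqref{def_1} and in \eqref{def_2} as one of the labelled planes $P^k_s$ of the reduced $Q$-net; once that is done, the point on the left-hand side automatically lies on all three of those planes by the incidence axioms (i)--(iii), and the only remaining work is to check that the three planes meet in a single point rather than sharing a common line. I would carry this out for a fixed but arbitrary $r\in\mathbb{Z}^2$.

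For \eqref{def_1} I would read off the planes by shifting the base point. Axiom (ii) applied at $r+a$ puts $v_{r+a},v_{r+a+b},v_{r+a+c},v_{r+a+b+c}$ on $P^2_{r+a}$, while axiom (iv) at $r+a$ (with $i=b$, $j=c$) says the first three are not colinear, so $\langle v_{r+a},v_{r+a+b},v_{r+a+c}\rangle = P^2_{r+a}$, a plane through $v_{r+a+b+c}$. In the same way axiom (iii) at $r+b$ gives $\langle v_{r+b},v_{r+a+b},v_{r+b+c}\rangle = P^3_{r+b}\ni v_{r+a+b+c}$, and axiom (i) at $r+c$ gives $\langle v_{r+c},v_{r+a+c},v_{r+b+c}\rangle = P^1_{r+c}\ni v_{r+a+b+c}$. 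Hence $v_{r+a+b+c}\in P^2_{r+a}\cap P^3_{r+b}\cap P^1_{r+c}$. Now $P^2_{r+a}$ and $P^3_{r+b}$ are distinct by axiom (vi) and both contain the two distinct points $v_{r+a+b}$ and $v_{r+a+b+c}$, so $P^2_{r+a}\cap P^3_{r+b}$ is the line $\overline{v_{r+a+b}\,v_{r+a+b+c}}$. If this line lay inside $P^1_{r+c}$, then $v_{r+a+b}\in P^1_{r+c}$; but $P^1_{r+c}$ also contains $v_{r+a+c}$ and $v_{r+a+b+c}$, and these three points are not colinear by axiom (iv) at $r+a$, so they span both $P^1_{r+c}$ and $P^2_{r+a}$, forcing $P^1_{r+c}=P^2_{r+a}$ and contradicting axiom (vi). Therefore the line meets $P^1_{r+c}$ in exactly one point, which must be $v_{r+a+b+c}$, and \eqref{def_1} follows.

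Formula \eqref{def_2} is the mirror image, with all three shifts taken at the single base point $r$: axioms (i), (ii), (iii) at $r$ identify $\langle v_{r+a},v_{r+b},v_{r+a+b}\rangle$, $\langle v_{r+b},v_{r+c},v_{r+b+c}\rangle$, $\langle v_{r+a},v_{r+c},v_{r+a+c}\rangle$ with $P^1_r$, $P^2_r$, $P^3_r$ respectively (non-colinearity again from axiom (iv) at $r$), and each of these planes contains $v_r$. They are pairwise distinct by axiom (v); $P^1_r\cap P^2_r$ is the line $\overline{v_r\,v_{r+b}}$, which cannot lie in $P^3_r$, since $v_{r+b}\in P^3_r$ would make $v_r,v_{r+a},v_{r+b}$ a non-colinear triple lying in both $P^1_r$ and $P^3_r$, forcing $P^1_r=P^3_r$. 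Hence $P^1_r\cap P^2_r\cap P^3_r=\{v_r\}$, which gives \eqref{def_2}.

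I expect the only delicate step to be the general-position argument, that is, excluding the possibility that the three relevant planes share a common line instead of a single point; the plan above dispatches it entirely through the distinctness axioms (v) and (vi) together with the non-colinearity axiom (iv), and nothing beyond bookkeeping of the shifts $r\mapsto r+a,\,r+b,\,r+c$ is needed. The argument is uniform in $r$, so it establishes both identities for every $r\in\mathbb{Z}^2$.
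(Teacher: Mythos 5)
Your proof is correct and follows essentially the same route as the paper's: identify each spanning triple with one of the labelled planes $P^k_s$ of the net (at base points $r+a$, $r+b$, $r+c$ for \eqref{def_1} and at $r$ for \eqref{def_2}), observe that the target vertex lies on all three, and invoke the distinctness axioms (v)/(vi). You are in fact somewhat more careful than the paper, which stops at "the three planes are distinct and contain the point," whereas you additionally use axiom (iv) to rule out the three planes sharing a common line, so that the intersection is genuinely a single point.
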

\begin{proof}
	By definition of a reduced $Q$-net,
	the points $v_{r},$ $v_{r+a}$, $v_{r+b}$, $v_{r+a+b}$ are distinct and both lie on the plane $P_r^1$.
	So we have $v_r\in P_r^1=\left<v_{r+a},v_{r+b},v_{r+a+b}\right>$.
	Also,
	we have $v_r\in P_r^2=\left<v_{r+b},v_{r+c},v_{r+b+c}\right>$
	and 
	$v_r\in P_r^3=\left<v_{r+a},v_{r+c},v_{r+a+c}\right>$.
	Finally,
	$P_r^1$, $P_r^2$ and $P_r^3$ are distinct and both contain $v_r$,
	proving \eqref{def_2}.
	The proof of \eqref{def_1} is similar.
\end{proof}

In \eqref{def_1},
$v_{r+a+b+c}$ has the highest $j$-value and $v_{r+a}$ has the smallest $j$-value.
In other words,
all these points in \eqref{def_1} lie within rows from $j=r_2+a_2$ to $j=r_2+a_2+b_2+c_2$.
As such,
\eqref{def_1} can be seen as a recurrence that inputs $b_2+c_2$ rows of points and determine the next row.
Similarly, in \eqref{def_2},
$v_{r+b+c}$ has the highest $j$-value and $v_r$ has the smallest $j$-value.
These points lie within rows from $j=r_2$ to $j=r_2+b_2+c_2$.
So \eqref{def_2}  can be used to determine the previous row using $b_2+c_2$ rows of points.

Let $m=b_2+c_2$,
and suppose $V^j=(V^{(j+1)},\ldots,V^{(j+m)})=(v_{\cdot,j+1},\ldots,v_{\cdot,j+m})\in\mathcal{U}_{3,m}$ with $j\in\mathbb{Z}$.
We impose the relations on $V^j$ of two types, 
defined as follows.

\vspace{10pt}
(P1) Points $v_{r}$, $v_{r+a}$, $v_{r+b}$, $v_{r+a+b}$ are coplanar for all $r$ with $j+1\leq r_2\leq j+c_2-a_2$;

(P2) Points $v_{r}$, $v_{r+a}$, $v_{r+c}$, $v_{r+a+c}$ are coplanar for all $r$ with $j+1\leq r_2\leq j+b_2-a_2$.

\vspace{10pt}
Note that a relation like $v_r$, $v_{r+b}$, $v_{r+c}$, $v_{r+b+c}$ coplanar will never fit within $m$ consecutive rows.
Define $X_{3,Q}\subset\mathcal{U}_{3,m}$ by all elements in $\mathcal{U}_{3,m}$ which satisfy $(P1)$ and $(P2)$ relations.
Define $V^{(j)}\in\mathcal{U}_{3}$ using \eqref{def_2} and define $V^{(j+m+1)}\in\mathcal{U}_{3}$ using \eqref{def_1}.
Then we can write
\begin{align*}
	F(V^j)=(V^{(j+2)},\ldots,V^{(j+m+1)}),\quad
	G(V^j)=(V^{(j)},\ldots,V^{(j+m-1)}).
\end{align*}

\begin{proposition}\label{pro_3}
	Generically,
	$F(V^j)$, $G(V^j)\in X_{3,Q}$ and they are inverse to each other.
\end{proposition}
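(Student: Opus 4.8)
The plan is to analyze $F$ and $G$ separately: first verify that each sends a generic element of $X_{3,Q}$ back into $X_{3,Q}$, then show that $G\circ F$ and $F\circ G$ are the identity. Throughout, ``generically'' will mean: on the dense open subset of $X_{3,Q}$ on which every triple of points occurring in \eqref{def_1} and \eqref{def_2} spans a plane, each such triple of planes meets in exactly one point, and clauses (iv)--(vi) of Definition \ref{def_Qpin} persist; on this subset $F$ and $G$ are single-valued. I will want this subset to be invariant under both $F$ and $G$, which I expect to be the only delicate point.

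\emph{Step 1: $F$ and $G$ preserve $X_{3,Q}$.} The window $F(V^j)$ consists of the rows $j+2,\dots,j+m+1$. Comparing the index ranges in (P1) and (P2), every $(P1)$- and $(P2)$-relation required of $F(V^j)$ is already a relation of $V^j$ not involving the discarded row $j+1$, except for two: the $(P1)$-relation at $r_2=j+1+c_2-a_2$ and the $(P2)$-relation at $r_2=j+1+b_2-a_2$. Write the vertex adjoined by \eqref{def_1} as $v_{s+a+b+c}$ with $s_2=j+1-a_2$. By the very definition of \eqref{def_1}, $v_{s+a+b+c}$ lies on $\langle v_{s+c},v_{s+a+c},v_{s+b+c}\rangle$ and on $\langle v_{s+b},v_{s+a+b},v_{s+b+c}\rangle$; re-indexing the first by $w=s+c$ turns it into the coplanarity of $v_w,v_{w+a},v_{w+b},v_{w+a+b}$ with $w_2=j+1+c_2-a_2$, and the second by $u=s+b$ into the coplanarity of $v_u,v_{u+a},v_{u+c},v_{u+a+c}$ with $u_2=j+1+b_2-a_2$ --- exactly the two missing relations. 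Hence $F(V^j)\in X_{3,Q}$. The claim $G(V^j)\in X_{3,Q}$ is the mirror image: the vertex $v_s$ (with $s_2=j$) adjoined by \eqref{def_2} lies on $\langle v_{s+a},v_{s+b},v_{s+a+b}\rangle$ and on $\langle v_{s+a},v_{s+c},v_{s+a+c}\rangle$, and these are precisely the only two new $(P1)$- and $(P2)$-relations for $G(V^j)$ (both at $r_2=j$).

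\emph{Step 2: $G\circ F=\mathrm{id}$ and $F\circ G=\mathrm{id}$.} For $G\circ F$ I would check that the row $j+1$ reconstructed by applying \eqref{def_2} at $s_2=j+1$ to the rows $j+2,\dots,j+m+1$ of $F(V^j)$ equals the original row $j+1$. Since generically the three planes of \eqref{def_2} meet in one point, it suffices to see that the original $v_s$ (with $s_2=j+1$) lies on $\langle v_{s+a},v_{s+b},v_{s+a+b}\rangle$, $\langle v_{s+b},v_{s+c},v_{s+b+c}\rangle$ and $\langle v_{s+a},v_{s+c},v_{s+a+c}\rangle$. The first and the third are the $(P1)$- and $(P2)$-relations of $V^j$ at $r_2=j+1$, available whenever $c_2>a_2$, resp.\ $b_2>a_2$; in the equality cases they instead come from the top-vertex coplanarities built into \eqref{def_1} for the row adjoined by $F$. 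The middle one is the $bc$-face $v_s,v_{s+b},v_{s+c},v_{s+b+c}$, which is never a $(P1)$/$(P2)$-relation but is exactly the coplanarity forced when row $j+m+1$ was created: in \eqref{def_1} applied at index $s-a$ the adjoined vertex $v_{(s-a)+a+b+c}=v_{s+b+c}$ lies on $\langle v_{(s-a)+a},v_{(s-a)+a+b},v_{(s-a)+a+c}\rangle=\langle v_s,v_{s+b},v_{s+c}\rangle$. Thus the reconstructed point is $v_s$. The identity $F\circ G=\mathrm{id}$ is entirely symmetric: the row $j+m$ recovered by \eqref{def_1} from the rows of $G(V^j)$ lies on all three planes of \eqref{def_1}; two of those coplanarities are the $(P1)$/$(P2)$-relations of $V^j$ (or, in the equality cases, coplanarities built into \eqref{def_2} for the row adjoined by $G$), and the third is the $bc$-face built into \eqref{def_2}.

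\emph{Main obstacle.} Conceptually every coplanarity used above is an instance of the $3$D-consistency (``fundamental cube'') property of $Q$-nets applied to the combinatorial cube on $v_r,v_{r+a},v_{r+b},v_{r+c},v_{r+a+b},v_{r+b+c},v_{r+c+a},v_{r+a+b+c}$; once this is recorded, the verifications reduce to the index chases above. The step I expect to require the most care is the genericity bookkeeping: showing that the several open conditions invoked (spanning triples, transverse triple intersections, persistence of (iv)--(vi)) hold on one dense open subset of $X_{3,Q}$ that is simultaneously $F$- and $G$-invariant, so that $G\circ F$ and $F\circ G$ are defined there and equal to the identity. I would also handle the degenerate $Q$-pins ($a_2=b_2$ or $b_2=c_2$) separately, as flagged above, using the substitute coplanarities coming from the adjoined rows.
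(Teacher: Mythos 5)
Your proposal is correct and follows essentially the same route as the paper's proof: the only new $(P1)$/$(P2)$ relations for $F(V^j)$ and $G(V^j)$ are read off from the coplanarities built into \eqref{def_1} and \eqref{def_2}, and the inverse property is verified by checking that the original vertex lies on all three planes of \eqref{def_2} (resp.\ \eqref{def_1}), using the $-a$-shifted first term of \eqref{def_1} for the $bc$-face and the $(P1)$/$(P2)$ relations --- or their substitutes from the adjoined row in the equality cases $a_2=b_2$ or $a_2=b_2=c_2$ --- for the other two. Your explicit bookkeeping of which plane yields which relation, and of the genericity/invariance of the open set on which $F$ and $G$ are single-valued, is if anything slightly more careful than the paper's.
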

\begin{proof}
	First, we show $F(V^j)\in X_{3,Q}$.
	Since $V^j=(V^{(j+1)},\ldots,V^{(j+m)})\in X_{3,Q}$,
	it suffice to verify $(P1)$ and $(P2)$ conditions involving $V^{(j+m+1)}$.
	For $(P1)$,
	we should prove that $v_{r}$, $v_{r+a}$, $v_{r+b}$, $v_{r+a+b}$ are coplanar for $r_2=j+c_2-a_2+1$.
	The formula \eqref{def_1} with $r_2+a_2+b_2+c_2=j+m+1$, that is $r_2=j-a_2+1$, 
	defines $V^{(j+m+1)}$, confirming $(P1)$.
	For $(P2)$,
	we should prove that the points $v_{r}$, $v_{r+a}$, $v_{r+c}$, $v_{r+a+c}$ are coplanar for $r_2=j+b_2-a_2+1$.
	It can also be seen from \eqref{def_1} by taking $r_2=j-a_2+1$.
	A similar argument shows $G(V^j)\in X_{3,Q}$.
	
	Let $V^j=(V^{(j+1)},\ldots,V^{(j+m)})\in X_{3,Q}$ and use $F$ to build $V^{(j+m+1)}$.
	Showing $$G(F(V^j))=V^j$$ amounts to verifying that \eqref{def_2} holds for $r$ with $r_2=j+1$.
	According to the first term of the right hand side of \eqref{def_1},
	$v_{r+a}$, $v_{r+a+b}$, $v_{r+a+c}$, and $v_{r+a+b+c}$ are coplanar.
	This relation shifted by $-a$ shows that $v_r\in\left<v_{r+b},v_{r+c},v_{r+b+c} \right>$,
	where we take $r_2=j+1$.
	Similarly,
	if $c_2=b_2=a_2$,
	according to the second and third terms on the right hand side of \eqref{def_1},
	we have
	\begin{align}\label{eq_ks}
	v_r\in\left<v_{r+a},v_{r+c},v_{r+a+c} \right>,\quad
	v_r\in\left<v_{r+a},v_{r+b},v_{r+a+b} \right>\quad\text{with} \quad r_2=j+1.
	\end{align}
	If $a_2<b_2$ or $a_2<c_2$,
	the conditions \eqref{eq_ks} can be obtained directly from  $(P1)$ or $(P2)$.
	A similar argument shows $F(G(V^j))=V^j$.
\end{proof}
\begin{remark}
According to the proposition above,
a reduced $Q$-net can be determined by the initial points of $b_2+c_2$ consecutive rows satisfying (P1) and (P2) through the iterations of \eqref{def_1} and \eqref{def_2}.
\end{remark}
\begin{remark}
	For a $Q$-pin with $a_2=b_2=c_2$,
	the conditions $(P1)$ and $(P2)$ don't give any constraints,
	so we have $X_{3,Q}=\mathcal{U}_{3,m}$.
\end{remark}
\begin{example}
	Let $Q=\{(a_1,1),(b_1,1),(c_1,1)\}$ with $a_1,b_1,c_1\in\mathbb{Z}$, 
	which is a $Q$-pin.
	It is obvious that $m=1+1=2$,
	so we can take $(V^{(1)},V^{(2)})\in\mathcal{U}_{3,2}$ as initial polygons.
	Since $a_2=b_2=c_2=1$,
	we have $X_{3,Q}=\mathcal{U}_{3,2}$.
	The corresponding map is defined as
	\begin{align}
		\begin{split}
			v_{i+a_1+b_1+c_1,3}
			&=\left(v_{i+a_1,1},\,v_{i+a_1+b_1,2},\,v_{i+a_1+c_1,2}\right)\\
			&\cap
			\left(v_{i+b_1,1},\,v_{i+a_1+b_1,2},\,v_{i+b_1+c_1,2}\right)\\
			&\cap
			\left(v_{i+c_1,1},\,v_{i+a_1+c_1,2},\,v_{i+b_1+c_1,2}\right),
		\end{split}
	\end{align}
	while the inverse map is
	\begin{align}
		\begin{split}
			v_{i,0}
			=\left(v_{i+a_1,1},\,v_{i+b_1,1},\,v_{i+a_1+b_1,2}\right)&\cap
			\left(v_{i+b_1,1},\,v_{i+c_1,1},\,v_{i+b_1+c_1,2}\right)\\
			&\cap\left(v_{i+c_1,1},\,v_{i+a_1,1},\,v_{i+a_1+c_1,2}\right),
		\end{split}
	\end{align}
	for all $i\in\mathbb{Z}$.
	
	In particular,
	taking $a_1=0$, $b_1=1$ and $c_1=2$,
	the map is depicted in figure \ref{f11}.
	\begin{figure}[htbp]
		\centering
		\includegraphics[scale=0.7]{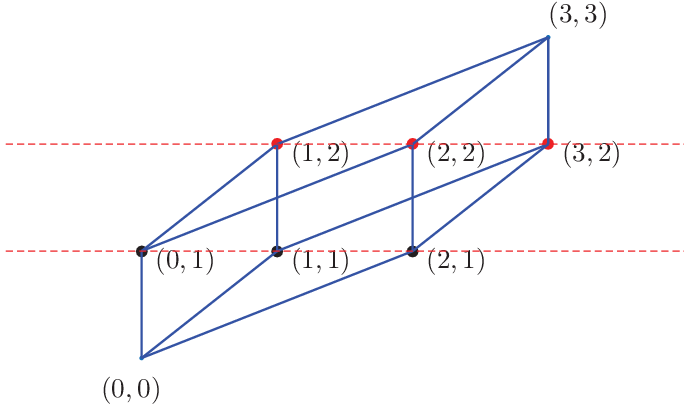}
		\caption{The $Q$-map with $Q=\{(0,1),(1,1),(2,1)\}$}
		\label{f11}
	\end{figure}
	Starting from the initial points of any two consecutive rows,
	we can obtain the whole grid of points by iterations of the maps.
	
\end{example}

\begin{example}\label{ex_1}
	Let $Q=\{(a_1,1),(b_1,2),(c_1,2)\}$ with $a_1,b_1,c_1\in\mathbb{Z}$, which is a $Q$-pin.
	First, we have $m=b_2+c_2=4$ and we can take $(V^{(1)},V^{(2)},V^{(3)},V^{(4)})\in\mathcal{U}_{3,4}$ as initial polygons.
	The condition $(P1)$ gives that
	$v_{i,1}$, $v_{i+a_1,2}$, $v_{i+b_1,3}$, $v_{i+a_1+b_1,4}$ are coplanar for all $i$.
	The condition $(P2)$ gives that
	$v_{i,1}$, $v_{i+a_1,2}$, $v_{i+c_1,3}$, $v_{i+a_1+c_1,4}$ are coplanar for all $i$.
	
	The map is defined as
	\begin{align*}
		v_{i+a_1+b_1+c_1,5}=
		\left<v_{i+a_1,1},v_{i+a_1+b_1,3},v_{i+a_1+c_1,3}\right>
		&\cap
		\left<v_{i+b_1,2},v_{i+a_1+b_1,3},v_{i+b_1+c_1,4}\right>\\
		&\cap
		\left<v_{i+c_1,2},v_{i+a_1+c_1,3},v_{i+b_1+c_1,4}\right>,
	\end{align*}	
	for all $i\in\mathbb{Z}$.
	In particular,
	taking $a_1=-1$, $a_2=1$, $a_3=3$,
	the map is shown in figure \ref{f1}.
	\begin{figure}[h]
		\centering
		\includegraphics[scale=0.7]{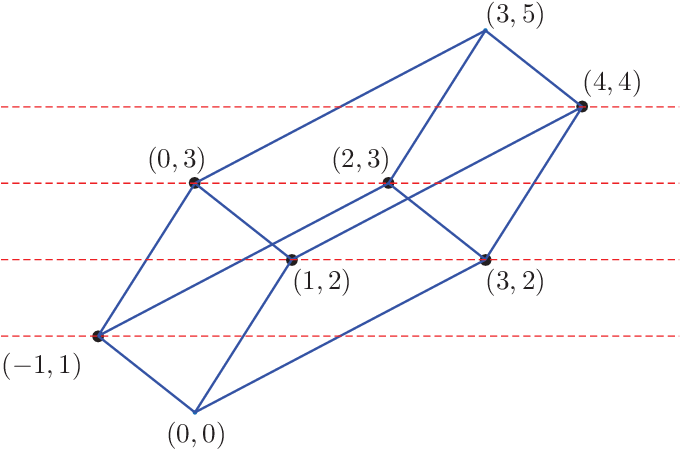}
		\caption{The $Q$-map with $Q=\{(-1,1),(1,2),(3,2)\}$}
		\label{f1}
	\end{figure}

\end{example}

\subsection{Higher-dimensional $Q$-maps}

Now we consider the case of dimension $N\geq 4$.
Suppose $V^j=(V^{(j+1)},\ldots,V^{(j+m)})=(v_{\cdot,j+1},\ldots,v_{\cdot,j+m})\in\mathcal{U}_{N,m}$ with $j\in\mathbb{Z}$ and $m=b_2+c_2$.
We need the following conditions in addition to $(P1)$ and $(P2)$ from before:

\vspace{10pt}
(C3) Points $v_{r+a}$, $v_{r+b}$, $v_{r+c}$, $v_{r+a+b}$, $v_{r+a+c}$ belong to a subspace $\mathbb{RP}^3$ for all $r$ with $j-a_2+1\leq r_2\leq j+b_2-a_2$;

(C4) Points $v_{r+b}$, $v_{r+c}$, $v_{r+a+b}$, $v_{r+a+c}$, $v_{r+b+c}$ belong to a subspace $\mathbb{RP}^3$ for all $r$ with $j-b_2+1\leq r_2\leq j$;

\vspace{10pt}
Define $X_{N,Q}\subset\mathcal{U}_{N,m}$ by $V^j\in X_{N,Q}$ iff $v_{i,j}=V_i^{(j)}$ satisfy $(P1)$, $(P2)$, $(C3)$, $(C4)$.
The maps $F: X_{N,Q}\rightarrow X_{N,Q}$ and $G: X_{N,Q}\rightarrow X_{N,Q}$ are defined using \eqref{def_1} and \eqref{def_2}.

\begin{remark}
According to \eqref{def_1},
$v_{r+a+b+c}$ is the intersection of three planes in $\mathbb{RP}^N$ $$\left<v_{r+a},v_{r+a+b},v_{r+a+c}\right>,\quad
\left<v_{r+b},v_{r+a+b},v_{r+b+c}\right>
,\quad
\left<v_{r+c},v_{r+a+c},v_{r+b+c}\right>.$$
When dimension $N\geq 4$,
these three planes intersect only when they are in one and the same $3$-dimensional subspace $\mathbb{RP}^3$.
Equivalently,
the six points 
$$v_{r+a},\quad
v_{r+b},\quad
v_{r+c},\quad
v_{r+a+b},\quad
v_{r+a+c},\quad
v_{r+b+c}
$$ appeared in \eqref{def_1} and \eqref{def_2} belong to a subspace $\mathbb{RP}^3$.
That is the reason why we need the conditions $(C3)$ and $(C4)$.

\end{remark}

\begin{proposition}
	The maps $F$ and $G$ map $X_{N,Q}$ to $X_{N,Q}$ and they are inverse to each other.
	
\end{proposition}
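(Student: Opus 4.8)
The plan is to reduce the higher-dimensional statement to the three-dimensional one already established in Proposition \ref{pro_3}. The essential point is that conditions $(C3)$ and $(C4)$ confine, for each relevant value of $r_2$, the six points $v_{r+a}, v_{r+b}, v_{r+c}, v_{r+a+b}, v_{r+a+c}, v_{r+b+c}$ to a common projective subspace $\mathbb{RP}^3 \subset \mathbb{RP}^N$; inside that subspace the defining intersections \eqref{def_1} and \eqref{def_2} make sense exactly as in the $N=3$ case, so the formulas for $F$ and $G$ are well-posed. First I would check that the new points produced by $F$, namely the row $V^{(j+m+1)}$ via \eqref{def_1} with $r_2 = j - a_2 + 1$, together with the incoming rows still satisfy $(C3)$ and $(C4)$ in their shifted ranges — the newly created point $v_{r+a+b+c}$ lies, by construction, in the same $\mathbb{RP}^3$ that houses the six points feeding \eqref{def_1}, which is precisely what the shifted instances of $(C3)$ and $(C4)$ require. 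The $(P1)$ and $(P2)$ verifications are identical to those in the proof of Proposition \ref{pro_3}, since the coplanarity relations on the right-hand side of \eqref{def_1} do not depend on the ambient dimension.

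Next I would establish $G \circ F = \mathrm{id}$ and $F \circ G = \mathrm{id}$. Here the argument from Proposition \ref{pro_3} transfers verbatim once we know all the points involved are confined to the appropriate $\mathbb{RP}^3$: the first term on the right of \eqref{def_1} says $v_{r+a}, v_{r+a+b}, v_{r+a+c}, v_{r+a+b+c}$ are coplanar, and shifting by $-a$ yields $v_r \in \langle v_{r+b}, v_{r+c}, v_{r+b+c}\rangle$; the second and third terms (when $a_2 = b_2 = c_2$) or $(P1)$ and $(P2)$ directly (when $a_2 < b_2$ or $a_2 < c_2$) give the remaining two plane memberships of \eqref{def_2}. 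The one genuinely new bookkeeping item is that these three planes, as planes in $\mathbb{RP}^N$, actually meet in a single point rather than being empty: this follows because condition $(C4)$ (in its range including $r_2 = j+1$) places the six points of \eqref{def_2} in a common $\mathbb{RP}^3$, so the intersection computed there agrees with the one computed in the ambient space. The symmetric statement $F \circ G = \mathrm{id}$ is handled the same way using $(C3)$ in the appropriate range.

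The main obstacle — really the only subtlety beyond the $N=3$ case — is the genericity and consistency of the subspace conditions under iteration: one must confirm that the ranges of $r_2$ appearing in $(C3)$ and $(C4)$ are exactly matched by the ranges of coplanarity that $F$ and $G$ are forced to create, so that no instance of $(C3)$ or $(C4)$ is left unverified and none is over-determined. I would organize this as a small index-chasing lemma: with $m = b_2 + c_2$, the operator $F$ drops the bottom row $V^{(j+1)}$ and appends $V^{(j+m+1)}$ defined by \eqref{def_1} at $r_2 = j - a_2 + 1$, and one checks that the instances of $(C3)$ with $r_2 \in \{j - a_2 + 2, \ldots, j + b_2 - a_2 + 1\}$ and of $(C4)$ with $r_2 \in \{j - b_2 + 2, \ldots, j+1\}$ that the image must satisfy are precisely the ones guaranteed by the construction of the new point together with the already-assumed conditions for $V^j$. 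As in the earlier proposition, the word ``generically'' covers the open dense locus where the three planes in \eqref{def_1} and \eqref{def_2} are in general position within their $\mathbb{RP}^3$ and hence meet transversally in a point; on this locus everything above goes through and $F$, $G$ are mutually inverse bijections of $X_{N,Q}$.
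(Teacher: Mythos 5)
Your proposal follows essentially the same route as the paper: everything except the subspace conditions is reduced to the $\mathbb{RP}^3$ case of Proposition \ref{pro_3}, and the new instances of $(C3)$ and $(C4)$ are verified by observing that the three planes in \eqref{def_1} (resp. \eqref{def_2}) meet at the newly constructed point and therefore span a single $\mathbb{RP}^3$ containing all the points involved. Your extra remarks on the well-posedness of the intersections for $N\geq 4$ and on matching the index ranges of $(C3)$, $(C4)$ under the shift are consistent with, and if anything more explicit than, the paper's own terse verification.
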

\begin{proof}
	Most of the proof is similar to that of Proposition \ref{pro_3}.
	What remains to verify is that the maps preserve $(C3)$ and $(C4)$.
	First,
	we prove $F(V^j)\in X_{N,Q}$.
	Let $V^j=(V^{(j+1)},\cdots,V^{(j+m)})\in X_{N,Q}$ and $F(V^j)=(V^{(j+2)},\cdots,V^{(j+m+1)})$.
	For $(C3)$,
	we should check it for $r_2+a_2+c_2=j+m+1$,
	that is $r_2=j+b_2-a_2+1$.
	According to the definition of $F$
	\begin{align*}
		v_{r+a+b+c}
		=\left<v_{r+a},v_{r+a+b},v_{r+a+c}\right>
		\cap
		\left<v_{r+b},v_{r+a+b},v_{r+b+c}\right>
		\cap
		\left<v_{r+c},v_{r+a+c},v_{r+b+c}\right>,
	\end{align*}
	the points $v_{r+a}$, $v_{r+a+b}$, $v_{r+a+c}$, $v_{r+a+b+c}$ are coplanar;
	$v_{r+b}$, $v_{r+a+b}$, $v_{r+b+c}$, $v_{r+a+b+c}$ are coplanar;
	$v_{r+c}$, $v_{r+a+c}$, $v_{r+b+c}$, $v_{r+a+b+c}$ are coplanar.
	The three planes intersect (at $v_{r+a+b+c}$),
	so the points $v_{r+a}$, $v_{r+b}$, $v_{r+c}$, $v_{r+a+b}$, $v_{r+a+c}$ belong to a subspace $\mathbb{RP}^3$,
	where $r_2=j+b_2-a_2+1$.
	A similar argument shows $(C4)$ with $r_2=j+1$.
	
	The conclusion $G(V^j)\in X_{N,Q}$ can be proved using the method above.
	Finally, we can prove that $F$ and $G$ are inverse to each other similar to proposition \ref{pro_3}.
\end{proof}
\begin{remark}
According to the proposition above,
$(P1)$, $(P2)$, $(C3)$, $(C4)$ are also sufficient for the existence of reduced $Q$-nets,
since we could apply $F$ and $G$ to an initial element $V^j$ in $X_{N,Q}$ iteratively to build the full $v_{i,j}$ array.
\end{remark}

\begin{example}
	Let's consider the same $Q$-pin $Q=\{(a_1,1),(b_1,2),(c_1,2)\}$ as in Example \ref{ex_1},
	but here we take dimension $N=4$.
	First, we have $m=b_2+c_2=4$ and we can take $(V^{(1)},V^{(2)},V^{(3)},V^{(4)})\in\mathcal{U}_{4,4}$ as initial polygons.
	The conditions $(P1)$ and $(P2)$ are the same as Example \ref{ex_1}.
	
	In addition,
	the condition $(C_3)$ gives that
	$$v_{i+a_1,1},\quad v_{i+b_1,2},\quad v_{i+c_1,2},\quad v_{i+a_1+b_1,3},\quad v_{i+a_1+c_1,3}$$
	belong to a $3$-dimensional subspace for all $i\in\mathbb{Z}$.
	The condition $(C_4)$ gives that
	$$v_{i+b_1,2},\quad v_{i+c_1,2},\quad v_{i+a_1+b_1,3},\quad v_{i+a_1+c_1,3},\quad 
	v_{i+b_1+c_1,4}$$ 
	belong to a $3$-dimensional subspace for all $i\in\mathbb{Z}$.

\end{example}

\section{Explicit formulas for one specific example}\label{sec_expl}

In this section,
we focus on one specific map \eqref{map1}.
Let $$Q=\{(-1,1),(0,1),(1,1)\}$$ be a $Q$-pin.
Suppose that $(v_i^{1},v_i^{2})$, $i\in\mathbb{Z}$ are two twisted $N$-gons in $\mathbb{RP}^3$ with the same monodromy $M$.
Let $\mathscr{P}_{3,N}$ be the space of all pairs of twisted $N$-gons $(v_i^1,v_i^2)$ in $\mathbb{RP}^3$ with the same monodromy,
where $(v_i^1,v_i^2)$ are in general position, that is,  the four points
$$v_{i-1}^1,\quad v^2_{i-1},\quad v_i^1,\quad v_i^2$$ 
don't belong to one and the same two-dimensional subspace in $\mathbb{RP}^3$.
The corresponding $Q$-map $T_Q:\mathscr{P}_{3,N}\rightarrow\mathscr{P}_{3,N}$ is  defined as (see Figure \ref{f0})
\begin{align*}
T_Q\left(v_i^{1},v_i^{2}\right)=\left(v_i^{2},v_i^{3}\right),
\end{align*}
where
\begin{align}\label{def_example}
\begin{split}
	v_i^{3}:=\left(v_{i-1}^{1},\,v_{i-1}^{2},\,v_{i}^{2}\right)
	\cap
	\left(v_{i}^{1},\,v_{i-1}^{2},\,v_{i+1}^{2}\right)
	\cap
	\left(v_{i+1}^{1},\,v_{i}^{2},\,v_{i+1}^{2}\right).
\end{split}
\end{align}

\subsection{Polygons and difference equations}

Consider the following difference equations
\begin{align}\label{recur_eq0}
	\begin{split}
		V_{i+1}^{1}&=a_iV_{i-1}^{1}+b_iV_{i-1}^{2}+V_{i}^{1}+c_iV_i^{2},\\
		V_{i+1}^{2}&=f_iV_{i-1}^{1}+g_iV_{i-1}^{2}+V_{i}^{1}+h_iV_i^{2},
	\end{split}
\end{align}
where $a_i$, $b_i$, $c_i$, $f_i$, $g_i$, $h_i$ are arbitrary $N$-periodic sequences on the index $i$, such that $a_ig_i-b_if_i\neq 0$, $\forall i\in\mathbb{Z}$.

The space of solutions of \eqref{recur_eq0} is $4$-dimensional,
since any solution is determined by the initial conditions $(V_{-1}^1 ,V_{-1}^2,V_0^1,V_0^2)$.
Thus,
we often understand $V_i^1$, $V_i^2$ as vectors in $\mathbb{R}^4$.
The $N$-periodicity then implies that there exists a matrix $M\in\mathrm{SL}(4,\mathbb{R})$ called the monodromy matrix, such that $V_{i+N}^j=MV_i^j$, $j=1,2$. 

Let $(V_{-1}^1 ,V_{-1}^2,V_0^1,V_0^2)$ be  the initial value,
such that $\det(V_{-1}^1 ,V_{-1}^2,V_0^1,V_0^2)\neq 0$,
that is the four vectors are linearly independent.
According to \eqref{recur_eq0},
we have 
$$\det(V_i^1,V_i^2,V_{i+1}^1,V_{i+1}^2)=
\left(a_ig_i-b_if_i\right)
\det(V_{i-1}^1 ,V_{i-1}^2,V_i^1,V_i^2).$$
Since $a_ig_i-b_if_i\neq 0$ for all $i\in\mathbb{Z}$,
it is obvious that $(V_{i-1}^1 ,V_{i-1}^2,V_i^1,V_i^2)$  
are also linearly independent for all $i\in\mathbb{Z}$.

\begin{proposition}
	The space $\mathscr{P}_{3,N}$ is isomorphic to the space of the equation \eqref{recur_eq0}.
	
\end{proposition}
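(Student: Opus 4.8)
The plan is to exhibit explicitly the two maps realizing the isomorphism and then check they are mutually inverse and respect the equivalence relations on both sides. Going from an equation \eqref{recur_eq0} to a pair of polygons: given coefficients $(a_i,b_i,c_i,f_i,g_i,h_i)$ with $a_ig_i-b_if_i\neq 0$, pick a linearly independent initial frame $(V_{-1}^1,V_{-1}^2,V_0^1,V_0^2)$, generate the full sequence of lifts $V_i^1,V_i^2\in\mathbb{R}^4$ by the recurrence, and project to $v_i^1=[V_i^1]$, $v_i^2=[V_i^2]$ in $\mathbb{RP}^3$. The displayed determinant identity $\det(V_i^1,V_i^2,V_{i+1}^1,V_{i+1}^2)=(a_ig_i-b_if_i)\det(V_{i-1}^1,V_{i-1}^2,V_i^1,V_i^2)$, already established just before the statement, guarantees that every consecutive quadruple of lifts stays linearly independent, hence the four points $v_{i-1}^1,v_{i-1}^2,v_i^1,v_i^2$ never lie in a common projective plane, so the pair lands in $\mathscr{P}_{3,N}$. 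The $N$-periodicity of the coefficients forces $V_{i+N}^j$ to be obtained from $V_i^j$ by a fixed linear map $M\in\mathrm{GL}(4,\mathbb{R})$ (normalizable to $\mathrm{SL}$), so both polygons are twisted with the same monodromy.

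Conversely, given $(v^1,v^2)\in\mathscr{P}_{3,N}$, I would lift each $v_i^j$ to some $V_i^j\in\mathbb{R}^4$. Since the general-position hypothesis says $(V_{i-1}^1,V_{i-1}^2,V_i^1,V_i^2)$ is a basis of $\mathbb{R}^4$ for each $i$, the vectors $V_{i+1}^1$ and $V_{i+1}^2$ can be uniquely expanded in that basis, giving relations of the form $V_{i+1}^1=\alpha_iV_{i-1}^1+\beta_iV_{i-1}^2+\gamma_iV_i^1+\delta_iV_i^2$ and similarly for $V_{i+1}^2$. The content of the lemma is that the lifts can be rescaled so that the coefficients of $V_i^1$ in both equations become $1$, matching the normalized form \eqref{recur_eq0}. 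This is a gauge-fixing step: replacing $V_i^j\mapsto t_i^jV_i^j$ changes the coefficients in a controlled way, and one must solve the resulting recursion for the scaling factors $t_i^j$ so that the two ``$V_i^1$''-coefficients are both unity; the $N$-periodicity of everything, together with a one-dimensional freedom, should make this solvable (this is the standard lift-normalization argument familiar from the theory of the classical pentagram map). Once the normalized form is reached, $a_ig_i-b_if_i\neq 0$ is exactly the nondegeneracy of the change of basis, i.e. it follows from general position; and the residual scaling ambiguity together with the choice of lift is precisely the projective ambiguity on the polygon side, so the correspondence descends to the quotient.

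The two constructions are inverse to each other essentially by inspection: starting from an equation, building the polygons, then reading off the equation returns the same coefficients because the normalized expansion of $V_{i+1}^j$ in the basis $(V_{i-1}^1,V_{i-1}^2,V_i^1,V_i^2)$ is unique; starting from a pair of polygons, choosing lifts and normalizing, then regenerating the polygons from the recurrence recovers the original projective points. I would also note that changing the initial frame on the equation side amounts to acting by $\mathrm{GL}(4,\mathbb{R})$, i.e. a projective transformation of both polygons simultaneously, which is quotiented out in the definition of $\mathscr{P}_{3,N}$, so the map is well-defined at the level of the moduli spaces.

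The main obstacle I expect is the normalization step in the backward direction: showing that an arbitrary choice of lifts can always be globally (and $N$-periodically) rescaled so that both ``$V_i^1$'' coefficients equal $1$ simultaneously. One has two families of scalings $t_i^1,t_i^2$ but must kill two coefficient sequences at once, and the rescaling of $V_i^1$ affects the first equation through one factor and appears in the second equation through the expansion of $V_{i+1}^2$ as well; disentangling this requires writing down how each of the six coefficients transforms under $V_i^j\mapsto t_i^jV_i^j$ and then checking that the resulting system for the $t$'s is consistent with the periodicity constraint. I would handle it by first fixing the relative normalization between the two equations (making the $V_i^1$-coefficient in the second equation equal to that in the first), which is an algebraic condition at each $i$, and then performing a single global rescaling to set the common value to $1$, tracking how the monodromy $M$ absorbs the periodicity.
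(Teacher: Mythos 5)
Your proposal is correct and follows essentially the same route as the paper: take arbitrary lifts, expand $V_{i+1}^1,V_{i+1}^2$ in the moving basis $(V_{i-1}^1,V_{i-1}^2,V_i^1,V_i^2)$, and gauge-fix so that both $V_i^1$-coefficients equal $1$. The normalization you flag as the main obstacle is actually triangular rather than entangled: writing the rescaling as $V_i^j=\tfrac{1}{t_i^j}\tilde{V}_i^j$, the two conditions become $t_{i+1}^1=\tilde{d}_i t_i^1$ and $t_{i+1}^2=\tilde{w}_i t_i^1$ (neither involves $t_i^2$), so both sequences are determined recursively from a single choice of $t_0^1$, and the only remaining check --- which the paper carries out and you only gesture at --- is that $t_i^1,t_i^2$ are quasi-periodic with a common monodromy, so that the normalized coefficients are genuinely $N$-periodic and independent of $t_0^1$.
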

\begin{proof}
	
	Let $(v_i^{1},\,v_i^{2})$ be a pair of twisted $N$-gons in $\mathscr{P}_{3,N}$ with the monodromy $M$.
	Let $\tilde{V}_i^{1}$ and $\tilde{V}_i^{2}\in\mathbb{R}^4$ be two arbitrary lifts of $v_i^{1},\,v_i^{2}$ such that $\tilde{V}_{i+N}^{1}=M\tilde{V}_i^{1}$ and $\tilde{V}_{i+N}^{2}=M\tilde{V}_i^{2}$.
	The general position property implies  $\det(\tilde{V}_{i-1}^{1},\tilde{V}_{i-1}^{2},\tilde{V}_{i}^{1},\tilde{V}_{i}^{2})\neq 0$ for all $i\in\mathbb{Z}$.
	Taking the four vectors $(\tilde{V}_{i-1}^{1},\tilde{V}_{i-1}^{2},\tilde{V}_{i}^{1},\tilde{V}_{i}^{2})$ as a basis of $\mathbb{R}^4$,
	the vectors
	$\tilde{V}_{i+1}^{1}$, $\tilde{V}_{i+1}^{2}$ have the following expressions
	\begin{align*}
		\begin{split}
			\tilde{V}_{i+1}^{1}&=\tilde{a}_i\tilde{V}_{i-1}^{1}+\tilde{b}_i\tilde{V}_{i-1}^{2}+\tilde{d}_i\tilde{V}_{i}^{1}+\tilde{c}_i\tilde{V}_i^{2},\\
			\tilde{V}_{i+1}^{2}&=\tilde{f}_i\tilde{V}_{i-1}^{1}+\tilde{g}_i\tilde{V}_{i-1}^{2}+\tilde{w}_i\tilde{V}_{i}^{1}+\tilde{h}_i\tilde{V}_i^{2},
		\end{split}
	\end{align*}
	for some sequences $\tilde{a}_i$, $\tilde{b}_i$, $\tilde{c}_i$, $\tilde{d}_i$, $\tilde{f}_i$, $\tilde{g}_i$, $\tilde{h}_i$, $\tilde{w}_i$.
	Here it is not difficult to prove that these sequences are $N$-periodic according to the conditions $\tilde{V}_{i+N}^{1}=M\tilde{V}_i^{1}$ and $\tilde{V}_{i+N}^{2}=M\tilde{V}_i^{2}$, for all $i\in\mathbb{Z}$.
	Now we rescale: $V_i^j=\frac{1}{t_i^j}\tilde{V}_i^j$, $j=1,2$, by using two sequences $\{t_i^1\}_{i\in\mathbb{Z}}$, $\{t_i^2\}_{i\in\mathbb{Z}}$, which are determined by
	\begin{align}\label{eq_t}
		t_{i+1}^{1}=\tilde{d}_it_i^1,\quad
		t_{i+1}^2=\tilde{w}_it_i^1,\quad
		\forall i\in\mathbb{Z}.
	\end{align}
	It is not hard to prove that the equation for $V$ is just
	\eqref{recur_eq0},
	where
	\begin{align}\label{eq_ab}
		\begin{split}
			a_i=\frac{t_{i-1}^1}{t_{i+1}^1}\tilde{a}_i,\quad
			b_i=\frac{t_{i-1}^2}{t_{i+1}^1}\tilde{b}_i,\quad
			c_i=\frac{t_{i}^2}{t_{i+1}^1}\tilde{c}_i,\\
			f_i=\frac{t_{i-1}^1}{t_{i+1}^2}\tilde{f}_i,\quad
			g_i=\frac{t_{i-1}^2}{t_{i+1}^2}\tilde{g}_i,\quad
			h_i=\frac{t_{i}^2}{t_{i+1}^2}\tilde{h}_i.
		\end{split}
	\end{align}
	Suppose that $t_{0}^1=\alpha\neq 0$ is fixed and $\tilde{d}_i\tilde{w}_i\neq 0$, $i\in\mathbb{Z}$,
	then all $t_i^j$, $i\in\mathbb{Z}$, $j=1,2$ can be determined uniquely by \eqref{eq_t}.
	Though the initial value $t_0^1=\alpha$ is arbitrary,
	it doesn't affect the values of
	$a_i$, $b_i$, $c_i$, $f_i$, $g_i$, $h_i$ according to \eqref{eq_ab},
	so the values of
	$a_i$, $b_i$, $c_i$, $f_i$, $g_i$, $h_i$ are determined uniquely.
	
	Next we will prove the sequences $a_i$, $b_i$, $c_i$, $f_i$, $g_i$, $h_i$ are $N$-periodic.
	Before doing so,
	we need to prove $t_i^1$, $t_i^2$ are quasi-periodic sequences with the same monodromy.
	According to the first equation of \eqref{eq_t},
	we get
	\begin{align*}
		t_{i+N}^1=\tilde{d}_{i+N-1}\cdots\tilde{d}_{i}t_i^1.
	\end{align*}
	Since $\tilde{d}_i$ is $N$-periodic,
	so $t_i^1$ is quasi-periodic, that is $t_{i+N}^1=Dt_i^1$, for all $i\in\mathbb{Z}$, where $D_0=\tilde{d}_1\cdots \tilde{d}_N$ is independent of $i$.
	Besides, according to the second equation of \eqref{eq_t}, we have
	\begin{align*}
		t_{i+N}^2=\tilde{w}_{i+N-1}t_{i+N-1}^1=\tilde{w}_{i+N-1}\tilde{d}_{i+N-2}\cdots\tilde{d}_{i-1}t_{i-1}^1
		=\tilde{w}_{i+N-1}\tilde{d}_{i+N-2}\cdots\tilde{d}_{i-1}\frac{1}{\tilde{w}_{i-1}}t_i^2.
	\end{align*}
	Since $\tilde{d}_i$, $\tilde{w}_i$ are $N$-periodic,
	we prove $t_i^2$ is a quasi-periodic sequence possessing the same monodromy $D_0$ as $t_i^1$.
	
	Using the quasi-periodicity of $t_i^1$, $t_i^2$ and the expressions \eqref{eq_ab},
	we prove that
	$a_i$, $b_i$, $c_i$, $f_i$, $g_i$, $h_i$ are $N$-periodic.
	This means that any pairs of twisted $N$-gons in $\mathscr{P}_{3,N}$ have a lift to $\mathbb{R}^4$ satisfying \eqref{recur_eq0}.

	Furthermore,
	if $(v_i^1,v_i^2)$ and  $(u_i^1,u_i^2)$, $i\in\mathbb{Z}$ are two projectively equivalent twisted $N$-gons,
	then
	they correspond to the same equation \eqref{recur_eq0}.
	Indeed, there exists $A\in\mathrm{SL}(4,\mathbb{R})$ such that $A(v_i^j)=u_i^j$, $j=1,2$.
	Equivalently,
	there exist lifts $(U^1,U^2)$, s.t. $A(V_i^j)=U_i^j$, $j=1,2$, $i\in\mathbb{Z}$. 
	The sequences $\{U_i^j\}_{i\in\mathbb{Z}}$, $j=1,2$ then 
	satisfy the same equation \eqref{recur_eq0}.
	
	Conversely,
	let $(V_i^1,V_i^2)$ be a solution of \eqref{recur_eq0}.
	Since $a_i$, $b_i$, $c_i$, $f_i$, $g_i$, $h_i$ are periodic,
	it follows that \eqref{recur_eq0} defines a pair of twisted $N$-gons in $\mathscr{P}_{3,N}$.
	A choice of initial conditions $(V_{-1}^1,V_{-1}^2,V_0^1,V_0^2)$ fixes a pair of polygons and a different choice yields a projectively equivalent one.
\end{proof}
\begin{remark}
In general,
the lift of $(v^1,v^2)$ to $(V^1,V^2)\subset\mathbb{R}^4$ satisfying \eqref{recur_eq0} is not unique,
because the sequences $t_i^1$, $t_i^2$ defined in \eqref{eq_t} are not unique.
However,
the $N$-periodic coordinates $a_i$, $b_i$, $c_i$, $f_i$, $g_i$, $h_i$ 
do not depend on the choice of the lift coefficients $t_i^1$, $t_i^2$.
These coordinates are analogs of the cross-ratio coordinates $x_i$, $y_i$ in \cite{2010The}.
\end{remark}

\begin{proposition}
	The coordinates $a_i$, $b_i$, $c_i$, $f_i$, $g_i$, $h_i$ can be expressed using cross-ratios: 
	\begin{align}\label{eq_cr}
		\begin{split}
			a_i=&-\left[V_{i-1}^1,
			L_{-1,0}^{11}\cap P_{-2,-2,-1}^{122},
			V_i^1,
			L_{-1,0}^{11}\cap P_{-1,0,1}^{221}\right],\\
			f_i=&-\left[V_{i-1}^1,
			L_{-1,0}^{11}\cap P_{-2,-2,-1}^{122},
			V_i^1,
			L_{-1,0}^{11}\cap P_{-1,0,1}^{222}\right],\\
			c_i=&
			-
			\left[V_{i-1}^1,
			L_{-1,0}^{12}\cap P_{-1,0,1}^{211},
			V_i^2,
			L_{-1,0}^{12}\cap P_{-2,-2,-1}^{122}\right]\cdot a_i,\\
			b_i=&
			-\left[V_{i-1}^2,
			L_{-1,0}^{21}\cap P_{-2,-2,-1}^{121},
			V_i^1,
			L_{-1,0}^{21}\cap P_{-1,0,1}^{121}\right]
			\cdot c_{i-1},\\
			g_i=&
			-\left[V_{i-1}^2,
			L_{-1,0}^{21}\cap P_{-2,-2,-1}^{121},
			V_i^1,
			L_{-1,0}^{21}\cap P_{-1,0,1}^{122}\right]
			\cdot c_{i-1},\\
			h_i=&
			\left[V_{i}^1,
			L_{0,0}^{12}\cap P_{-1,-1,1}^{122},
			V_i^2,
			L_{0,0}^{12}\cap P_{-1,-1,1}^{121}\right]
			\cdot c_{i},
		\end{split}
	\end{align}
	where $L_{i_1,i_2}^{j_1,j_2}$ represents the line $\left(V_{i+i_1}^{j_1},V_{i+i_2}^{j_2}\right)$ and $P_{i_1,i_2,i_3}^{j_1,j_2,j_3}$ represents the plane passing through three points $\left(V_{i+i_1}^{j_1},V_{i+i_2}^{j_2},V_{i+i_3}^{j_3}\right)$ for all $j_1$, $j_2$, $j_3\in\{1,2\}$.
\end{proposition}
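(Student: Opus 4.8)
The plan is to read off each coefficient from the recurrence \eqref{recur_eq0} by detecting it as a cross-ratio of four collinear points on a suitable line. The underlying principle is the standard one: given four vectors $U_1,U_2,U_3,U_4\in\mathbb{R}^4$ whose projections lie on a common line in $\mathbb{RP}^3$, the cross-ratio $[U_1,U_2,U_3,U_4]$ depends only on the projective points, and if we write $U_3 = \lambda U_1 + \mu P$ and $U_4 = \lambda' U_1 + \mu' P$ for a fixed second point $P$ on the line, then the cross-ratio recovers a ratio of the scalars $\lambda,\mu,\lambda',\mu'$. So for each of $a_i,b_i,c_i,f_i,g_i,h_i$ I would isolate the relevant two-term combination appearing in \eqref{recur_eq0}, locate the line it lives on, and identify the auxiliary point $P$ as an intersection $L\cap P'$ of that line with one of the planes $P^{j_1j_2j_3}$.

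First I would treat $a_i$ and $f_i$ together. Both involve the coefficient of $V_{i-1}^1$ in the expansions of $V_{i+1}^1$ and $V_{i+1}^2$. The key observation is that on the line $L_{-1,0}^{11}=(V_{i-1}^1,V_i^1)$, the point $V_{i-1}^1$ and the point $L_{-1,0}^{11}\cap P_{-2,-2,-1}^{122}$ (which, tracing through the plane indices, is the intersection with the plane through $V_{i-2}^1,V_{i-2}^2,V_{i-1}^2$, i.e.\ the plane spanned by the other three basis vectors of the frame at step $i-1$) serve as the two reference points; the points $V_i^1$ and $L_{-1,0}^{11}\cap P_{-1,0,1}^{221}$ then have components along these two references dictated precisely by $a_i$ and the analogous coefficient for $f_i$. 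I would verify this by substituting \eqref{recur_eq0} to express $V_{i+1}^1$ (resp.\ $V_{i+1}^2$) in the frame $(V_{i-1}^1,V_{i-1}^2,V_i^1,V_i^2)$, intersecting with the relevant plane, and computing the resulting cross-ratio; the sign $-1$ comes from the orientation convention on cross-ratios. The cases $c_i$, $b_i$, $g_i$, $h_i$ follow the same template but now the two-term combination is read after first dividing out the already-known leading coefficient, which is why $a_i$ multiplies the $c_i$ formula, $c_{i-1}$ multiplies the $b_i$ and $g_i$ formulas (the shift by one index because the relevant normalization was set at step $i-1$), and $c_i$ multiplies the $h_i$ formula.

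The main obstacle I expect is bookkeeping: correctly matching each subscript-superscript pattern $P_{i_1,i_2,i_3}^{j_1,j_2,j_3}$ to the span of the complementary three basis vectors at the appropriate index, and keeping the recurrence's index shifts consistent when passing from $V_{i+1}$ back into the frame at $i$ versus $i-1$. In particular the coefficient $1$ in front of $V_i^1$ in both lines of \eqref{recur_eq0} is what makes the normalization propagate cleanly — this is exactly the gauge fixed by the rescaling \eqref{eq_t} — and I would want to check that this is what forces the auxiliary intersection points to be well-defined (nondegenerate) under the general-position and nondegeneracy hypotheses defining $\mathscr{P}_{3,N}$ and Definition \ref{def_Qpin}. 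Once the frame substitution is set up, each of the six identities reduces to a one-line determinant computation, so after the first case the rest are routine.
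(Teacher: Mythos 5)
Your plan is essentially the paper's own proof: the paper likewise computes the two auxiliary points as explicit linear combinations in the frame $(V_{i-1}^1,V_{i-1}^2,V_i^1,V_i^2)$ via the recurrence (e.g.\ $L_{-1,0}^{11}\cap P_{-2,-2,-1}^{122}=V_{i-1}^1-V_i^1$ and $L_{-1,0}^{11}\cap P_{-1,0,1}^{221}=a_iV_{i-1}^1+V_i^1$) and then reads off the coefficient from the standard cross-ratio formula for four collinear points, treating the remaining five coordinates as analogous. Carrying out your frame substitutions would reproduce exactly this argument, so the proposal is correct and not a different route.
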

By this definition these coordinate are projectively invariant.
\begin{proof}
	Here we just consider the case of $a_i$ and the others can be proved similarly.
	A simple computation shows that
	\begin{align*}
		L_{-1,0}^{11}\cap P_{-2,-2,-1}^{122}=V_{i-1}^1-V_i^1,\quad
		L_{-1,0}^{11}\cap P_{-1,0,1}^{221}=a_iV_{i-1}^1+V_i^1.
	\end{align*}
	Recall that given four coplanar vectors $X,Y,Z,W$ such that
	\begin{align*}
		Y=\lambda_1X+\lambda_2Z,\quad
		W=\mu_1 X+\mu_2 Z,
	\end{align*}
	where $\lambda_i$, $\mu_i$, $i=1,2$ are constants,
	the cross-ratio of the lines spanned by these vectors is given by
	\begin{align*}
		[X,Y,Z,W]=\frac{\lambda_2\mu_1}{\lambda_1\mu_2}.
	\end{align*}
	Using this definition, 
	we get the cross-ratio expression of $a_i$.
\end{proof}

\subsection{Explicit formulas for the map}
Now we consider the discrete time evolution of the pair of twisted $N$-gons
\begin{align*}
	T_Q(v_i^1,v_i^2)=(v_i^2,v_i^3),\quad
	\forall i\in\mathbb{Z},
\end{align*}
where $v_i^3$ is defined in \eqref{def_example}.
\begin{proposition}
	The vectors $(V_{i-1}^3,V_i^3)$ have the following expressions
	\begin{align}\label{eq_33}
		\begin{split}
			V_{i-1}^3&=\frac{1}{\lambda_{i-1}}\left(({a_{i-1}h_{i-1}}-c_{i-1}f_{i-1})V_{i-1}^2+f_{i-1}V_{i}^1-f_{i-1}V_i^2\right),\\
			V_i^3&=\frac{1}{\lambda_i}\left(f_i(a_i-f_i)V_{i-1}^1+f_i(b_i-g_i)V_{i-1}^2+h_i(a_i-f_i)V_i^2\right).
		\end{split}
	\end{align}
	where $\lambda_i$ is a scalar function
	\begin{align}\label{eq_lambda}
		\lambda_i={{f_i}\left(a_i+b_i-g_i-f_i\right)+{h_i}(a_i-f_i)}.
	\end{align}
	
\end{proposition}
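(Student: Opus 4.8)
The plan is to compute $v_i^3$ and $v_{i-1}^3$ directly as intersections of three planes, working with the lift $(V^1, V^2)$ satisfying the recurrence \eqref{recur_eq0}, and then read off the coefficients. First I would observe that since everything is projective, it suffices to find, for each line appearing in \eqref{def_example} (such as $\bigl(V_{i+1}^1, V_i^2, V_{i+1}^2\bigr)$), a convenient representation of the corresponding plane in terms of the basis $(V_{i-1}^1, V_{i-1}^2, V_i^1, V_i^2)$ of $\mathbb{R}^4$. The recurrence \eqref{recur_eq0} lets me express $V_{i+1}^1$ and $V_{i+1}^2$ in this basis, so each of the three planes becomes the span of three explicit vectors; intersecting three $3$-dimensional subspaces of $\mathbb{R}^4$ amounts to solving a small linear system (or equivalently taking a suitable combination of $4\times 4$ determinants / wedge products).

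Concretely, for $V_i^3$ I would set up the ansatz $V_i^3 = p\,V_{i-1}^1 + q\,V_{i-1}^2 + r\,V_i^1 + s\,V_i^2$ and impose the three coplanarity conditions coming from \eqref{def_example}: that $V_i^3$ lies in $\langle V_{i-1}^1, V_{i-1}^2, V_i^2\rangle$ (so $r=0$ after using the recurrence to eliminate $V_i^1$ where needed), that it lies in $\langle V_i^1, V_{i-1}^2, V_{i+1}^2\rangle$, and that it lies in $\langle V_{i+1}^1, V_i^2, V_{i+1}^2\rangle$. Each membership is one linear equation on $(p,q,r,s)$ obtained by setting a $4\times 4$ determinant to zero; solving the resulting homogeneous $3\times 4$ system gives $(p,q,r,s)$ up to scale. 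Substituting the expressions for $V_{i+1}^1, V_{i+1}^2$ from \eqref{recur_eq0} and simplifying should produce exactly the stated combination $f_i(a_i-f_i)V_{i-1}^1 + f_i(b_i-g_i)V_{i-1}^2 + h_i(a_i-f_i)V_i^2$, with the common denominator $\lambda_i$ being the determinant that governs solvability — hence formula \eqref{eq_lambda}. For $V_{i-1}^3$ I would run the same computation shifted by one index; here the relevant planes involve $V_{i-2}^1, V_{i-2}^2$, which I re-express in the basis $(V_{i-1}^1, V_{i-1}^2, V_i^1, V_i^2)$ by inverting the recurrence \eqref{recur_eq0} (using $a_ig_i - b_if_i \neq 0$), after which the same determinant bookkeeping yields the first line of \eqref{eq_33}.

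The main obstacle is purely computational stamina rather than any conceptual difficulty: one must carry several $4\times4$ determinant expansions, keep the index shifts straight (the $v_{i-1}^3$ formula mixes data at indices $i-2, i-1, i$), and verify that the messy rational expressions collapse to the clean factored forms displayed, in particular that the same $\lambda_i$ serves as the normalization in both lines after the appropriate reindexing. I would organize the calculation by first recording the three plane-spanning triples in the chosen basis, then expressing each coplanarity constraint as the vanishing of an explicit determinant, and only at the end clearing denominators; checking the factorizations $f_i(a_i-f_i)$, $f_i(b_i-g_i)$, $h_i(a_i-f_i)$ and the identity \eqref{eq_lambda} is then a finite (if tedious) algebraic verification, which I would present compactly, suppressing the routine expansions.
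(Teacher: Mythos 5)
Your core computation is the same as the paper's: write $V_i^3$ as a linear combination of the basis $(V_{i-1}^1,V_{i-1}^2,V_i^1,V_i^2)$, use \eqref{recur_eq0} to push $V_{i+1}^1,V_{i+1}^2$ into that basis, impose the three plane-membership conditions, and solve the resulting linear system (the paper phrases each membership as ``$V_i^3$ is a combination of the three spanning vectors'' rather than as a vanishing $4\times4$ determinant, but these are the same constraints). Two remarks. First, a minor inefficiency: for $V_{i-1}^3$ you do not need to invert the recurrence to reach $V_{i-2}^1,V_{i-2}^2$; the third plane in \eqref{def_example}, shifted to index $i-1$, is $\left<V_i^1,V_{i-1}^2,V_i^2\right>$, which already lies in the current index window, so the first line of \eqref{eq_33} drops out of the already-computed coefficients $x_{i-1}^{3l}$ with no extra work.

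The genuine gap is your treatment of the normalization. The three membership conditions form a homogeneous system of rank $3$ in four unknowns, so they determine $V_i^3$ only up to an arbitrary nonzero scalar at each $i$; there is no ``determinant that governs solvability'' of such a system that could canonically produce the prefactor $1/\lambda_i$, and your stated derivation of \eqref{eq_lambda} therefore does not go through. The proposition, however, asserts exact vector identities with a specific scale, and that scale carries content: it is fixed by the additional requirement that the lift $(V^2,V^3)$ of the image pair $T_Q(v^1,v^2)$ again satisfies a recurrence of the normalized form \eqref{recur_eq0}, i.e.\ with unit coefficient in front of $V_i^2$ in both equations
\begin{align*}
V_{i+1}^{2}=\hat{a}_iV_{i-1}^{2}+\hat{b}_iV_{i-1}^{3}+V_{i}^{2}+\hat{c}_iV_i^{3},\qquad
V_{i+1}^{3}=\hat{f}_iV_{i-1}^{2}+\hat{g}_iV_{i-1}^{3}+V_{i}^{2}+\hat{h}_iV_i^{3}.
\end{align*}
This gauge condition is what makes the coordinates of the image polygon well defined and what singles out $\lambda_i$ as in \eqref{eq_lambda}; you need to add this step (or an equivalent normalization argument) to complete the proof.
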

\begin{proof}
	According to \eqref{def_example},
	we have
	\begin{align}\label{eq_mi}
		\begin{split}
			V_i^3=&x_i^{11}V_{i-1}^{1}+x_i^{12}V_{i-1}^{2}+x_i^{13}V_{i}^{2}\\
			=&
			x_i^{21}V_{i}^{1}+x_i^{22}V_{i-1}^{2}+x_i^{23}V_{i+1}^{2}\\
			=&
			x_i^{31}V_{i+1}^{1}+
			x_i^{32}V_{i}^{2}+
			x_i^{33}V_{i+1}^{2},
		\end{split}
	\end{align}
	where $x_i^{kl}$, $k,l=1,2,3$, are undetermined scalar functions.
	Next we use \eqref{recur_eq0} to omit $V_{i+1}^1$, $V_{i+1}^2$ in \eqref{eq_mi}.
	Then, comparing the coefficients of $V_{i-1}^1$, $V_{i-1}^2$, $V_i^1$, $V_i^2$,
	we get $8$ equations
	\begin{align*}
		x_i^{22}+x_i^{23}=0,\quad
		h_ix_i^{23}-x_i^{13}=0,\quad
		x_i^{32}+x_i^{33}=0,\quad
		f_ix_i^{23}-x_i^{11}=0,\\
		g_ix_i^{23}-x_i^{12}+x_i^{21}=0,\quad
		c_ix_i^{32}+h_ix_i^{33}-x_i^{13}+x_i^{31}=0,\\
		a_ix_i^{32}+f_ix_i^{33}-x_i^{11}=0,\quad
		b_ix_i^{32}+g_ix_i^{33}-x_i^{12}=0.
	\end{align*}
	After solving them,
	we get a solution of $x_i^{kl}$ up to a nonzero multiplier.
	Substituting the solution into the first row of \eqref{eq_mi},
	we get the second identity of \eqref{eq_33}.
	Besides,
	according to the third row of \eqref{eq_mi},
	we get 
	\begin{align*}
		V_{i-1}^3=x_{i-1}^{31}V_{i}^{1}+
		x_{i-1}^{32}V_{i-1}^{2}+
		x_{i-1}^{33}V_{i}^{2},
	\end{align*}
	after taking a shift $i\rightarrow i-1$.
	Substituting the solutions of $x_i^{kl}$ into the equation above,
	we get the first identity of \eqref{eq_33}.
	Finally, $\lambda_i$ can be determined by the condition $V_i^2$, $V_i^3$, $i\in\mathbb{Z}$ satisfy a relation similar to \eqref{recur_eq0}
	\begin{align*}
		\begin{split}
			V_{i+1}^{2}&=\hat{a}_iV_{i-1}^{2}+\hat{b}_iV_{i-1}^{3}+V_{i}^{2}+\hat{c}_iV_i^{3},\\
			V_{i+1}^{3}&=\hat{f}_iV_{i-1}^{2}+\hat{g}_iV_{i-1}^{3}+V_{i}^{2}+\hat{h}_iV_i^{3},
		\end{split}
	\end{align*}
	where $\hat{a}_i$, $\ldots$, $\hat{h}_i$ are undetermined functions.
\end{proof}

Let's introduce the index $j$ to represent the discrete time.
Then the relation \eqref{recur_eq0} can be written as 
\begin{align}\label{recur_eq}
	\begin{split}
		V_{i+1}^{j+1}&=a_i^jV_{i-1}^{j+1}+b_i^jV_{i-1}^{j+2}+V_{i}^{j+1}+c_i^jV_i^{j+2},\\
		V_{i+1}^{j+2}&=f_i^jV_{i-1}^{j+1}+g_i^jV_{i-1}^{j+2}+V_{i}^{j+1}+h_i^jV_i^{j+2},
	\end{split}
\end{align}
where $a_i^j$, $b_i^j$, $c_i^j$, $f_i^j$, $g_i^j$, $h_i^j$ are arbitrary $N$-periodic sequences on the index $i$.

\begin{proposition}
	Under the coordinates $a_i^j$, $b_i^j$, $c_i^j$, $f_i^j$, $g_i^j$, $h_i^j$,
	the map is given by the formulas
	\begin{align}\label{eq_sys}
		\begin{split}
			a_i^{j+1}=c_{i-1}^j-
			\frac{h_{i-1}^j}{f_{i-1}^j}a_{i-1}^j+
			\frac{a_i^jg_i^j-b_i^jf_i^j}{a_i^j-f_i^j}{a_i^j}
			,
			\quad
			b_i^{j+1}=\frac{\lambda_{i-1}^j}{f_{i-1}^j},\quad
			c_i^{j+1}=\frac{\lambda_i^j}{a_i^j-f_i^j},\\
			f_i^{j+1}=
			\frac{a_{i+1}^j-f_{i+1}^j}{\lambda_{i+1}^j}\left(
			\left(c_{i-1}^jf_{i-1}^j-{a_{i-1}^jh_{i-1}^j}\right)\frac{f_{i+1}^j+h_{i+1}^j}{f_{i-1}^j}+h_{i+1}^j\frac{a_i^jg_i^j-b_i^jf_i^j}{a_i^j-f_i^j}
			\right),\\
			g_i^{j+1}=\frac{\lambda_{i-1}^j}{\lambda_{i+1}^jf_{i-1}^j}(a_{i+1}^j-f_{i+1}^j)(f_{i+1}^j+h_{i+1}^j),\quad
			h_i^{j+1}=\frac{\lambda_i^j}{\lambda_{i+1}^j}\frac{a_{i+1}^j-f_{i+1}^j}{a_{i}^j-f_{i}^j}h_{i+1}^j,
		\end{split}
	\end{align}
	where $\lambda_i^j$ is defined in \eqref{eq_lambda}.
	
\end{proposition}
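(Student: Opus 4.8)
The plan is to recognise the image pair $(v^2,v^3)$ as a solution of a recurrence of the shape \eqref{recur_eq} and to read the six coefficients $a_i^{j+1},\dots,h_i^{j+1}$ directly off that recurrence. Throughout I fix the discrete time slice, abbreviating $a_i=a_i^j,\dots,h_i=h_i^j$ and $\lambda_i=\lambda_i^j$, and I work in $\mathbb{R}^4$ with the moving frame $\{V_{i-1}^1,V_{i-1}^2,V_i^1,V_i^2\}$, which is a basis by the general-position hypothesis defining $\mathscr{P}_{3,N}$.

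First I would invert \eqref{eq_33}: the two identities there express $V_{i-1}^3$ and $V_i^3$ in the above frame, and solving the first of them for $V_i^1$ and the second for $V_{i-1}^1$ gives, as long as $f_i\neq 0$, $a_i\neq f_i$ and $\lambda_i\neq 0$,
\begin{align*}
V_i^1&=\frac{\lambda_{i-1}}{f_{i-1}}\,V_{i-1}^3-\frac{a_{i-1}h_{i-1}-c_{i-1}f_{i-1}}{f_{i-1}}\,V_{i-1}^2+V_i^2,\\
V_{i-1}^1&=\frac{\lambda_i}{f_i(a_i-f_i)}\,V_i^3-\frac{b_i-g_i}{a_i-f_i}\,V_{i-1}^2-\frac{h_i}{f_i}\,V_i^2.
\end{align*}
In particular $\{V_{i-1}^2,V_{i-1}^3,V_i^2,V_i^3\}$ is again a basis of $\mathbb{R}^4$ on the locus where these denominators are nonzero, which is exactly where $(v^2,v^3)$ is in general position.

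Next I would substitute these two expressions into the second line of \eqref{recur_eq0}, namely $V_{i+1}^2=f_iV_{i-1}^1+g_iV_{i-1}^2+V_i^1+h_iV_i^2$, and collect terms in the new basis. The coefficient of $V_i^2$ collapses to $1$ automatically (the $-h_i$ produced by $f_iV_{i-1}^1$ and the $+h_i$ from $h_iV_i^2$ cancel, leaving the $+1$ contributed by $V_i^1$), while the coefficients of $V_{i-1}^2$, $V_{i-1}^3$ and $V_i^3$ are the asserted $a_i^{j+1}$, $b_i^{j+1}$ and $c_i^{j+1}$. Then, shifting \eqref{eq_33} by $i\mapsto i+1$ writes $V_{i+1}^3$ as a combination of $V_i^1$, $V_i^2$ and $V_{i+1}^2$; inserting $V_i^1$ from the display above together with the expression just obtained for $V_{i+1}^2$ re-expresses $V_{i+1}^3$ in the new basis, and now the coefficient of $V_i^2$ reduces to $1$ precisely by the definition \eqref{eq_lambda} of $\lambda_i$, while the remaining three coefficients give $f_i^{j+1}$, $g_i^{j+1}$ and $h_i^{j+1}$. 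Restoring the time index $j$ yields \eqref{eq_sys}; periodicity in $i$ of the new coordinates is inherited from that of $a_i^j,\dots,h_i^j$ and $\lambda_i^j$, so $(v^2,v^3)\in\mathscr{P}_{3,N}$ on the open dense set where $f_i^j$, $a_i^j-f_i^j$ and $\lambda_i^j$ do not vanish.

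The main obstacle is not the linear algebra but the bookkeeping: the map relabels rows, so one must track which index among $j$, $j+1$, $j+2$ each vertex carries and then push a moderately bulky set of rational expressions through the two coordinate changes. The one delicate point is the second normalisation check — that the coefficient of $V_i^2$ in the equation for $V_{i+1}^3$ equals $1$ — since it requires recognising $f_i(a_i+b_i-g_i-f_i)+h_i(a_i-f_i)$, that is $\lambda_i$ of \eqref{eq_lambda}, as the exact common denominator occurring in \eqref{eq_33}; performing this cancellation transparently rather than by brute expansion is the crux. Alternatively, one could feed the lifts \eqref{eq_33} into the cross-ratio formulas \eqref{eq_cr} evaluated on $(v^2,v^3)$, but the frame computation sketched above is the shortest route.
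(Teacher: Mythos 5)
Your argument is correct and is, in substance, the same computation the paper performs: the paper assembles $\Phi_i^j=(V_{i-1}^{j+1},V_{i-1}^{j+2},V_i^{j+1},V_i^{j+2})^T$, encodes \eqref{recur_eq} and \eqref{eq_33} in the matrices $L_i^j$, $M_i^j$, and extracts \eqref{eq_sys} from the compatibility condition $L_i^{j+1}M_i^j=M_{i+1}^jL_i^j$ --- which is exactly your change of frame from $\{V_{i-1}^1,V_{i-1}^2,V_i^1,V_i^2\}$ to $\{V_{i-1}^2,V_{i-1}^3,V_i^2,V_i^3\}$ written in matrix form. Your two inversions of \eqref{eq_33} are right, the coefficient of $V_i^2$ in the expansion of $V_{i+1}^2$ is indeed $1$ automatically, and the normalisation of the $V_{i+1}^3$ line by $\lambda_{i+1}$ works exactly as you describe. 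One caveat: carrying your substitution through gives
$a_i^{j+1}=c_{i-1}^j-a_{i-1}^jh_{i-1}^j/f_{i-1}^j+(a_i^jg_i^j-b_i^jf_i^j)/(a_i^j-f_i^j)$,
i.e.\ \emph{without} the trailing factor $a_i^j$ that appears in the printed first formula of \eqref{eq_sys}; the paper's own compatibility condition yields the same expression, so that factor appears to be a typo in the statement, and your blanket claim that the coefficients you read off ``are the asserted $a_i^{j+1}$, $b_i^{j+1}$, $c_i^{j+1}$'' should be qualified accordingly rather than asserted verbatim.
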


\begin{proof}
Introduce the function
\begin{align*}
	\Phi_i^j:=\left(V_{i-1}^{j+1},\,V_{i-1}^{j+2},\,V_i^{j+1},\,V_i^{j+2}\right)^T.
\end{align*}
According to \eqref{recur_eq} and \eqref{eq_33},
we get the following linear problem
\begin{align}\label{eq_lax}
	\Phi_{i+1}^j=L_i^j\Phi_i^j,\quad
	\Phi_{i}^{j+1}=M_i^j\Phi_i^j,
\end{align}
where 
\begin{align*}
	L_i^j:=\begin{pmatrix}
		0&0&1&0\\
		0&0&0&1\\
		a_i^j&b_i^j&1&c_i^j\\
		f_i^j&g_i^j&1&h_i^j
	\end{pmatrix},\quad
	M_i^j:=
	\begin{pmatrix}
		0&1&0&0\\
		0&\frac{\alpha_{i-1}^j}{\lambda_{i-1}^j}&\frac{f_{i-1}^j}{\lambda_{i-1}^j}&-\frac{f_{i-1}^j}{\lambda_{i-1}^j}\\
		0&0&0&1\\
		\frac{\beta_i^j}{\lambda_i^j}&\frac{\gamma_i^j}{\lambda_i^j}&0&\frac{\eta_i^j}{\lambda_i^j}
	\end{pmatrix}.
\end{align*}
Here
\begin{align*}
	\alpha_n^j=a_{n}^jh_{n}^j-c_{n}^jf_{n}^j,\quad
	\beta_i^j=f_i^j\left(a_i^j-f_i^j\right),\quad
	\gamma_i^j=f_i^j\left(b_i^j-g_i^j\right),\\
	\eta_i^j=h_i^j\left(a_i^j-f_i^j\right),\quad
	\lambda_i^j=f_i^j\left(a_i^j+b_i^j-g_i^j-f_i^j\right)+h_i^j(a_i^j-f_i^j).
\end{align*}
From the compatibility condition
\begin{align*}
	L_i^{j+1}M_i^j=M_{i+1}^jL_i^j,
\end{align*}
we could get \eqref{eq_sys}.
\end{proof}

\section{Poisson structure}\label{sec_refa}
In this section,
the detailed proofs of conclusions in Section \ref{subsec_re} are demonstrated,
and we obtain the Poisson structure of the system \eqref{eq_sys}.

\subsection{Difference operators}
Given a pair of twisted $N$-gons $(v^1,v^2)\in\mathscr{P}_{3,N}$,
let $(V^1,V^2)\subset\mathbb{R}^4$ be an arbitrary lift such that
\begin{align*}
V_{i+N}^1=MV_i^1,\quad
V_{i+N}^2=MV_i^2,
\end{align*}
where $M\in\mathrm{SL}(\mathbb{R},4)$ is the monodromy.
We write $(V^1,V^2)$ as an infinite-dimensional sequence $\Phi$
\begin{align*}
\Phi_k=\begin{cases}
V_i^1,\quad k=2i,\\
V_{i}
^2,\quad
k=2i+1,
\end{cases}
\end{align*}
that is
\begin{align}\label{def_phi}
\Phi=\begin{pmatrix}
\cdots,&
V_{i-1}^1,&
V_{i-1}^2,&
V_i^1,&
V_i^2,&
V_{i+1}^1,&
V_{i+1}^2,&
\cdots
\end{pmatrix}^T.
\end{align}
Note that $T_Q(v^1,v^2)=(v^2,v^3)\in\mathscr{P}_{3,Q}$.
Let $\tilde{\Phi}$ denote the corresponding  sequence 
\begin{align}\label{def_phi2}
	\tilde{\Phi}=\begin{pmatrix}
		\cdots,&
		V_{i-1}^2,&
		V_{i-1}^3,&
		V_i^2,&
		V_i^3,&
		V_{i+1}^2,&
		V_{i+1}^3,&
		\cdots
	\end{pmatrix}^T,
\end{align}
where $V_i^3$ is defined in \eqref{def_example}.

\begin{proposition}\label{pro_equi}
	The two twisted $N$-gons $(v^1,v^2)$ in $\mathbb{RP}^3$ modulo the projective transformations can be encoded by means of difference operators in $\PBDO_{2N}(\J)/H\tilde{\times}H$ with $\J=\{0,1,2,3,4\}$, that is
	\begin{align}\label{op_D}
		\D=a^0+a^1T+a^2T^2+a^3T^{3}+a^4T^4,
	\end{align}
	where $a^j$, $j=0,1,2,3,4$ are sequences such that $a_k^0\neq 0$, $a^4_k\neq 0$ for any $k\in\mathbb{Z}$. 
\end{proposition}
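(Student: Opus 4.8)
The plan is to reduce the claim to Proposition \ref{pro_equi0} applied to a polygon in $\mathbb{RP}^3$ living on a $2N$-periodic index, via the sequence $\Phi$ defined in \eqref{def_phi}. First I would observe that a pair of twisted $N$-gons $(v^1,v^2)\in\mathscr{P}_{3,N}$, together with a choice of lift $(V^1,V^2)\subset\mathbb{R}^4$ with common monodromy $M\in\mathrm{SL}(4,\mathbb{R})$, assembles into a single sequence of vectors $\Phi_k\in\mathbb{R}^4$, $k\in\mathbb{Z}$, by interleaving: $\Phi_{2i}=V_i^1$, $\Phi_{2i+1}=V_i^2$. Since $V_{i+N}^j=MV_i^j$ for $j=1,2$, the sequence $\Phi$ satisfies $\Phi_{k+2N}=M\Phi_k$, i.e. it is a twisted $2N$-gon in $\mathbb{RP}^3$ in the sense of Proposition \ref{pro_equi0} with $d=3$ and period $2N$. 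The general position hypothesis on $\mathscr{P}_{3,N}$ — that $v_{i-1}^1$, $v_{i-1}^2$, $v_i^1$, $v_i^2$ do not lie in a common plane — says precisely that every four consecutive vectors $\Phi_{k},\Phi_{k+1},\Phi_{k+2},\Phi_{k+3}$ are linearly independent, which is the nondegeneracy condition making $\Phi$ correspond to a properly bounded difference operator supported in $\J=\{0,1,2,3,4\}$ of the form \eqref{op_D}, with $a^0_k\neq 0$ and $a^4_k\neq 0$.

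The next step is to check that the relevant group actions match up under this interleaving. On the polygon side, projective transformations of $(v^1,v^2)$ correspond, as in the proof of the preceding proposition in Section \ref{sec_expl}, to the simultaneous action of $\mathrm{SL}(4,\mathbb{R})$ on the lifts together with rescaling of each vector by a nonvanishing scalar; after interleaving this is exactly the freedom of rescaling each $\Phi_k$ and acting by $GL_4$ on the common monodromy, which by Proposition \ref{pro_equi0} is encoded by the quotient by $H\tilde\times H$ acting on $\DO_{2N}(\J)$ via \eqref{eq_hh}, $\D\mapsto\lambda\D\mu^{-1}$. So I would argue that the bijection of Proposition \ref{pro_equi0} (polygons in $\mathbb{RP}^d$ modulo projective transformations $\leftrightarrow$ $\PBDO_{n}(\J)/H\tilde\times H$ with $n=2N$, $d=3$) restricts to the desired bijection once we identify $\mathscr{P}_{3,N}/\mathrm{PGL}_4$ with a subset of twisted $2N$-gons in $\mathbb{RP}^3$ modulo projective equivalence — namely those for which the interleaving structure is preserved. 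One must verify this subset is the whole space or is cut out by the extra condition that is automatically inherited; in fact any twisted $2N$-gon splits into its even- and odd-indexed subsequences, each an $N$-gon, so the identification is surjective, and the general position condition is exactly the properly bounded condition, giving the claimed correspondence with no further constraint.

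The main obstacle I expect is bookkeeping the scalar freedom correctly: a pair of twisted $N$-gons does not determine a unique lift $(V^1,V^2)$ satisfying a recurrence of the form \eqref{recur_eq0}, and likewise a difference operator is only defined up to the $H\tilde\times H$ action, so I must be careful that the two ambiguities correspond bijectively rather than one being coarser than the other. Concretely, I would track how the two rescaling sequences $t_i^1,t_i^2$ from \eqref{eq_t} in the proof in Section \ref{sec_expl} interleave into a single $2N$-quasi-periodic scalar sequence, and confirm its monodromy constraint matches the condition defining $H\tilde\times H$ (common monodromy $z\in\mathbb{R}^*$). Once that matching is established, injectivity and surjectivity of the claimed bijection follow from the corresponding properties in Proposition \ref{pro_equi0}, and the proof is complete.
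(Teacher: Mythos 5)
Your proposal is correct and follows exactly the paper's argument: the paper's proof is the one-line observation that the interleaved sequence $\Phi$ of \eqref{def_phi} is a twisted $2N$-gon in $\mathbb{RP}^3$, so Proposition \ref{pro_equi0} applies with $d=3$ and $n=2N$. The additional verifications you sketch (matching of the general-position condition with proper boundedness, and of the scalar/projective ambiguities with the $H\tilde{\times}H$ action) are details the paper leaves implicit but are consistent with its reasoning.
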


\begin{proof}
We can treat $\Phi$ in \eqref{def_phi} as a new twisted $2N$-gon in $\mathbb{RP}^3$.
Then the conclusion follows from proposition \ref{pro_equi0}.
\end{proof}

Now we give a Lemma which connects the geometric characteristic of the $Q$-map and difference operators. 
\begin{lemma}\label{lemma_d123}

The vectors $\Phi$, $\tilde{\Phi}$ defined in \eqref{def_phi} and \eqref{def_phi2} satisfy the following difference equations
\begin{align}\label{eq_dd}
	\begin{split}
	\D_1\tilde{\Phi}&=\D_2\Phi,\\
	\tilde{\Phi}&=\D_3\Phi,
	\end{split}
\end{align}
where $\D_i$, $i=1,2,3$ are $2N$-periodic difference operators such that
$\D_1$ is supported in $\{2,3\}$,
$\D_2$ is supported in $\{0,1\}$,
and $\D_3=f+gT+hT^2$ is supported in $\{0,1,2\}$, such that
\begin{align*}
	f_{2i}=h_{2i}=0,\quad \forall i\in\mathbb{Z}.
\end{align*}
\end{lemma}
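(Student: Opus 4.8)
The plan is to construct $\D_1,\D_2,\D_3$ coefficient by coefficient by separating the even and odd entries of $\Phi$ and $\tilde\Phi$, and reading the coefficients directly off the incidence relations defining $v^3$ in \eqref{def_example}. The geometric point is that the three planes in \eqref{def_example} translate into exactly the three nontrivial relations we need: the odd part of $\tilde\Phi=\D_3\Phi$ comes from the plane $(v_{i+1}^1,v_i^2,v_{i+1}^2)$, while the even and odd parts of $\D_1\tilde\Phi=\D_2\Phi$ come respectively from the planes $(v_i^1,v_i^2,v_{i+1}^2)$ and $(v_{i+1}^1,v_i^2,v_{i+2}^2)$, whereas the support restriction $f_{2i}=h_{2i}=0$ simply falls out of general position. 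Throughout one uses that $(V^1,V^2)$ is a lift with $V_i^1,V_i^2,V_{i+1}^1,V_{i+1}^2$ linearly independent for every $i$, and that all scalar coefficients in play ($a_i,\dots,h_i$, $\lambda_i$, and the $x_i^{kl}$ of \eqref{eq_mi}) are $N$-periodic in $i$, so that every operator built from them is automatically $2N$-periodic (and, since the $x_i^{kl}$ are $N$-periodic, $V_i^3$ from \eqref{eq_33} satisfies $V_{i+N}^3=MV_i^3$, making $\tilde\Phi$ a genuine $2N$-quasi-periodic sequence).

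First I would settle $\tilde\Phi=\D_3\Phi$ with $\D_3=f+gT+hT^2$. Evaluating at an even index $k=2i$ gives $V_i^2=f_{2i}V_i^1+g_{2i}V_i^2+h_{2i}V_{i+1}^1$, and linear independence of $V_i^1,V_i^2,V_{i+1}^1$ forces $f_{2i}=h_{2i}=0$ and $g_{2i}=1$, exactly the stated constraint. Evaluating at an odd index $k=2i+1$ gives $V_i^3=f_{2i+1}V_i^2+g_{2i+1}V_{i+1}^1+h_{2i+1}V_{i+1}^2$, which is precisely the third line of \eqref{eq_mi}; hence $f_{2i+1},g_{2i+1},h_{2i+1}$ are the already computed coefficients $x_i^{32},x_i^{31},x_i^{33}$, $N$-periodic in $i$.

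Next I would settle $\D_1\tilde\Phi=\D_2\Phi$ with $\D_1=pT^2+qT^3$ and $\D_2=r+sT$. At $k=2i$ this reads $p_{2i}V_{i+1}^2+q_{2i}V_{i+1}^3=r_{2i}V_i^1+s_{2i}V_i^2$, and at $k=2i+1$ it reads $p_{2i+1}V_{i+1}^3+q_{2i+1}V_{i+2}^2=r_{2i+1}V_i^2+s_{2i+1}V_{i+1}^1$. The first plane of \eqref{def_example}, shifted by one, puts $V_{i+1}^2,V_{i+1}^3,V_i^1,V_i^2$ in the three-dimensional span of $V_i^1,V_i^2,V_{i+1}^2$, so these four vectors satisfy a linear relation; the second plane, shifted by one, puts $V_{i+1}^3,V_{i+2}^2,V_i^2,V_{i+1}^1$ in the span of $V_{i+1}^1,V_i^2,V_{i+2}^2$, yielding the other relation. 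Fixing a normalization (say $q_k\equiv 1$) then determines $p_k,r_k,s_k$ as $N$-periodic functions of $i$, so $\D_1,\D_2$ are $2N$-periodic. Combining the two equations recovers the factorized form $(\D_1\D_3-\D_2)\Phi=0$ anticipated in Section \ref{subsec_re}.

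The index bookkeeping and the periodicity checks are routine. The genuinely delicate point is the nondegeneracy behind the last step: one must verify that each relevant quadruple of vectors spans only a three-dimensional subspace and that no three of them are linearly dependent — for instance, from \eqref{eq_33} the vector $V_{i+1}^3$ generically has nonzero $V_i^1$- and $V_{i+1}^2$-components — so that each linear relation is one-dimensional and the coefficients of $\D_1,\D_2$ are pinned down (up to the overall $H\tilde{\times}H$ scaling) and come out periodic. This is where the explicit formulas \eqref{eq_33} and the general-position definition of $\mathscr{P}_{3,N}$ are indispensable, and it is the step I expect to require the most care.
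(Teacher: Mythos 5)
Your proposal is correct and follows essentially the same route as the paper: you read the coefficients of $\D_1,\D_2,\D_3$ off the three coplanarity relations of \eqref{def_example} by splitting $\Phi,\tilde\Phi$ into even and odd entries, with the constraint $f_{2i}=h_{2i}=0$ coming from $\tilde\Phi_{2i}=V_i^2$ and the linear independence of $V_i^1,V_i^2,V_{i+1}^1$. Your identification of which plane produces which even/odd relation matches the paper's (up to an index shift $i\mapsto i+1$), and your added remarks on $N$-periodicity of the coefficients and on the genericity needed for each relation to be one-dimensional only make explicit what the paper leaves implicit.
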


\begin{proof}
According to the definition of the $Q$-map,
we know that the four points $V_{i-1}^1$, $V_{i-1}^2$, $V_{i}^2$, $V_i^3$ are coplanar,
while $V_{i-1}^2$, $V_i^1$, $V_{i}^3$, $V_{i+1}^2$ are also coplanar.
Therefore,
we have
\begin{align*}
\alpha_iV_i^2+\beta_iV_i^3&=\gamma_i V_{i-1}^1+\delta_i V_{i-1}^2,\\
\epsilon_i V_i^3+\zeta_i V_{i+1}^2&=\eta_i V_{i-1}^2+\theta_iV_i^1,
\end{align*}
which can be written using difference operators as
\begin{align*}
\D_1\tilde{\Phi}=\D_2\Phi,
\end{align*}
where
\begin{align*}
\D_1=aT^2+bT^3,\quad
\D_2=c+dT,
\end{align*}
with
\begin{align*}
(a_k,b_k,c_k,d_k)=\begin{cases}
(\alpha_i,\beta_i,\gamma_i,\delta_i),\quad k=2i,\\
(\epsilon_i,\zeta_i,\eta_i,\theta_i),\quad k=2i+1.
\end{cases}
\end{align*}
Furthermore,
note that $V_i^2$, $V_{i+1}^1$, $V_{i+1}^2$, $V_i^3$ are coplanar,
and $T_QV_i^1=\rho_i V_i^2$,
which yield
\begin{align*}
T_QV_i^1&=\rho_iV_i^2,\\
V_i^3&=\sigma_iV_i^2+\lambda_iV_{i+1}^1+\mu_iV_{i+1}^2.
\end{align*}
Using difference operators,
these two equations above can be rewritten as
\begin{align*}
	\tilde{\Phi}=\D_3\Phi,\quad
	\D_3=f+gT+hT^3,
\end{align*}
where
\begin{align*}
(f_k,g_k,h_k)=\begin{cases}
	(0,\rho_i,0),\quad k=2i,\\
	(\sigma_i,\lambda_i,\mu_i),\quad k=2i+1.
\end{cases}
\end{align*}
\end{proof}
Eliminating $\tilde{\Phi}$ in \eqref{eq_dd},
we get
\begin{align}\label{eq_ph}
\D_1\D_3\Phi=\D_2\Phi.
\end{align}

\begin{lemma}\label{lemma_fac}
For a periodic difference operator
$\D$ with the form $xT^2+yT^3+zT^4+wT^5$
such that
$x_{2i}=w_{2i+1}=0$, $\forall i\in\mathbb{Z}$,
there is a unique factorization 
\begin{align*}
\D=\D_1\D_3,
\end{align*}
up to the action
\begin{align}\label{eq_13}
	\D_1\mapsto \D_1\beta^{-1},\quad
	\D_3\mapsto\beta\D_3,
\end{align}
where
$\D_1$, $\D_3$ satisfy the conditions in lemma \ref{lemma_d123},
and  $\beta$ is a quasi-periodic sequence.

\end{lemma}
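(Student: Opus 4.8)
The plan is to construct the factorization explicitly by matching coefficients, and then to show the residual freedom is exactly the scaling \eqref{eq_13}. First I would write out the product $\D_1\D_3$ for the general ansatz $\D_1 = aT^2 + bT^3$ and $\D_3 = f + gT + hT^3$ with the support constraints $a_{2i}=b_{2i+1}\cdot(\text{something})$… more carefully, with $f_{2i}=h_{2i}=0$ from Lemma \ref{lemma_d123}. Composing, $\D_1\D_3 = aT^2 f + aT^2 gT + aT^2 hT^3 + bT^3 f + bT^3 gT + bT^3 hT^3$, and using $(T^j c)_k = c_{k+j}$ one expands each term into the basis $T^2, T^3, T^4, T^5$. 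This yields $x_k = a_k f_{k+2}$, and since $f_{2i}=0$ we indeed get $x_{2i}=0$ automatically; similarly the top coefficient $w_k = b_k h_{k+3}$, and since $h_{2i}=0$ means $h_{k+3}=0$ when $k$ is odd, we get $w_{2i+1}=0$. So the support conditions on $\D$ are forced, confirming consistency. The middle coefficients $y_k, z_k$ give two more equations relating $(a,b,f,g,h)$ to $(y,z)$.

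Next I would solve this system. The unknowns are the sequences $a,b$ (two sequences, but with no vanishing constraint stated beyond properness) and $f,g,h$ with $f_{2i}=h_{2i}=0$; effectively on even indices $\D_3$ contributes only $g_{2i}$, and on odd indices all of $f_{2i+1},g_{2i+1},h_{2i+1}$. Counting: per period of length $2$ in the index, $\D$ has coefficients $x,y,z,w$ at $4\times 2 = 8$ slots but with $x_{2i}=w_{2i+1}=0$ that's $6$ free data; $\D_1$ contributes $a,b$ at $2\times 2 = 4$ slots, $\D_3$ contributes $g$ at both parities plus $f,h$ at odd parity, i.e. $2+1+1 = 4$ slots, total $8$; modulo the one-parameter-per-index gauge $\beta$ (a sequence, so $2$ per period) we expect $8-2 = 6$, matching. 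So generically the count works out and I would solve the coefficient equations recursively along the index $k$: fix the gauge by normalizing, say, $g$ on even indices (or equivalently setting $\beta$), then read off $a_k$ from the $x$-equation (where $f_{k+2}\neq 0$, i.e. at odd $k+2$), propagate to determine $f,h$ from $x,w$ at the appropriate parities, and finally solve the two middle equations for $b$ and the remaining $g$. The properness hypothesis on $\D$ ($x_{2i+1}\neq 0$, $w_{2i}\neq 0$) is what guarantees the relevant coefficients are non-vanishing so the recursion does not stall and produces properly bounded $\D_1,\D_3$.

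For uniqueness up to \eqref{eq_13}: suppose $\D_1\D_3 = \D_1'\D_3' = \D$. Then $\D_1^{-1}\D_1' = \D_3 (\D_3')^{-1}$ as an equality in $\Psi\DO_{2N}$ (inverses exist since the operators are properly bounded with non-vanishing leading/trailing coefficients). The left side, $\D_1^{-1}\D_1'$, is $T^{-2}a^{-1}\cdot(a'T^2 + b'T^3)(\cdots)$ — a pseudo-difference operator with lowest-order term $T^0$; the right side $\D_3(\D_3')^{-1}$ similarly has its Laurent expansion starting at $T^0$ and in fact, since $\D_3,\D_3'$ have matching supports $\{0,1,2\}$ with the same vanishing pattern, one checks $\D_3(\D_3')^{-1}$ is actually a difference operator of order $0$, i.e. multiplication by a sequence $\beta$. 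Then $\D_1' = \D_1\beta^{-1}$ wait — I should be careful with sides: from $\D_1^{-1}\D_1' = \D_3(\D_3')^{-1} = \beta$ (scalar) we get $\D_1' = \D_1\beta$ and $\D_3' = \beta^{-1}\D_3$, which after relabeling $\beta\mapsto\beta^{-1}$ is exactly \eqref{eq_13}. The quasi-periodicity of $\beta$ follows because all operators are $2N$-periodic. The main obstacle I anticipate is the bookkeeping in the uniqueness argument — specifically proving that $\D_3(\D_3')^{-1}$ collapses to order zero rather than being a genuine pseudo-difference operator; this requires using both the support condition $\J_3 = \{0,1,2\}$ and the vanishing pattern $f_{2i}=h_{2i}=0$ in an essential way, analogous to the rigidity of factorizations in \cite{refactorization2020}, and I would carry it out by comparing leading and trailing coefficients of the relation $\D_1\D_3 = \D_1'\D_3'$ term by term before invoking the pseudo-difference formalism.
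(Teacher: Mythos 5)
Your existence argument is essentially the paper's: expand $\D_1\D_3$ for $\D_1=aT^2+bT^3$ and $\D_3=f+gT+hT^2$ with $f_{2i}=h_{2i}=0$, observe that $x_{2i}=0$ and $w_{2i+1}=0$ come for free, and solve the remaining six coefficient equations per period after fixing the gauge (the paper normalizes $g\equiv 1$ and writes the factors in closed matrix form). One correction: the divisions in the resulting formulas are by $y_{2i+1}=a_{2i+1}g_{2i+3}$ and $z_{2i}=b_{2i}g_{2i+3}$, so the genericity actually required is the non-vanishing of these two \emph{middle} coefficients at the indicated parities, not the properness conditions $x_{2i+1}\neq 0$, $w_{2i}\neq 0$ that you invoke. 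Also note the solution is local (each unknown is a rational function of finitely many nearby coefficients of $\D$); nothing needs to be propagated along the index line.

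The uniqueness half has a genuine gap. Your central step forms $\D_3(\D_3')^{-1}$ inside the pseudo-difference algebra and justifies this by saying the operators have ``non-vanishing leading/trailing coefficients.'' That is precisely what the hypothesis denies: $\D_3'=f'+g'T+h'T^2$ has $f'_{2i}=h'_{2i}=0$, so neither its lowest-order nor its highest-order coefficient is a non-vanishing sequence, and $\D_3'$ is \emph{not} invertible in $\Psi\DO_{2N}$ --- neither the expansion of the inverse from the bottom (which needs $f'^{-1}$) nor from the top (which needs $h'^{-1}$) exists. Consequently the identity $\D_1^{-1}\D_1'=\D_3(\D_3')^{-1}$ cannot be written down, and the ``collapse to order zero'' you defer as bookkeeping is a statement about an object that is not defined. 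The paper avoids this by running the rigidity argument through $\D_1$ instead of $\D_3$: since $\D_3'$ has trivial \emph{left} kernel, one gets $\ker^L(\D)=\ker^L(\D_1)=\ker^L(\D_1')$, and two operators supported in $\{2,3\}$ with the same one-dimensional left kernel differ by right multiplication by a scalar sequence, giving $\D_1=\D_1'\alpha$; then $\D_3=\D_1^{-1}\D$, where $\D_1^{-1}$ does exist because $a$ and $b$ are non-vanishing. Your fallback of comparing the coefficients of $\D_1\D_3=\D_1'\D_3'$ term by term would also work (the four ``corner'' relations $x_{2i+1}=a_{2i+1}f_{2i+3}$, $y_{2i+1}=a_{2i+1}g_{2i+3}$, $z_{2i}=b_{2i}g_{2i+3}$, $w_{2i}=b_{2i}h_{2i+3}$ determine $\beta$ on odd indices, and the remaining two relations determine it on even indices), but that comparison has to be carried out as the actual proof rather than held in reserve.
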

\begin{proof}

The matrix form of $\D$ is
\begin{align*}
	\begin{pmatrix}
		\ddots&\ddots&\ddots&\ddots&\\
		&0&y_{2i}&z_{2i}&w_{2i}&\\
		&&x_{2i+1}&y_{2i+1}&z_{2i+1}&0&\\
		&&&\ddots&\ddots&\ddots&\ddots&
	\end{pmatrix}.
\end{align*}
By direct computation,
it admits the following factorization $\D=\D_1\D_3,$ where
\begin{align*}
\D_1=
\begin{pmatrix}
		\ddots&\ddots&\\
&y_{2i}-\frac{x_{2i+1}}{y_{2i+1}}z_{2i}&z_{2i}&\\
&&y_{2i+1}&z_{2i+1}-\frac{w_{2i}}{z_{2i}}y_{2i+1}&\\
&&&\ddots&\ddots&
\end{pmatrix},
\end{align*}
and
\begin{align*}
	\D_3=
	\begin{pmatrix}
		\ddots&\ddots&\ddots&\\
		&0&1&0&\\
		&&\frac{x_{2i+1}}{y_{2i+1}}&1&\frac{w_{2i}}{z_{2i}}&\\
		&&&0&1&0&\\
		&&&&\ddots&\ddots&\ddots&
	\end{pmatrix}.
\end{align*}
In addition,
the new sequences in \eqref{eq_13} also satisfy the factorization equation.

Now we prove the uniqueness.
Suppose that $\D=\hat{\D}_1\hat{\D}_3$ is another factorization.
Let $\ker^L(\D)$ denotes the left kernel of $\D$,
which is the set of  $x\in\mathbb{R}^\infty$ such that
\begin{align*}
x^T\D=0.
\end{align*}
For the difference operator $\hat{\D}_3$, satisfying the conditions in lemma \ref{lemma_d123},
its left kernel is empty.
This can be proved by solving the equation $x^T\hat{\D}_3=0$.
Therefore,
we have $$\ker^L(\D)=\ker^L(\hat{\D}_1)=\ker^L(\D_1).$$
Since $\D_1$ and $\hat{\D}_1$ have the same support $\{2,3\}$,
we have $\D_1=\hat{\D}_1\alpha$ for a quasi-periodic sequence $\alpha$ (a similar conclusion for right kernel appeared in \cite[Lemma 4.8]{refactorization2020}).
Finally,
the uniqueness of $\D_3$ up to the action \eqref{eq_13} can be obtained from the identity $\D_3=\D_1^{-1}\D$.
\end{proof}

\begin{theorem}\label{pro_gon}
	There is a one-to-one correspondence between the pair of twisted $N$-gons $(v^1,v^2)\in \mathscr{P}_{3,N}$ (modulo projective transformations),
	and three $2N$-periodic difference operators: $\D_1$, $\D_2$, $\D_3$ as in Lemma \ref{lemma_d123}, considered up to the action 
	\begin{align}\label{eq_map}
		\D_1\mapsto \alpha\D_1\beta^{-1},\quad
		\D_2\mapsto\alpha\D_2\gamma^{-1},\quad
		\D_3\mapsto\beta\D_3\gamma^{-1},
	\end{align}
	where $\alpha$, $\beta$, $\gamma$ are quasi-periodic sequences with the same monodromy.
\end{theorem}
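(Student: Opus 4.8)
The plan is to bolt the bijection together from the three facts already available: Proposition~\ref{pro_equi} identifies a pair of polygons with an order-$4$ difference operator on the interleaved $2N$-gon $\Phi$ of \eqref{def_phi}, Lemma~\ref{lemma_d123} shows that the geometry of the $Q$-map \eqref{def_example} forces the factored relations $\D_1\tilde\Phi=\D_2\Phi$ and $\tilde\Phi=\D_3\Phi$ of \eqref{eq_dd}, and Lemma~\ref{lemma_fac} makes the factorization rigid. I would first build the map from polygons to triples: given $(v^1,v^2)\in\mathscr P_{3,N}$, fix a lift $\Phi$ with monodromy $M\in\mathrm{SL}(4,\mathbb R)$, let $\tilde\Phi$ be the lift \eqref{def_phi2} of $(v^2,v^3)$ with $v^3$ produced by \eqref{def_example}, and extract $\D_1,\D_2,\D_3$ exactly as in the proof of Lemma~\ref{lemma_d123}.

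The first genuine step is to show this triple is well defined modulo \eqref{eq_map}. The construction involves exactly three independent choices: the scaling of $\Phi$ by a quasi-periodic scalar sequence $\gamma$, the scaling of $\tilde\Phi$ by $\beta$, and the separate scaling of each coplanarity equation by a sequence $\alpha$. I would substitute $\Phi\mapsto\Phi\gamma^{-1}$, $\tilde\Phi\mapsto\tilde\Phi\beta^{-1}$ and the rescaling $\alpha$ of the relations into \eqref{eq_dd} and read off that $(\D_1,\D_2,\D_3)$ changes precisely by $\D_1\mapsto\alpha\D_1\beta^{-1}$, $\D_2\mapsto\alpha\D_2\gamma^{-1}$, $\D_3\mapsto\beta\D_3\gamma^{-1}$; the constraint that $\alpha,\beta,\gamma$ share a monodromy is forced because $\Phi$, $\tilde\Phi$ and the relations all inherit the single monodromy factor of $M$. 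Finally a projective transformation of $(v^1,v^2)$ can be absorbed into the choice of lift, so the assignment descends to projective classes.

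For the inverse direction, starting from a triple $(\D_1,\D_2,\D_3)$ I would eliminate $\tilde\Phi$ from \eqref{eq_dd} to obtain $(\D_1\D_3-\D_2)\Phi=0$, take a basis of its $M$-twisted solution space, set $V_i^1=\Phi_{2i}$ and $V_i^2=\Phi_{2i+1}$, and apply Proposition~\ref{pro_equi} to get a well-defined projective class of pairs of polygons; the general-position clause in the definition of $\mathscr P_{3,N}$ has to be verified, and it holds generically. The two constructions are mutually inverse because $\tilde\Phi=\D_3\Phi$ reproduces the lift of $v^3$ via \eqref{def_example}, $\D_2$ is recovered from $\D_2\Phi=\D_1\D_3\Phi$, and the uniqueness part of Lemma~\ref{lemma_fac} shows that $\D_1$ and $\D_3$ are pinned down by the product $\D_1\D_3$ up to the action \eqref{eq_13} — so the residual freedoms on both sides match exactly those removed by \eqref{eq_map}.

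The step I expect to be the main obstacle is the solution count for $(\D_1\D_3-\D_2)\Phi=0$. Because $f_{2i}=h_{2i}=0$, the operator $\D_1\D_3-\D_2$ is not properly bounded — its top coefficient vanishes on one of the two residue classes modulo $2$ — so a naive order count over-estimates the dimension; one has to use the $2N$-periodic structure, as in \cite{refactorization2020}, to see that $\D_1\D_3-\D_2$ is a left multiple $\mathcal{E}\D$ of the properly bounded order-$4$ operator $\D$ furnished by Proposition~\ref{pro_equi}, with $\mathcal{E}$ invertible on the relevant space, so that the solution space is genuinely $4$-dimensional. Equivalently, the delicate bookkeeping is to check that the coefficients of the triple, modulo the three quasi-periodic sequences eliminated by \eqref{eq_map}, carry exactly the same number of parameters as $\mathscr P_{3,N}$ modulo projective equivalence.
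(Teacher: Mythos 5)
Your argument is correct, and the direction from triples of operators to polygons is essentially the paper's: you solve $(\D_1\D_3-\D_2)\Phi=0$, and your observation that $\D_1\D_3-\D_2$ fails to be properly bounded but equals an invertible (unipotent, row-operation) left multiple of a properly bounded order-$4$ operator is exactly how the paper reduces to Proposition~\ref{pro_equi} via the matrix \eqref{eq_d}. Where you genuinely diverge is the direction from polygons to triples. The paper never invokes the $Q$-map here: it takes the single operator $\D\in\PBDO_{2N}(\{0,1,2,3,4\})$ annihilating the interleaved lift $\Phi$, performs the inverse row operation to split it into $-\D_2$ supported in $\{0,1\}$ and $\D_4$ supported in $\{2,3,4,5\}$, and then applies Lemma~\ref{lemma_fac} to factor $\D_4=\D_1\D_3$; the triple is thus produced purely algebraically from $(v^1,v^2)$, and the action \eqref{eq_map} is matched against the $H\tilde{\times}H$-ambiguity of $\D$ plus the $\beta$-ambiguity of the factorization. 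You instead build $v^3$ geometrically via \eqref{def_example} and read the three operators off the coplanarity relations of Lemma~\ref{lemma_d123}, identifying the three scalings $\alpha,\beta,\gamma$ with the rescaling of the relations and of the two lifts $\Phi,\tilde\Phi$. Both routes work; the paper's buys independence from the genericity needed for the three planes in \eqref{def_example} to meet in a point (it only needs the nonvanishing of the pivots in the row reduction and the hypotheses of Lemma~\ref{lemma_fac}), and it cleanly separates the correspondence from the dynamics, whereas your route makes the geometric content transparent but front-loads the $Q$-map into the very definition of the phase-space coordinates. For your version to close, you should make explicit (as you gesture at in your last sentence) that in the round trip triple $\to$ polygons $\to$ triple the splitting of $\D_1\D_3-\D_2$ into its $\{0,1\}$ and $\{2,3,4,5\}$ parts is unique because the supports are disjoint, after which Lemma~\ref{lemma_fac} pins down $\D_1,\D_3$; with that said, the residual freedoms do match \eqref{eq_map} and the proof is complete.
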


\begin{proof}

Firstly,
given three $2N$-periodic difference operators $\D_j$, $j=1,2,3$, as in Lemma \ref{lemma_d123},
we can construct a vector $\Phi$ satisfying the equation \eqref{eq_ph}
\begin{align}\label{eq_linear}
\left(\D_1\D_3-\D_2\right)\Phi=0,
\end{align}
where $\D_1\D_3-\D_2$ is a difference operator supported in $\{0,1,2,3,4,5\}$.
Suppose that $\D_1=aT^2+bT^3$,
$\D_2=c+dT$ and $\D_3=f+gT+hT^2$ with
 $f_{2i}=h_{2i}=0$ for all $i\in\mathbb{Z}$.
Then the matrix form of the operator $$\D_1\D_3-\D_2=bh_3T^5+(bg_3+ah_2)T^4+(bf_3+ag_2)T^3+af_2T^2-dT-c$$ 
is
\begin{align}\label{eq_d}
	\small
\begin{pmatrix}
\ddots&\ddots&\ddots&\ddots&\ddots&\ddots&\\
&-c_{2i}&-d_{2i}&0&a_{2i}g_{2i+2}+b_{2i}f_{2i+3}&b_{2i}g_{2i+3}&b_{2i}h_{2i+3}&\\
&&-c_{2i+1}&-d_{2i+1}&a_{2i+1}f_{2i+3}&a_{2i+1}g_{2i+3}&a_{2i+1}h_{2i+3}+b_{2i+1}g_{2i+4}&0\\
&&&\ddots&\ddots&\ddots&\ddots&\ddots&\ddots
\end{pmatrix}.
\end{align}
Now we make some row transformations on the linear equations \eqref{eq_linear}.
Subtracting 
$
{b_{2i}h_{2i+3}}/{(a_{2i+1}h_{2i+3}+b_{2i+1}g_{2i+4})}$ times the $2i+1$ line from the $2i$ line to eliminate the term $b_{2i}h_{2i+3}$,
we can get a difference operator $\D$ with supports $\{0,1,2,3,4\}$ such that
\begin{align*}
\D\Phi=0.
\end{align*}
According to proposition \ref{pro_equi},
we can obtain a pair of twisted $N$-gons from the vector $\Phi$.
Taking another triple of difference operators
$ \alpha\D_1\beta^{-1},\quad
\alpha\D_2\gamma^{-1},\quad
\beta\D_3\gamma^{-1},$ 
a new solution is given by $\alpha\D\gamma^{-1}$,
and the corresponding pair of twisted $N$-gons are invariant.

On the other hand,
given two twisted $N$-gons $(v^1,v^2)\in\mathscr{P}_{3,N}$,
let $(V^1,V^2)$ be an arbitrary lift with the same monodromy $M\in\mathrm{SL}(\mathbb{R},4)$.
Define an infinite-dimensional vector $\Phi$ as in \eqref{def_phi}.
According to proposition \ref{pro_equi},
there is a unique $2N$-periodic difference $\D\in\PBDO_{2N}(\{0,1,2,3,4\})/H\tilde{\times}H$ such that
\begin{align*}
	\D\Phi=0.
\end{align*}
Suppose that $\D=x+yT+zT^2+wT^3+uT^4$,
and its matrix form is
\begin{align*}
\begin{pmatrix}
\ddots&\ddots&\ddots&\ddots&\ddots&\\
&x_{2i}&y_{2i}&z_{2i}&w_{2i}&u_{2i}&\\
&&x_{2i+1}&y_{2i+1}&z_{2i+1}&w_{2i+1}&u_{2i+1}&\\
&&&\ddots&\ddots&\ddots&\ddots&\ddots&\\
\end{pmatrix}.
\end{align*}
Subtracting ${z_{2i}}/{y_{2i+1}}$ times the $2i+1$ line from the $2i$ line to eliminate the term $z_{2i}$,
we can get a difference operator in the form \eqref{eq_d}
\begin{align}\label{eq_matrix4}
\begin{pmatrix}
	\ddots&\ddots&\ddots&\ddots&\ddots&\ddots&\\
	&*&*&0&\diamond&\diamond&\diamond&\\
	&&*&*&\diamond&\diamond&\diamond&0&\\
	&&&\ddots&\ddots&\ddots&\ddots&\ddots&\ddots&\\
\end{pmatrix}.
\end{align}
The elements in position $*$ constitute a difference operator with supports $\{0,1\}$,
and we denote it by $-\D_2$.
Besides, the elements in position $\diamond$ constitute a difference operator supported in $\{2,3,4,5\}$,
and we denote it by $\D_4$.
According Lemma \ref{lemma_fac},
there is a unique factorization $\D_4=\D_1\D_3$ up to the action \eqref{eq_13}.
Since $\D$ is unique up to the action
\begin{align*}
\D\mapsto
\alpha\D\gamma^{-1},\quad
(\alpha,\beta)\in H\tilde{\times}H,
\end{align*}
these three $2N$-periodic difference operators $\D_i$, $i=1,2,3$ are unique up to the action \eqref{eq_map}.
\end{proof}

\subsection{Refactorization}
According to Lemma \ref{lemma_d123},
we have
\begin{align}\label{eq_d0}
	\begin{split}
		\D_1\D_3{\Phi}&=\D_2\Phi,\\
		\tilde{\Phi}&=\D_3\Phi.
	\end{split}
\end{align}

\begin{lemma}\label{lemma_fa}

Suppose that $\D=fT^2+gT^3+hT^4+pT^5+qT^6$ and $\hat{\D}_2=a+bT$ are $2N$-periodic difference operators,
such that
$f_{2i}=q_{2i+1}=0$ and $b_i\neq 0,$ $\forall i\in\mathbb{Z}$.
If $\ker\hat{\D}_2\subset\ker \D$,
then the operator $\D$ admits a  factorization with the form
\begin{align*}
\D=\D_4\hat{\D}_2,
\end{align*}
where $\D_4=xT^2+yT^3+zT^4+wT^5$  such that $x_{2i}=w_{2i+1}=0$ for all $i\in\mathbb{Z}$.
\end{lemma}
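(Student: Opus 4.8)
The plan is to factor $\hat\D_2$ out of $\D$ on the right, inside the ring of genuine difference operators, and then to use the kernel hypothesis to kill the remainder.

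First I would perform the right-division term by term. Since $b_i\neq 0$ for every $i$, one can match the coefficients of $T^6,T^5,T^4,T^3$ of $(xT^2+yT^3+zT^4+wT^5)\hat\D_2$ against those of $\D$ and solve, uniquely and using only $b_i\neq0$, for the sequences $w,z,y,x$ in that order; explicitly $w_k=q_k/b_{k+5}$, then $z_k=(p_k-w_ka_{k+5})/b_{k+4}$, and so on. Setting $\D_4:=xT^2+yT^3+zT^4+wT^5$, the operator $\D-\D_4\hat\D_2$ then has vanishing coefficients at $T^3,\dots,T^6$, and since neither $\D$ nor $\D_4\hat\D_2$ has a term of degree below $2$, the only surviving term is
\begin{align*}
\D-\D_4\hat\D_2=sT^2,\qquad s_k=f_k-a_{k+2}x_k .
\end{align*}
The required zero pattern $w_{2i+1}=0$ is already visible here, from $w_k=q_k/b_{k+5}$ and the hypothesis $q_{2i+1}=0$.

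Next I would show $s\equiv 0$; this is the step that uses $\ker\hat\D_2\subseteq\ker\D$. Let $\psi$ generate the (one-dimensional) kernel of $\hat\D_2$, so $a_k\psi_k+b_k\psi_{k+1}=0$. Because $b_i\neq0$, the recursion $\psi_{k+1}=-b_k^{-1}a_k\psi_k$ forces $\psi$ to vanish on a whole half-line as soon as it vanishes at one index, hence --- $\psi$ being quasi-periodic with nonzero monodromy --- to vanish identically; so $\psi$ is nowhere zero. Applying $\D=\D_4\hat\D_2+sT^2$ to $\psi$ and using $\hat\D_2\psi=0$ and $\psi\in\ker\D$ yields
\begin{align*}
0=\D\psi=\D_4(\hat\D_2\psi)+sT^2\psi=sT^2\psi ,
\end{align*}
so $s_k\psi_{k+2}=0$ for all $k$, whence $s\equiv0$ and $\D=\D_4\hat\D_2$. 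Finally, from $s_{2i}=f_{2i}-a_{2i+2}x_{2i}=0$ and $f_{2i}=0$ (together with the nonvanishing of $a$, i.e. $\hat\D_2$ properly bounded) I get $x_{2i}=0$, which completes the zero pattern of $\D_4$.

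The main obstacle is this second step. A priori $\D\hat\D_2^{-1}$ is only a \emph{pseudo}-difference operator, whose negative tail need not terminate, so there is no reason for the quotient to be an honest difference operator supported in $\{2,3,4,5\}$; it is exactly $\ker\hat\D_2\subseteq\ker\D$ that forces termination. The cleanest route I see to exploit this is the one above: carry out the division inside the ring of difference operators, isolate the entire obstruction as the single coefficient sequence $s$, and kill $s$ by evaluating on a nowhere-vanishing element of $\ker\hat\D_2$ --- the same kernel-versus-support mechanism as in the proof of Lemma \ref{lemma_fac} and in \cite[Lemma 4.8]{refactorization2020}.
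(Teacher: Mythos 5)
Your proof is correct and follows essentially the same route as the paper's: right-divide $\D$ by $\hat{\D}_2$ by matching coefficients from the top degree down (which already yields $w_{2i+1}=0$ from $q_{2i+1}=0$), and use the kernel hypothesis to supply the one missing relation at the $T^2$ coefficient. Your packaging of that last step as ``the remainder $sT^2$ annihilates a nowhere-vanishing element of $\ker\hat{\D}_2$'' is a tidier version of what the paper does by substituting the identities $\hat{\D}_2\xi=0$ and $\D\xi=0$ into the coefficient equations; note that both arguments rely on the generic assumption that $a$ (hence the kernel generator $\psi$) is nowhere vanishing, which you correctly flag and the paper leaves implicit.
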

\begin{proof}

Suppose that $\xi\in\ker\hat{\D}_2$,
that is
$
\hat{\D}_2\xi=0,$
which yields $a_i\xi_i+b_{i}\xi_{i+1}=0$, $\forall i\in\mathbb{Z}$.
Besides, the identity $\D\xi=0$ give us
\begin{align*}
g_{2i}\xi_{2i+3}+h_{2i}\xi_{2i+4}+p_{2i}\xi_{2i+5}+q_{2i}\xi_{2i+6}&=0,\\
f_{2i+1}\xi_{2i+3}+g_{2i+1}\xi_{2i+4}+h_{2i+1}\xi_{2i+5}+p_{2i+1}\xi_{2i+6}&=0,\quad\forall i\in\mathbb{Z}.
\end{align*}
Comparing the coefficients of $T^j$ in $\D=\D_4\hat{\D}_2$ and using the identities above,
we obtain the solution for $\D_4=xT^2+yT^3+zT^4+wT^5$ 
\begin{align*}
x_{2i}=0,\quad
x_{2i+1}=
\frac{a_{2i+4}a_{2i+5}p_{2i+1}+b_{2i+4}b_{2i+5}g_{2i+1}-a_{2i+4}b_{2i+5}h_{2i+1}}{b_{2i+3}b_{2i+4}b_{2i+5}},\\
y_{2i}=\frac{a_{2i+4}a_{2i+5}q_{2i}+b_{2i+4}b_{2i+5}h_{2i}-a_{2i+4}b_{2i+5}p_{2i}}{b_{2i+3}b_{2i+4}b_{2i+5}},\quad
y_{2i+1}=
\frac{b_{2i+5}h_{2i+1}-a_{2i+5}p_{2i+1}}{b_{2i+4}b_{2i+5}},\\
z_{2i}=
\frac{b_{2i+5}p_{2i}-a_{2i+5}q_{2i}}{b_{2i+4}b_{2i+5}},\quad
z_{2i+1}=
\frac{p_{2i+1}}{b_{2i+5}},\quad
w_{2i}=
\frac{q_{2i}}{b_{2i+5}},\quad
w_{2i+1}=0,\quad
\forall i\in\mathbb{Z}.
\end{align*}
\end{proof}

\begin{proposition}\label{pro_map}

The equation
\begin{align}\label{eq_ref}
\left(\tilde{\D}_1^{-1}\tilde{\D}_2\right)\D_3=\tilde{\D}_3\left(\D_1^{-1}\D_2\right)
\end{align}
gives a well-defined mapping  of the space of triples of operators satisfying the conditions in lemma \ref{lemma_d123}, up to the action \eqref{eq_map}, to itself: $\D_i\mapsto \tilde{\D}_i$, $i=1,2,3$.

\end{proposition}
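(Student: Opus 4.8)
The plan is to recast \eqref{eq_ref} as a refactorization problem and solve it with the help of Lemmas \ref{lemma_fa} and \ref{lemma_fac}, in the spirit of \cite{refactorization2020}; the genuinely new feature is that one of the two factors being exchanged, $\D_1^{-1}\D_2$, is a pseudo-difference rather than a difference operator. The first step is the structural remark that, because of the vanishing conditions $f_{2i}=h_{2i}=0$, the operator $\D_3$ — though it does not lie in $\mathrm{I}\Psi\DO_{2N}$ — is a bijection of $\mathbb{R}^\infty$ whose inverse $\D_3^{-1}$ is again a finite difference operator, supported in $\{-2,-1,0\}$ and carrying the mirror mod-$2$ zero pattern; this is checked by solving $\D_3\Phi=\tilde{\Phi}$ directly, the even rows determining the odd entries of $\Phi$ and the odd rows then the even ones, and the same holds for $\tilde\D_3$. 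Consequently \eqref{eq_ref} is equivalent to the factorization identity
\begin{align*}
	\tilde\D_1\tilde\D_3\,\bigl(\D_1^{-1}\D_2\D_3^{-1}\bigr)=\tilde\D_2 ,\qquad\text{i.e.}\qquad \tilde\D_4\,\tilde{\mathcal{L}}=\tilde\D_2 ,
\end{align*}
where $\tilde\D_4:=\tilde\D_1\tilde\D_3$ and $\tilde{\mathcal{L}}:=\D_1^{-1}\D_2\D_3^{-1}=\D_3\mathcal{L}\D_3^{-1}$ is a fixed pseudo-difference operator built from the input triple. So the task is: from $(\D_1,\D_2,\D_3)$ produce a striped operator $\tilde\D_4=xT^2+yT^3+zT^4+wT^5$ with $x_{2i}=w_{2i+1}=0$ for which $\tilde\D_4\tilde{\mathcal{L}}$ collapses to a difference operator supported in $\{0,1\}$, together with its splitting $\tilde\D_4=\tilde\D_1\tilde\D_3$.

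For existence I would first observe that \eqref{eq_ref} forces $\tilde\D_2$: grouping $\tilde\D_2\D_3$ together in the equivalent form $\tilde\D_2\D_3\D_2^{-1}\D_1=\tilde\D_1\tilde\D_3$, the right-hand side is a difference operator while $\D_3\D_2^{-1}\D_1$ is genuinely infinite, so finiteness is equivalent to $\D_2$ right-dividing $\tilde\D_2\D_3$, that is, to $\ker\tilde\D_2=\D_3(\ker\D_2)$. This determines a $\{0,1\}$-supported operator $\tilde\D_2$ uniquely up to a scalar sequence. Then $\ker\D_2=\D_3^{-1}(\ker\tilde\D_2)\subseteq\ker(\tilde\D_2\D_3)$, so by the right-division lemma of \cite{refactorization2020} — of which Lemma \ref{lemma_fa} is the refinement that also propagates the mod-$2$ zero pattern — one may write $\tilde\D_2\D_3=\mathcal{F}\D_2$ with $\mathcal{F}$ supported in $\{0,1,2\}$ and $\mathcal{F}_{0,2i}=\mathcal{F}_{2,2i+1}=0$, and set $\tilde\D_4:=\mathcal{F}\D_1$. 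A direct computation shows $\tilde\D_4$ then has the required striped form, and $\tilde\D_4\tilde{\mathcal{L}}=\mathcal{F}\D_2\D_3^{-1}=\tilde\D_2\D_3\D_3^{-1}=\tilde\D_2$, so that \eqref{eq_ref} holds. Finally Lemma \ref{lemma_fac} splits $\tilde\D_4=\tilde\D_1\tilde\D_3$ uniquely up to \eqref{eq_13}, which in particular yields the striped $\{0,1,2\}$ form of $\tilde\D_3$; hence the output triple $(\tilde\D_1,\tilde\D_2,\tilde\D_3)$ again satisfies the hypotheses of Lemma \ref{lemma_d123}.

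For well-definedness, uniqueness up to the action \eqref{eq_map} follows from the fact that $\ker\tilde\D_2$ is forced (so $\tilde\D_2$ is determined up to a scalar) together with the uniqueness clause of Lemma \ref{lemma_fac} for the splitting of $\tilde\D_4$, which absorbs the remaining scalar freedom into the $\beta$ of \eqref{eq_13}. Independence of the chosen representative of $(\D_1,\D_2,\D_3)$ is the substitution check: under $\D_1\mapsto\alpha\D_1\beta^{-1}$, $\D_2\mapsto\alpha\D_2\gamma^{-1}$, $\D_3\mapsto\beta\D_3\gamma^{-1}$ one gets $\tilde{\mathcal{L}}\mapsto\beta\tilde{\mathcal{L}}\beta^{-1}$, $\ker\D_2\mapsto\gamma(\ker\D_2)$ and $\D_3(\ker\D_2)\mapsto\beta\,\D_3(\ker\D_2)$, and tracing these through the construction shows $(\tilde\D_1,\tilde\D_2,\tilde\D_3)$ changes only by an action of the form \eqref{eq_map}.

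The main obstacle I expect is exactly the verification, at the level of coefficients, that the forced kernel condition $\ker\tilde\D_2=\D_3(\ker\D_2)$ really does make $\tilde\D_2\D_3$ right-divisible by $\D_2$ and that the successive divisions carry the mod-$2$ zero patterns along as claimed (so that $\tilde\D_4=\mathcal{F}\D_1$ lands in exactly the form required by Lemma \ref{lemma_fac}). This is where the whole bookkeeping has to be organized around the reduction mod $2$ of all the operators involved, and where the argument genuinely goes beyond \cite{refactorization2020}, since there the analogue of $\D_1^{-1}\D_2$ is an honest difference operator and no pseudo-difference truncation step is needed.
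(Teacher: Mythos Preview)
Your argument is correct but follows a genuinely different route from the paper's. The paper first introduces auxiliary operators $\hat\D_1,\hat\D_2$ via a \emph{dual} refactorization $\D_1\hat\D_2=\D_2\hat\D_1$ (obtained by applying the standard result of \cite{refactorization2020} to $\D_1^*,\D_2^*$), so that $\D_1^{-1}\D_2=\hat\D_2\hat\D_1^{-1}$ and \eqref{eq_ref} becomes the purely difference-operator identity $\tilde\D_2(\D_3\hat\D_1)=(\tilde\D_1\tilde\D_3)\hat\D_2$; then $\tilde\D_2$ is fixed by $\ker\tilde\D_2=\D_3\hat\D_1(\ker\hat\D_2)$, Lemma~\ref{lemma_fa} applies verbatim to $\D_5:=\tilde\D_2(\D_3\hat\D_1)$ (support $\{2,\dots,6\}$) to produce $\D_4$, and Lemma~\ref{lemma_fac} splits it. You instead bypass $\hat\D_1,\hat\D_2$ altogether by observing that the mod-$2$ zero pattern makes $\D_3$ a bijection with finite inverse, so \eqref{eq_ref} is equivalent to $\tilde\D_2\D_3\D_2^{-1}\D_1=\tilde\D_1\tilde\D_3$, and then you right-divide $\tilde\D_2\D_3$ by $\D_2$ directly. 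Note that your kernel condition $\ker\tilde\D_2=\D_3(\ker\D_2)$ agrees with the paper's, since $\hat\D_1(\ker\hat\D_2)=\ker\D_2$ by the defining relation for $\hat\D_1,\hat\D_2$.

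What each approach buys: your route is more direct and makes the invertibility of $\D_3$ do the work; the price is that Lemma~\ref{lemma_fa} as stated (supports $\{2,\dots,6\}$ divided by $\{0,1\}$) does not literally cover your step $\tilde\D_2\D_3=\mathcal F\D_2$ (supports $\{0,\dots,3\}$ divided by $\{0,1\}$), so you must redo that coefficient computation --- which you correctly flag as the remaining bookkeeping, and which indeed goes through (the vanishing of the $T^0$-coefficient of $\tilde\D_2\D_3$ on even rows and of the $T^3$-coefficient on odd rows force $\mathcal F_{0,2i}=\mathcal F_{2,2i+1}=0$, whence $\tilde\D_4=\mathcal F\D_1$ has exactly the striped form needed for Lemma~\ref{lemma_fac}). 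The paper's route is slightly longer here, but the auxiliary pair $(\hat\D_1,\hat\D_2)$ is reused verbatim in the proofs of Lemma~\ref{lemma_bi} and of the Poisson property, where the map is decomposed as $(\D_1,\D_2,\D_3)\to(\hat\D_1,\hat\D_2,\D_3)\to(\tilde\D_1,\tilde\D_2,\tilde\D_3)$; so the detour pays dividends later in the section.
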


\begin{proof}

Firstly,
we will construct two difference operators $\hat{\D}_i$, $i=1,2,$ such that
\begin{align}\label{eq_d12}
	\D_1\hat{\D}_2=\D_2\hat{\D}_1,
\end{align}
where $\hat{\D}_1$ is supported in $\{2,3\}$ and $\hat{\D}_2$ is supported in $\{0,1\}$.
Recall that the duality of a difference operator $\D=\sum a_iT^i$ is defined as
\begin{align*}
	\D^*=\sum T^{-i}a_i.
\end{align*} 
This corresponds to transposition of the operator matrix.
Consider the duality of \eqref{eq_d12}
\begin{align*}
\hat{\D}_2^*\D_1^*=\hat{\D}_1^*\D_2^*.
\end{align*}
This is a standard refactorization mapping,
so the existence and uniqueness of $\hat{\D}_i^*$, $i=1,2$, up to a left action $\alpha\hat{\D}_i^*$ for a quasi-periodic sequence $\alpha$,
can be obtained according to \cite[Lemma 4.10]{refactorization2020}.
Then taking duality of $\hat{\D}_i^*$, $i=1,2,$ 
we could get $\hat{\D}_i$, $i=1,2$.

Substituting $\D_1^{-1}\D_2=\hat{\D}_2\hat{\D}_1^{-1}$ into \eqref{eq_ref},
we get
\begin{align}\label{eq_re2}
\tilde{\D}_2\left(\D_3\hat{\D}_1\right)=\left(\tilde{\D}_1\tilde{\D}_3\right)\hat{\D}_2.
\end{align}
Here $\D_3\hat{\D}_1$, $\hat{\D}_2$ are known and we would like to solve $\tilde{\D}_i$, $i=1,2,3$.
Applying both sides of \eqref{eq_re2} to any $\xi\in\ker\hat{\D}_2$,
we get $\tilde{\D}_2\left(\D_3\hat{\D}_1\right)\xi=0$,
meaning that
\begin{align*}
\D_3\hat{\D}_1\left(\ker \hat{\D}_2\right)\subset\ker\tilde{\D}_2.
\end{align*}
Since the operators $\hat{\D}_2$ and $\hat{\D}_1$ have trivially intersecting kernels,
and the kernel of $\D_3$ is empty,
it follows that $$\dim\ker\tilde{\D}_2=\dim\D_3\hat{\D}_1\left(\ker \hat{\D}_2\right)=\dim\ker\hat{\D}_2=1.$$
Suppose that $\xi\in\mathbb{R}^\infty$ is a nontrivial element in $\ker\hat{\D}_2$,
then $\eta:=\D_3\hat{\D}_1\xi$ is an element in $\ker\tilde{\D}_2$.
We can construct $\tilde{\D}_2$ as
\begin{align*}
\tilde{\D}_2=\eta T-\eta_1,
\end{align*}
where $\eta_1$ is the sequence $T\eta$.
Here $\tilde{\D}_2$ is unique up to the left action $\alpha\tilde{\D}_2$ for a quasi-periodic sequence $\alpha$.

Next,
we solve $\tilde{\D}_1$ and $\tilde{\D}_3$.
Let $\D_4:=\tilde{\D}_1\tilde{\D}_3$ and $\D_5:=\tilde{\D}_2\left(\D_3\hat{\D}_1\right)$,
then the equation \eqref{eq_re2} can be written as
\begin{align*}
\D_4\hat{\D}_2=\D_5.
\end{align*}
Since $\D_4$ and $\D_5$ satisfy the conditions of Lemma \ref{lemma_fa},
we can get a solution for $\D_4$.
According to Lemma \ref{lemma_fac},
we factor $\D_4$ and get $\tilde{\D}_1$, $\tilde{\D}_3$.
Taking another triple of difference operators
$ \alpha\D_1\beta^{-1},$ $
\alpha\D_2\gamma^{-1},$ $
\beta\D_3\gamma^{-1},$ 
a new solution for $\hat{\D}_1$, $\hat{\D}_2$ is given by
\begin{align*}
\beta\hat{\D}_2\delta^{-1},\quad
\gamma\hat{\D}_1\delta^{-1},
\end{align*}
and the new solution for $\tilde{\D}_i$, $i=1,2,3$, is
\begin{align*}
\rho\tilde{\D}_1\epsilon^{-1},\quad
\rho\tilde{\D}_2\beta^{-1},\quad
\epsilon\tilde{\D}_3\beta^{-1},
\end{align*} 
for some quasi-periodic sequences $\alpha$, $\beta$, $\gamma$, $\epsilon,$ $\delta$, $\rho$ with the same monodromy.
\end{proof}

\begin{theorem}
The map \eqref{map1},
written in terms of difference operators,
is a refactorization map determined by \eqref{eq_ref}.

\end{theorem}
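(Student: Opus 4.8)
The plan is to play the two eigenfunction identities of Lemma~\ref{lemma_d123} off against each other at two successive applications of the $Q$-map, eliminate the intermediate eigenfunction to obtain equation~\eqref{eq_ref}, and then invoke the uniqueness of the refactorization map from Proposition~\ref{pro_map}. Fix $(v^1,v^2)\in\mathscr{P}_{3,N}$ with associated triple $(\D_1,\D_2,\D_3)$ (Theorem~\ref{pro_gon}); let $\Phi$ be the lift-sequence~\eqref{def_phi} of $(v^1,v^2)$, a fundamental solution of $\D\Phi=0$ for the properly bounded fourth-order operator $\D$ supplied by Proposition~\ref{pro_equi}, and let $\tilde\Phi$ be the lift-sequence~\eqref{def_phi2} of $T_Q(v^1,v^2)=(v^2,v^3)$. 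Lemma~\ref{lemma_d123} gives $\D_1\tilde\Phi=\D_2\Phi$ and $\tilde\Phi=\D_3\Phi$. Since $T_Q$ maps $\mathscr{P}_{3,N}$ into itself (generically), Theorem~\ref{pro_gon} attaches to $(v^2,v^3)$ a triple $(\tilde\D_1,\tilde\D_2,\tilde\D_3)$, and Lemma~\ref{lemma_d123} applied to the next step $(v^2,v^3)\mapsto(v^3,v^4)$ gives $\tilde\D_1\tilde{\tilde\Phi}=\tilde\D_2\tilde\Phi$ and $\tilde{\tilde\Phi}=\tilde\D_3\tilde\Phi$, where $\tilde{\tilde\Phi}$ is the lift-sequence of $(v^3,v^4)$, the lifts chosen consistently.

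The first real step is the elimination. From the second pair, $(\tilde\D_1^{-1}\tilde\D_2)\tilde\Phi=\tilde{\tilde\Phi}=\tilde\D_3\tilde\Phi$; substituting $\tilde\Phi=\D_3\Phi$ on the left and $\tilde\Phi=(\D_1^{-1}\D_2)\Phi$ (equivalent to $\D_1\tilde\Phi=\D_2\Phi$) on the right gives
\begin{align*}
\Bigl((\tilde\D_1^{-1}\tilde\D_2)\D_3-\tilde\D_3(\D_1^{-1}\D_2)\Bigr)\Phi=0 .
\end{align*}
Call the bracketed pseudo-difference operator $\mathcal{E}$; since $\Phi$ is fundamental, $\mathcal{E}$ kills the whole four-dimensional space $\ker\D$. \emph{The main obstacle is to upgrade this to the operator identity $\mathcal{E}=0$}, which is precisely~\eqref{eq_ref}: a nonzero pseudo-difference operator can have an infinite-dimensional kernel, so annihilating a finite-dimensional solution space is not by itself enough. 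To get around this I would bring in the auxiliary difference operators $\hat\D_1$ (supported in $\{2,3\}$) and $\hat\D_2$ (supported in $\{0,1\}$) constructed in the proof of Proposition~\ref{pro_map}, which satisfy $\D_1\hat\D_2=\D_2\hat\D_1$, hence $\D_1^{-1}\D_2=\hat\D_2\hat\D_1^{-1}$. Multiplying $\mathcal{E}$ on the left by $\tilde\D_1$ and on the right by $\hat\D_1$ clears both pseudo-differential factors and leaves the genuine difference operator
\begin{align*}
\mathcal{G}:=\tilde\D_1\,\mathcal{E}\,\hat\D_1=\tilde\D_2\D_3\hat\D_1-\tilde\D_1\tilde\D_3\hat\D_2 ,
\end{align*}
supported in $\{2,3,4,5,6\}$. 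Because $\mathcal{E}$ kills $\ker\D$ and $\hat\D_1$ is onto, $\mathcal{G}$ kills $\ker(\D\hat\D_1)$, a space of dimension $5$ since $\D\hat\D_1$ is properly bounded of order $5$; but a difference operator supported in a window of length $5$ that vanishes on a $5$-dimensional solution space is identically zero. So $\mathcal{G}=0$, and as $\tilde\D_1$ and $\hat\D_1$ are invertible in the algebra of pseudo-difference operators, $\mathcal{E}=0$: equation~\eqref{eq_ref} holds for the triples $(\D_i)$ and $(\tilde\D_i)$.

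To finish I would appeal to uniqueness. By Proposition~\ref{pro_map}, \eqref{eq_ref} determines the refactorization image of $(\D_1,\D_2,\D_3)$ uniquely up to the action~\eqref{eq_map}, and we have just exhibited the triple associated to $T_Q(v^1,v^2)$ as one such image; therefore, transported through the bijection of Theorem~\ref{pro_gon}, the geometric map $T_Q$ is exactly the refactorization map determined by~\eqref{eq_ref}. The step that I expect to carry the weight is the passage from $\mathcal{E}\Phi=0$ to $\mathcal{E}=0$ — the clearing of denominators by $\hat\D_1,\hat\D_2$ and the order count for $\mathcal{G}$ — everything else being a direct unwinding of the eigenfunction identities already established.
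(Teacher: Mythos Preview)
Your route is the reverse of the paper's, and that reversal is exactly where the difficulty lies. The paper does not try to derive the operator identity \eqref{eq_ref} from the geometry; it takes $(\tilde\D_1,\tilde\D_2,\tilde\D_3)$ to be the refactorization image furnished by Proposition~\ref{pro_map}, applies the already-valid identity \eqref{eq_ref} to $\Phi$, uses $\D_3\Phi=\D_1^{-1}\D_2\Phi$ to obtain $(\tilde\D_1\tilde\D_3-\tilde\D_2)(\D_3\Phi)=0$, and reads off via Theorem~\ref{pro_gon} that the polygon attached to $(\tilde\D_i)$ is $\D_3\Phi=\tilde\Phi$, i.e.\ $T_Q(v^1,v^2)$. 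No upgrading from eigenfunction identity to operator identity is needed in that direction.

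Your argument, by contrast, defines $(\tilde\D_i)$ geometrically and then has to \emph{prove} \eqref{eq_ref}, and the step you flag as the main obstacle does not go through as written. The claim that $\mathcal G$ annihilates all of $\ker(\D\hat\D_1)$ rests on reading $\mathcal G\xi=\tilde\D_1\bigl(\mathcal E(\hat\D_1\xi)\bigr)$ pointwise, but $\mathcal E$ contains $\tilde\D_1^{-1}$ and $\D_1^{-1}$, which are genuine pseudo-difference operators and do not act on bi-infinite sequences; the equality $\mathcal G=\tilde\D_1\mathcal E\hat\D_1$ is only an identity in $\Psi\DO_{2N}$, not an identity of maps on $\mathbb R^\infty$. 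If you compute honestly, using only $(\tilde\D_1\tilde\D_3-\tilde\D_2)\D_3\Phi=0$ and $\D_1\hat\D_2=\D_2\hat\D_1$, you get for $\xi\in\ker(\D\hat\D_1)$
\[
\mathcal G\xi=\tilde\D_1\tilde\D_3\bigl(\D_3\hat\D_1-\hat\D_2\bigr)\xi,
\]
and for $\xi\in\ker\hat\D_1\subset\ker(\D\hat\D_1)$ this reduces to $-\tilde\D_1\tilde\D_3\hat\D_2\xi$, which you have given no reason to vanish. So the eigenfunction identities only pin down $\mathcal G$ on a $4$-dimensional subspace, not the $5$-dimensional one your order count needs. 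The paper's direction sidesteps this entirely: since Proposition~\ref{pro_map} already supplies the operator identity, one only has to check an eigenfunction equation, which is a two-line computation.
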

\begin{proof}
Firstly,
we construct $\Phi$ and $\tilde{\Phi}$ using \eqref{def_phi} and \eqref{def_phi2}.
Applying both sides of \eqref{eq_ref} to $\Phi$,
we get
\begin{align*}
\left(\tilde{\D}_1^{-1}\tilde{\D}_2\right)\D_3
\Phi
=\tilde{\D}_3\left(\D_1^{-1}\D_2\right)
\Phi,
\end{align*}
which,
using $\D_1\D_3\Phi=\D_2\Phi$ and thus $\D_3\Phi=\D_1^{-1}\D_2\Phi$,
can be rewritten as
\begin{align*}
\tilde{\D}_1\tilde{\D}_3\D_3
\Phi
=\tilde{\D}_2\D_3
\Phi
.
\end{align*}
This means that $\tilde{\Phi}=\D_3\Phi$,
which is just \eqref{eq_dd}.
\end{proof}

\begin{lemma}\label{lemma_bi}
Suppose that $\J_1=\{2,3\}$, $\J_2=\{0,1\}$ and $\J_3=\{0,1,2\}$.
Let $\SDO_{2N}(\J_3)$ stand for the set of difference operators $f+gT+hT^2\in\DO_{2N}(\J_3)$ with $f_{2i}=h_{2i+1}=0$, $\forall i\in\mathbb{Z}$.
The map 
\begin{align*}
\PBDO_{2N}(\J_1)\times\PBDO_{2N}(\J_2)\times\SDO_{2N}(\J_3)/\hhh\\
\rightarrow
\PBDO_{2N}(\J_2)^{-1}\PBDO_{2N}(\J_1)\SDO_{2N}(\J_3)/\Ad H
\end{align*}
given by
\begin{align*}
\left({\D}_1,\,{\D}_2,\,\D_3\right)
\longmapsto
{\D}_2^{-1}\D_1{\D}_3
\end{align*}
is generically a bijection,
where $\hhh$ is the group of triples of non-vanishing $2N$-quasi-periodic sequences with the same monodromy acting by the action \eqref{eq_map}.

\end{lemma}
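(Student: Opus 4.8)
The map $(\D_1,\D_2,\D_3)\mapsto\D_2^{-1}\D_1\D_3$ is, up to taking inverses, the passage to the Lax operator $\mathcal L=\D_3^{-1}\D_1^{-1}\D_2$ of Section \ref{subsec_re}, and the plan is to invert it using the geometric correspondence of Theorem \ref{pro_gon}. I would split the argument into three steps: the map descends to the indicated quotients, it is tautologically onto the stated image, and it is generically injective because it admits a left inverse built from the fixed subspace of $\D_2^{-1}\D_1\D_3$. For the first step one simply computes, for $(\alpha,\beta,\gamma)\in\hhh$,
\begin{align*}
(\alpha\D_2\gamma^{-1})^{-1}(\alpha\D_1\beta^{-1})(\beta\D_3\gamma^{-1})=\gamma\,\bigl(\D_2^{-1}\D_1\D_3\bigr)\,\gamma^{-1},
\end{align*}
so that the action \eqref{eq_map} on triples is transported to the conjugation action $\Ad\gamma$, $\gamma\in H$, on the product; hence the assignment is well defined on the quotients, and by the very definition of the target it is onto. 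Only generic injectivity needs work.

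For injectivity I would exhibit a left inverse. Given $\mathcal M:=\D_2^{-1}\D_1\D_3$ in the image, put $K:=\ker(\mathcal M-\mathrm{Id})$. Since $\mathcal M-\mathrm{Id}=\D_2^{-1}(\D_1\D_3-\D_2)$ and $\D_2$ is invertible, $K=\ker(\D_1\D_3-\D_2)$, which is exactly the $4$-dimensional solution space of the linear problem \eqref{eq_linear} and carries a $2N$-monodromy. Applying to $\D_1\D_3-\D_2$ the row reduction used in the proof of Theorem \ref{pro_gon} produces a difference operator $\D\in\PBDO_{2N}(\{0,1,2,3,4\})$ with $\ker\D=K$, unique up to the action of $H\tilde\times H$, and Proposition \ref{pro_equi} turns $\D$ into a pair of twisted $N$-gons $(v^1,v^2)\in\mathscr{P}_{3,N}$ modulo projective transformations. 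Every step here uses only $\mathcal M$, and replacing $\mathcal M$ by $\lambda\mathcal M\lambda^{-1}$ (with $\lambda\in H$) replaces $K$ by $\lambda K$, a different lift of the same projective polygon pair; so the construction descends to a well-defined map from the target quotient to $\mathscr{P}_{3,N}$ modulo projective transformations, and composing it with the inverse of the bijection of Theorem \ref{pro_gon} yields the candidate left inverse. That this is genuinely a left inverse is then immediate: for a triple $(\D_1,\D_2,\D_3)$, the fixed subspace of $\mathcal M$ is the solution space of $(\D_1\D_3-\D_2)\Phi=0$, i.e.\ precisely the lift $\Phi$ representing the pair $(v^1,v^2)$ attached to $(\D_1,\D_2,\D_3)$ by Lemma \ref{lemma_d123}, whence Theorem \ref{pro_gon} returns the class one started from.

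Everything above is valid only within the genericity already assumed in Proposition \ref{pro_equi} and Theorem \ref{pro_gon}: proper bounded-ness of $\D_1$ and $\D_2$ (so the inverses exist), non-vanishing of the pivots in the row reduction, and general position of the output polygons (so that indeed $(v^1,v^2)\in\mathscr{P}_{3,N}$ and $\D\in\PBDO_{2N}$). The step I expect to be the main obstacle is checking that, for generic $\mathcal M$ in the image, the subspace $K=\ker(\mathcal M-\mathrm{Id})$ is in general position, so that Proposition \ref{pro_equi} applies and the associated $\D$ is properly bounded; a secondary but unavoidable chore is keeping track of the monodromies of $\alpha,\beta,\gamma$ and $\lambda$ so that the quotient by $\hhh$ on the source matches the quotient by $\Ad H$ on the target exactly. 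A more hands-on alternative would recover $\D_2$ up to a scalar sequence from the requirement that $\D_2\mathcal M$ be a difference operator supported in $\{2,3,4,5\}$ and then factor that operator by Lemma \ref{lemma_fac}; this also works, but the support estimates are less transparent because $\D_1\D_3$ fails to be properly bounded, its leading coefficient vanishing at the even sites.
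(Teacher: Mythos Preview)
Your argument is correct but follows a genuinely different route from the paper's. You recover the triple from $\mathcal{M}=\D_2^{-1}\D_1\D_3$ by passing through the fixed subspace $K=\ker(\mathcal{M}-\mathrm{Id})=\ker(\D_1\D_3-\D_2)$, reading off the associated polygon pair via the row reduction of Theorem~\ref{pro_gon}, and then invoking the bijection of that theorem to return to the $\hhh$-orbit; injectivity thus becomes a corollary of the geometric correspondence. The paper, by contrast, stays entirely inside the operator algebra: assuming $\D_2^{-1}\D_1\D_3$ and $\bar\D_2^{-1}\bar\D_1\bar\D_3$ are $\Ad H$-conjugate, it applies Proposition~\ref{pro_map} (the well-definedness of the refactorization) to the duals $\D_i^*$, rewrites $\D_2^{-1}\D_1\D_3$ as a product of the form $(\tilde\D_3^*\tilde\D_1^*)(\tilde\D_2^*)^{-1}$, and then uses the uniqueness part of Proposition~\ref{pro_map} once more to force $(\D_i)$ and $(\bar\D_i)$ into the same orbit. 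Your approach is more transparent about \emph{why} the map is invertible (the product remembers the kernel, hence the polygon, hence the triple), at the price of leaning on the full strength of Theorem~\ref{pro_gon}; the paper's approach is self-contained in the pseudo-difference-operator formalism and avoids the geometric detour, though its final invocation of Proposition~\ref{pro_map} is compressed. Your closing ``hands-on alternative''---recover $\D_2$ by requiring $\D_2\mathcal{M}$ to be a genuine difference operator with the right support, then factor via Lemma~\ref{lemma_fac}---is in fact the route closest in spirit to what the paper does.
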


\begin{proof}
It is clearly surjective by definition of the codomain, so it suffices to prove injectivity.
Suppose that ${\D}_2^{-1}\D_1{\D}_3$ is conjugate to $\bar{\D}_2^{-1}\bar{\D}_1\bar{\D}_3$,
that is
\begin{align*}
{\D}_2^{-1}\D_1{\D}_3=
\alpha^{-1}
\bar{\D}_2^{-1}\bar{\D}_1\bar{\D}_3
\alpha
=
\left(\bar{\D}_2\alpha
\right)^{-1}
\bar{\D}_1\bar{\D}_3
\alpha
\end{align*}
for some $2N$-periodic sequence $\alpha\in H$.
According to proposition \ref{pro_map},
there exist three difference operators $\tilde{\D}_i$, $i=1,2,3$, such that the following equation (similar to \eqref{eq_re2})
\begin{align*}
\tilde{\D}_2\left(\D_3^*{\D}_1^*\right)=\left(\tilde{\D}_1\tilde{\D}_3\right){\D}_2^*.
\end{align*}
Taking its dual,
we get
\begin{align*}
{\D}_2\left(\tilde{\D}_3^*\tilde{\D}_1^*\right)=\left(
{\D}_1{\D}_3\right)\tilde{\D}_2^*,
\end{align*}
which yields
\begin{align*}
{\D}_2^{-1}\D_1{\D}_3=
\left(\tilde{\D}_3^*\tilde{\D}_1^*\right){(\tilde{\D}_2^*)}^{-1}
=
\left(\bar{\D}_2\alpha
\right)^{-1}
\bar{\D}_1\bar{\D}_3
\alpha,
\end{align*}
that is
\begin{align*}
\left(\bar{\D}_2\alpha
\right)\left(\tilde{\D}_3^*\tilde{\D}_1^*\right)
=
\bar{\D}_1\bar{\D}_3
\alpha\tilde{\D}_2^*.
\end{align*}
According to proposition \ref{pro_map},
$\D_i$ and $\bar{\D}_i$, $i=1,2,3$, are in the same orbit of \eqref{eq_map}.
\end{proof}

\begin{proposition}

The map in proposition \ref{pro_map} is a Poisson map.

\end{proposition}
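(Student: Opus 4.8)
The plan is to push the statement through the identification of Lemma~\ref{lemma_bi} and then recognise the refactorization map of Proposition~\ref{pro_map} as a dressing-type (``twist'') transformation in the Poisson--Lie group $\mathrm{I}\Psi\DO_{2N}$, for which the Poisson property follows from the mechanism of \cite{refactorization2020}, extended so as to allow a pseudo-difference factor.

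First I would fix the Poisson structures. By Lemma~\ref{lemma_bi} the space of triples $(\D_1,\D_2,\D_3)$ modulo \eqref{eq_map} is identified with the set of pseudo-difference operators $\D_2^{-1}\D_1\D_3$ modulo $\Ad H$, and this carries the Poisson structure obtained from the natural Poisson--Lie bracket on $\mathrm{I}\Psi\DO_{2N}$ by the $\Ad H$-reduction, exactly as in \cite{refactorization2020}. Passing to the Lax operator $\mathcal{L}:=\D_3^{-1}(\D_1^{-1}\D_2)=(\D_2^{-1}\D_1\D_3)^{-1}$ only reverses the sign of this bracket, since inversion is anti-Poisson on a Poisson--Lie group, so it is harmless and I would work with $\mathcal{L}$. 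Next I would rewrite \eqref{eq_ref} in terms of $\mathcal{L}$: as recorded right after \eqref{eq_ref}, it is equivalent to $\tilde{\mathcal{L}}=\D_3\,\mathcal{L}\,\D_3^{-1}=(\D_1^{-1}\D_2)\D_3^{-1}$. Hence the map is conjugation of $\mathcal{L}$ by the difference operator $\D_3$, where $\D_3\in\SDO_{2N}(\J_3)$ and the complementary factor $\D_1^{-1}\D_2\in\PBDO_{2N}(\J_2)^{-1}\PBDO_{2N}(\J_1)$ are recovered from $\mathcal{L}$ by the bijection of Lemma~\ref{lemma_bi}. This is precisely the shape of a refactorization/dressing transformation attached to a factorization of $\mathrm{I}\Psi\DO_{2N}$, and such transformations are Poisson by the argument used for the pentagram map in \cite{refactorization2020} (ultimately the Poisson--Lie theory of Semenov-Tian-Shansky), provided the factorization is compatible with the bracket.

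That proviso is the crux, and the reason the abstract speaks of generalising Izosimov's method. In \cite{refactorization2020} both factors are difference operators supported on disjoint arithmetic progressions spanning opposite Poisson--Lie subgroups, which makes the refactorization manifestly Poisson; here the factorization is asymmetric, since the partner of the difference operator $\D_3\in\SDO_{2N}(\J_3)$ is a genuine pseudo-difference operator $\D_1^{-1}\D_2$. I would therefore check that, with respect to the Sklyanin-type bracket on $\mathrm{I}\Psi\DO_{2N}$, both $\SDO_{2N}(\J_3)$ and the subvariety $\PBDO_{2N}(\J_2)^{-1}\PBDO_{2N}(\J_1)$ are coisotropic --- equivalently, that the bracket of any two functions constant along one of these subsets is again constant along it --- and that the multiplication map restricted to their product is Poisson onto its image; then the twist $\mathcal{L}\mapsto\D_3\mathcal{L}\D_3^{-1}$ is Poisson and descends to the $\Ad H$-quotients. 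The main obstacle is precisely this coisotropy check on the pseudo-difference side; once it is established, the conclusion follows by the same formal computation as in \cite{refactorization2020}. A more hands-on alternative would be to combine the explicit factorizations of Lemmas~\ref{lemma_fac} and~\ref{lemma_fa} with the coordinates $a_i^j,\dots,h_i^j$ to verify invariance of the bracket directly, but the structural route above is preferable.
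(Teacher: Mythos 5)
Your proposal stops exactly where the real work begins, so it has a genuine gap. You correctly reduce the problem to showing that the asymmetric refactorization attached to the factorization $\mathcal{L}^{-1}=\D_2^{-1}\D_1\D_3$ is Poisson, and you correctly identify that the standard mechanism of \cite{refactorization2020} does not apply verbatim because one factor, $\D_1^{-1}\D_2$, is a genuine pseudo-difference operator and the set $\PBDO_{2N}(\J_2)^{-1}\PBDO_{2N}(\J_1)$ is not a Poisson--Lie subgroup. But you then declare the required coisotropy of this set (and of $\SDO_{2N}(\J_3)$) with respect to the Sklyanin-type bracket, together with the Poisson property of the restricted multiplication, to be ``the main obstacle'' and leave it unverified. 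That verification \emph{is} the content of the proposition; without it the argument is a plan, not a proof. It is also not clear that the coisotropy statement, as you phrase it, is the right invariant to check, since $\D_3$ in the conjugation $\mathcal{L}\mapsto\D_3\mathcal{L}\D_3^{-1}$ is itself a function of $\mathcal{L}$ determined by the very factorization whose Poisson compatibility is in question.

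The paper avoids this difficulty by a different decomposition that never requires a new coisotropy computation. It first applies the \emph{inverse} of the standard (difference-operator) refactorization map to rewrite $\D_1^{-1}\D_2=\hat{\D}_2\hat{\D}_1^{-1}$; this step is Poisson because it is the inverse of a map already proved Poisson in \cite{refactorization2020}. After this substitution, equation \eqref{eq_ref} becomes $\tilde{\D}_2(\D_3\hat{\D}_1)=(\tilde{\D}_1\tilde{\D}_3)\hat{\D}_2$, which involves only difference operators. The remaining step is then the composition of the multiplication map $(\hat{\D}_1,\hat{\D}_2,\D_3)\mapsto\D_3\hat{\D}_1\hat{\D}_2^{-1}$ (Poisson, since multiplication is Poisson and inversion is anti-Poisson on the Poisson--Lie group of pseudo-difference operators) with the inverse of the bijection of Lemma \ref{lemma_bi}. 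If you want to salvage your route, you would need to either carry out the coisotropy computation for $\PBDO_{2N}(\J_2)^{-1}\PBDO_{2N}(\J_1)$ explicitly, or, more economically, insert the intermediate refactorization $(\D_1,\D_2)\mapsto(\hat{\D}_1,\hat{\D}_2)$ as the paper does, which converts the problem into one already covered by the existing theory.
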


\begin{proof}

The map determined by equation \eqref{eq_ref} can be treated as the composition of the following two maps 
\begin{align*}
\left(\D_1,\,\D_2,\,\D_3\right)\longrightarrow
\left(\hat{\D}_1,\,\hat{\D}_2,\,{\D}_3\right)\longrightarrow
\left(\tilde{\D}_1,\,\tilde{\D}_2,\,\tilde{\D}_3\right).
\end{align*}
The first arrow is determined by \eqref{eq_d12}.
This is the inverse of the refactorization map \cite{refactorization2020},
so
it is Poisson.
The second arrow can be treated as the following composition
\begin{align*}
\left(\hat{\D}_1,\,\hat{\D}_2,\,{\D}_3\right)\longrightarrow
\D_3\hat{\D}_1\hat{\D}_2^{-1}=\tilde{\D}_2^{-1}\tilde{\D}_1\tilde{\D}_3
\longrightarrow
\left(\tilde{\D}_1,\,\tilde{\D}_2,\,\tilde{\D}_3\right).
\end{align*}
The first arrow is a map 
\begin{align*}
\PBDO_{2N}(\J_1)\times\PBDO_{2N}(\J_2)\times\SDO_{2N}(\J_3)/\hhh
\\
\rightarrow
\PBDO_{2N}(\J_2)^{-1}\PBDO_{2N}(\J_1)\SDO_{2N}(\J_3)/\Ad H.
\end{align*}
According to proposition \ref{pro_map},
this map is well-defined.
Besides,
it is Poisson,
since multiplication in the group of pseudo-difference operators is Poisson,
inversion is anti-Poisson.
By lemma \ref{lemma_bi},
the second arrow is generically invertible.
Therefore, the whole map is a composition of Poisson maps and hence Poisson.
\end{proof}
The equation \eqref{eq_ref} is invariant under the transformation
\begin{align*}
\D_2\mapsto z\D_2,\quad
\tilde{\D}_2\mapsto z\tilde{\D}_2,
\end{align*} 
under which,
\eqref{eq_d0} becomes
\begin{align}
	\begin{split}
		\D_1\D_3{\Phi}&=z\D_2\Phi,\\
		\tilde{\Phi}&=\D_3\Phi.
	\end{split}
\end{align}
Define
\begin{align*}
\mathcal{L}:=\D_3^{-1}\D_1^{-1}\D_2
\end{align*}
the map has the Lax representation
\begin{align*}
\mathcal{L}\mapsto
\tilde{\mathcal{L}}:=\D_1^{-1}\D_2\D_3^{-1}.
\end{align*}
Inspired by that,
we can insert  a spectral parameter $z$ in the linear problem \eqref{eq_lax}
\begin{align*}
	L_i^j:=\begin{pmatrix}
		0&0&1&0\\
		0&0&0&1\\
		za_i^j&zb_i^j&z&c_i^j\\
		zf_i^j&zg_i^j&z&h_i^j
	\end{pmatrix},\quad
	M_i^j:=
	\begin{pmatrix}
		0&1&0&0\\
		0&\frac{\alpha_{i-1}^j}{\lambda_{i-1}^j}&\frac{f_{i-1}^j}{\lambda_{i-1}^j}&-\frac{f_{i-1}^j}{\lambda_{i-1}^j}\\
		0&0&0&1\\
		z\frac{\beta_i^j}{\lambda_i^j}&z\frac{\gamma_i^j}{\lambda_i^j}&0&\frac{\eta_i^j}{\lambda_i^j}
	\end{pmatrix}.
\end{align*}
\subsection{Poisson brackets}

Here we derive explicit formulas for Poisson brackets preserved by the map $T_Q$.
The phase space of this map is the space $\mathscr{P}_{3,N}$ of pairs of twisted $N$-gons $(v_i^1,v_i^2)_{i\in\mathbb{Z}}$ in $\mathbb{RP}^3$.
The space can be coordinated using six $N$-periodic sequences $a_i$, $b_i$, $c_i$, $f_i$, $g_i$, $h_i$ defined using cross-ratios \eqref{eq_cr}.
\begin{proposition}
In these coordinates $\{a_i,\,b_i,\,c_i,\,f_i,\,g_i,\,h_i\}$,
the Poisson structure of the map $T_Q$ takes the form \eqref{eq_Poi}.
\end{proposition}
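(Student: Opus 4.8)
The plan is to exploit what has already been proven: by Theorem~\ref{pro_gon} the phase space $\mathscr{P}_{3,N}$ is identified with triples of difference operators $(\D_1,\D_2,\D_3)$ modulo the gauge action \eqref{eq_map}, and by Proposition~\ref{pro_map} together with the fact (established above) that the map of Proposition~\ref{pro_map} is Poisson, the map $T_Q$ is Poisson for the bracket transported from the Poisson--Lie structure on the group of pseudo-difference operators. So the invariant bracket on $\mathscr{P}_{3,N}$ is already canonically singled out; the entire content of this proposition is to write it out in the cross-ratio coordinates $a_i,b_i,c_i,f_i,g_i,h_i$ of \eqref{eq_cr} and verify that it equals \eqref{eq_Poi}. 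I would organize this into four steps: (1) express the operator data in terms of $a_i,\dots,h_i$; (2) recall the $r$-matrix bracket on the relevant group; (3) reduce by the gauge group $H\tilde{\times}H$ together with the constraints cutting out $\SDO_{2N}$; and (4) perform the change of variables.

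For step (1): by Proposition~\ref{pro_equi} the pair $(v^1,v^2)$ is encoded, via the interlaced sequence $\Phi$ of \eqref{def_phi}, by a $2N$-periodic properly bounded difference operator $\D\in\PBDO_{2N}(\{0,1,2,3,4\})$, and the row reduction in the proof of Theorem~\ref{pro_gon} exhibits the coefficients of $\D$, and of the factors $\D_1,\D_2,\D_3$, as explicit rational functions of $a_i,\dots,h_i$ read off from the recurrence \eqref{recur_eq0}; equivalently one may work directly with the first-order Lax matrices $L_i^j$ of \eqref{eq_lax}, whose entries are literally these coordinates. For step (2): the group $\mathrm{I}\Psi\DO_{2N}$ is a version of the loop group of $GL$, carrying the standard Sklyanin bracket attached to the standard $r$-matrix, which on the level of a first-order difference Lax operator is the familiar quadratic ($r$-matrix) bracket; since in this formalism all brackets are \emph{local}, only a bounded range of indices $i$ interacting, the outcome is a finite list of brackets among $a_i,a_{i\pm1},\dots,h_i,h_{i\pm1}$ which one then assembles into \eqref{eq_Poi}.

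The genuine work is step (3), and this is where I expect the main difficulty to lie. The bracket we want lives on the quotient $\PBDO_{2N}(\J)/H\tilde{\times}H$, further restricted along the image of triples with $\D_3\in\SDO_{2N}(\J_3)$; the gauge action \eqref{eq_hh}/\eqref{eq_map} is the (anti-)Poisson left--right multiplication action, so the quotient inherits a bracket exactly as in \cite{refactorization2020}. To make it explicit I would fix a slice transverse to the orbits --- precisely the normalization built into \eqref{recur_eq0}, where the coefficient of $V_i^1$ equals $1$ in both equations and $\D_3$ carries the zero pattern $f_{2i}=h_{2i}=0$ (this is the content of the rescalings $t_i^1,t_i^2$ of \eqref{eq_t}) --- and compute the corresponding Dirac bracket; alternatively one posits \eqref{eq_Poi} and checks directly that it is gauge-invariant and that its pushforward to the group agrees with the Sklyanin bracket. \textbf{The hard part will be this reduction:} correctly identifying the coisotropy of the normalization and zero-pattern constraints, carrying out the Dirac computation without index or sign errors, and then executing the nonlinear change of variables of step (4), which mixes several coefficients. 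A cleaner variant that sidesteps the reduction is to equip the space of Lax matrices $L_i^j$ with the standard quadratic $r$-matrix bracket for which the spectral invariants of the monodromy Poisson-commute, observe that the vertical move is implemented by the "gauge" transformation $M_i^j$ through $L_i^{j+1}M_i^j=M_{i+1}^jL_i^j$ and hence preserves this bracket, and read off \eqref{eq_Poi}. In every case invariance under $T_Q$ is automatic from the Poisson property already proved, so what remains is the lengthy but mechanical verification of the displayed formulas, which I would state and leave to a direct computation.
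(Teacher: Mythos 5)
Your plan is correct and follows the same overall strategy as the paper: transport the Poisson--Lie bracket on pseudo-difference operators through the operator encoding of Theorem \ref{pro_gon}, then change variables to the cross-ratio coordinates. The substantive difference lies in how the quotient by the gauge group is handled. You identify the reduction (coisotropy of the normalization and zero-pattern constraints, a Dirac bracket on a slice) as the hard step; the paper sidesteps it entirely by working not with the triple $(\D_1,\D_2,\D_3)$ but with the pair $(\D_1\D_3,\,\D_2)$, writing the local brackets \eqref{eq_Po1}--\eqref{eq_Po2} directly on the coefficients $A_i,\dots,S_i$ of these two operators, and then expressing $a_i,\dots,h_i$ as explicit rational functions \eqref{eq_aA} of those coefficients. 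Because the coordinates are cross-ratios, these expressions are automatically invariant under the action \eqref{eq_map}, so no Dirac computation is required: one just evaluates the bracket of gauge-invariant functions and obtains \eqref{eq_Poi}. A second, smaller divergence: you assert that invariance of \eqref{eq_Poi} under $T_Q$ is automatic from the Poisson property of the refactorization map; this is sound in principle but presupposes that \eqref{eq_Poi} is exactly the pushforward bracket, which is what the change-of-variables computation must establish, and the paper in fact also verifies invariance of \eqref{eq_Poi} under the explicit formulas \eqref{eq_sys} by computer algebra. Your alternative route via the quadratic $r$-matrix bracket on the Lax matrices $L_i^j$ is not the one taken in the paper, though it would buy a derivation that avoids the operator factorization altogether if the compatibility $L_i^{j+1}M_i^j=M_{i+1}^jL_i^j$ were shown to be Poisson for that bracket.
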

\begin{proof}
According to proposition \ref{pro_gon},
there are three difference operators $\D_i$,
$i=1,2,3$ as in lemma \ref{lemma_d123},
corresponding to a pair of twisted $N$-gons $(v^1,v^2)\in\mathscr{P}_{3,N}$.
Suppose that
\begin{align*}
	\D_1\D_3=\alpha T^2+\beta T^3 +\gamma T^4+\delta T^5,\quad
	\D_2=\xi +\eta T,
\end{align*}
where $\alpha$, $\beta$, $\gamma$, $\delta$, $\xi$ and $\eta$ are $2N$-periodic sequences,
such that
\begin{align}\label{eq_redu}
	\begin{split}
		\alpha_{2i}=0,\quad
		\beta_{2i}=C_i,\quad
		\gamma_{2i}=D_i,\quad
		\delta_{2i}=R_i,\\
		\alpha_{2i+1}=G_i,\quad
		\beta_{2i+1}=H_i,\quad
		\gamma_{2i+1}=S_i,\quad
		\delta_{2i+1}=0,\\
		\xi_{2i}=A_i,\quad
		\xi_{2i+1}=E_i,\quad
		\eta_{2i}=B_i,\quad
		\eta_{2i+1}=F_i,
	\end{split}
\end{align}
for all $i\in\mathbb{Z}$.
Assume that $(\phi^1,\phi^2)$ is an arbitrary lift of the pair of twisted $N$-gons.
Then we can construct a sequence $\Phi$ using \eqref{def_phi}, such that $\D_1\D_3\Phi=z\D_2\Phi$.
Equivalently,
we have
\begin{align*}
	z\left(A_i\phi_{i-1}^1+B_i\phi_{i-1}^2\right)
	&
	=C_i\phi_i^2+D_i\phi_{i+1}^1+R_i\phi_{i+1}^2,\\
	z\left(E_i\phi_{i-1}^2+F_i\phi_{i}^1\right)
	&
	=G_i\phi_i^2+H_i\phi_{i+1}^1+S_i\phi_{i+1}^2,
\end{align*}
where $z$ is the spectral parameter.
Using the cross-ratio expressions \eqref{eq_cr},
the coordinates $a_i$, $b_i$, $c_i$, $f_i$, $g_i$, $h_i$ can be expressed in terms of coefficients of $\D_1\D_3$ and $\D_2$ as follows:
\begin{align}\label{eq_aA}
	\begin{split}
		a_i=\frac{A_iS_i\theta_{i-1}}{F_iF_{i-1}R_iR_{i-1}},\quad
		b_i=-\frac{D_{i-2}\theta_{i-1}\rho_i}{F_iF_{i-1}R_iR_{i-1}R_{i-2}},\quad
		c_i=-\frac{D_{i-1}\zeta_{i}}{F_iR_iR_{i-1}},\\
		f_i=\frac{A_iH_i\theta_{i-1}}{D_iF_iF_{i-1}R_{i-1}},\quad
		g_i=-\frac{D_{i-2}\xi_i\theta_{i-1}}{D_iF_iF_{i-1}R_{i-1}R_{i-2}},\quad
		h_i=-\frac{D_{i-1}\eta_{i}}{D_{i}F_iR_{i-1}},
	\end{split}
\end{align}
where 
\begin{align*}
	\theta_i=D_{i}S_{i}-H_{i}R_{i},\quad
	\rho_i=B_{i}S_{i}-E_{i}R_{i},\quad
	\zeta_i=C_{i}S_{i}-G_{i}R_{i},\\
	\xi_i=B_{i}H_{i}-E_{i}D_{i},\quad
	\eta_i=C_{i}H_{i}-G_{i}D_{i}.
\end{align*} 
The corresponding matrix of $\D_2$ is
\begin{align*}
	\begin{matrix}
		\ddots&\ddots&&&&&\\
		&A_i&B_i&&&&\\
		&&E_i&F_i&&&\\
		&&&A_{i+1}&B_{i+1}&&&&\\
		&&&&E_{i+1}&F_{i+1}&&&\\
		&&&&&\ddots&\ddots&&\\
	\end{matrix}
\end{align*}
while the corresponding matrix of $\D_1\D_3$ is
\begin{align*}
	\begin{matrix}
		\ddots&\ddots&\ddots&\ddots&&&&&&\\
		&0&C_i&D_i&R_i&&&&&\\
		&&G_i&H_i&S_i&0&&&&\\
		&&&0&C_{i+1}&D_{i+1}&R_{i+1}&&&\\
		&&&&G_{i+1}&H_{i+1}&S_{i+1}&0&&\\
		&&&&&\ddots&\ddots&\ddots&\ddots&\\
	\end{matrix}
\end{align*}
The corresponding Poisson brackets are defined as
\begin{align}\label{eq_Po2}
	\begin{split}
		\left[A_i,B_i \right]=\frac{1}{2}A_iB_i,\quad
		\left[E_i,F_i \right]=\frac{1}{2}E_iF_i,\quad
		\left[B_i,E_i \right]=\frac{1}{2}B_iE_i,\quad
		\left[F_i,A_{i+1} \right]=\frac{1}{2}F_iA_{i+1}.
	\end{split}
\end{align}
and
\begin{align}\label{eq_Po1}
	\begin{split}
		\left[D_i,C_i \right]=\frac{1}{2}D_iC_i,\quad
		\left[R_i,C_i \right]=\frac{1}{2}R_iC_i,\quad
		\left[R_i,D_i \right]=\frac{1}{2}R_iD_i,\quad
		\left[S_i,H_i \right]=\frac{1}{2}S_iH_i,\\
		\left[H_i,G_i \right]=\frac{1}{2}H_iG_i,\quad
		\left[S_i,G_i \right]=\frac{1}{2}S_iG_i,\quad
		\left[G_i,C_i \right]=\frac{1}{2}G_iC_i,\quad
		\left[H_i,D_i \right]=\frac{1}{2}H_iD_i,\\
		\left[S_i,R_i \right]=\frac{1}{2}S_iR_i,\,
		\left[C_{i+1},R_i \right]=\frac{1}{2}C_{i+1}R_i,\,
		\left[C_{i+1},Q_i \right]=\frac{1}{2}C_{i+1}S_i,\,
		\left[G_{i+1} ,R_i\right]=\frac{1}{2}G_{i+1}R_i,\\
		\left[G_{i+1},Q_i \right]=\frac{1}{2}G_{i+1}S_i,\quad
		\left[H_i,C_i \right]=D_iG_i,\quad
		\left[S_i,C_i \right]=G_iR_i,\quad
		\left[S_i,D_i \right]=H_iR_i.
	\end{split}
\end{align}
Then
the Poisson brackets \eqref{eq_Poi} can be computed using \eqref{eq_Po2}, \eqref{eq_Po1} and \eqref{eq_aA} through a straightforward calculation.

It can be verified that the map \eqref{eq_sys} preserves these brackets \eqref{eq_Poi} using a computer algebra system.
\end{proof}

\subsection{Other evolutions}\label{subsec_other}
According to Lemma \ref{lemma_d123},
we have
\begin{align}\label{e1}
	\begin{split}
		\D_1\D_3\Phi&=\D_2\Phi,\\
		\tilde{\Phi}&=\D_3\Phi,
	\end{split}
\end{align}
where the second equation represent the discrete evolution of $\Phi$. 
It is natural to ask whether there are other possible evolutions.
For example, consider 
\begin{align}\label{e2}
	\bar{\Phi}=\D_2\Phi=\D_1\D_3\Phi.
\end{align}
Since $\D_3\in\SDO_{2N}$,
we have $\D_1\D_3$ is a difference operator supported in $\{2,3,4,5\}$ with the special form in Lemma \ref{lemma_fac}.
Thus, the map is a reduction of the standard refactorization mapping with $\J_+=\{0,1\}$ and $\J_-=\{2,3,4,5\}$.
The two different evolutions \eqref{e1} and \eqref{e2} both preserve the Poisson structure \eqref{eq_Poi}.
However,
the geometric meaning of \eqref{e2} is unknown.

Recall that the inverse dented pentagram map $T_{I,J}$ in $\mathbb{RP}^4$ with $I=(1,1,1)$ and $J=(1,1,2)$ is defined as \cite{dented}
\begin{align*}
	T_{I,J}v_i=&\left(v_i,v_{i+1},v_{i+2},v_{i+3}\right)
	\cap
	\left(v_{i+1},v_{i+2},v_{i+3},v_{i+4}\right)\\
	&\cap
	\left(v_{i+2},v_{i+3},v_{i+4},v_{i+5}\right)
	\cap
	\left(v_{i+4},v_{i+5},v_{i+6},v_{i+7}\right)\\
	=&\left(v_{i+2},v_{i+3}\right)
	\cap
	\left(v_{i+4},v_{i+5},v_{i+6},v_{i+7}\right),
\end{align*}
for $i\in\mathbb{Z}$.
It is related to the standard refactorization mapping with $\J_+=\{0,1\}$ and $\J_-=\{2,3,4,5\}$.
It is open whether there are connections between our map \eqref{map1} and the inverse dented pentagram map above.

\section{Reduction to $B$-net}\label{sec_redu}

The B-net (or B-quadrilateral lattice) was firstly introduced by Doliwa \cite{Doliwa_BQL}.
In this section,
we consider the reduction of reduced $Q$-net to reduced B-net.

\begin{define}\label{def_B}
	A reduced $Q$-net $v:\mathbb{Z}^2\rightarrow \mathbb{RP}^N$ of type $Q=\{a,b,c\}$ is called a reduced $B$-net if the points $v_r$, $v_{r+a+b}$, $v_{r+a+c}$ and $v_{r+b+c}$ are coplanar.
\end{define}

In a $B$-net,
the points $v_{r+a}$, $v_{r+b}$, $v_{r+c}$ and $v_{r+a+b+c}$ are also coplanar.
According to \cite{Doliwa_BQL},
under the hypotheses of Definition \ref{def_B},
there exist a gauge such that 
\begin{align}\label{eq_vv}
	\bar{V}_{r+p+q}
	=\bar{V}_r+\gamma^{pq}_r\left(\bar{V}_{r+p}-\bar{V}_{r+q}\right),
\end{align}
where $p,q\in\{a,b,c\}$ are distinct and the coefficients $\gamma^{pq}_r$ depend on the positions of the points $\bar{V}_{r}\in\mathbb{R}^4$.
The equations above are just reductions of the linear problem of the well-known discrete BKP  \cite{Nimmo1997}.
The compatibility conditions of \eqref{eq_vv} lead to the following set of nonlinear equations:
\begin{align}
	1+\gamma^{qs}_{r+p}\left(\gamma^{pq}_{r}-\gamma^{ps}_r\right)=\gamma^{ps}_{r+q}\gamma^{pq}_r=\gamma^{pq}_{r+s}\gamma^{ps}_r,
\end{align}
where $p,q,s\in\{a,b,c\}$ are distinct with $\gamma^{pq}=-\gamma^{qp}$.
The second equality in the compatibility condition implies the existence of the potential $\tau:\mathbb{Z}^2\rightarrow \mathbb{R}$,
in terms of which the function $\gamma^{ij}_r$ can be written as
\begin{align*}
	\gamma^{pq}_r=\frac{\tau_{r+p}\tau_{r+q}}{\tau_r\tau_{r+p+q}},
\end{align*}
where $p,q\in\{a,b,c\}$ and $p\neq q$.
Then the first equality can be rewritten as the following reduction of the discrete BKP equations
\begin{align*}
	\tau_r\tau_{r+a+b+c}=\tau_{r+a+b}\tau_{r+c}-\tau_{r+a+c}\tau_{r+b}+\tau_{r+b+c}\tau_{r+a}.
\end{align*}

\begin{example}
For $Q=\{(-1,1),(0,1),(1,1)\}$,
the equation \eqref{eq_vv} is
\begin{align}\label{eq_acg}
\begin{split}
\bar{V}_{i-1}^{j+2}-\bar{V}_i^j
&=A_i^j(\bar{V}_{i-1}^{j+1}-\bar{V}_i^{j+1}),\\
\bar{V}_{i+1}^{j+2}-\bar{V}_i^j
&=G_i^j(\bar{V}_{i}^{j+1}-\bar{V}_{i+1}^{j+1}),\\
\bar{V}_{i}^{j+2}-\bar{V}_i^j
&=C_i^j(\bar{V}_{i-1}^{j+1}-\bar{V}_{i+1}^{j+1}).
\end{split}
\end{align}
Its compatibility conditions yield the following discrete  system 
\begin{align}\label{eq_et}
A_i^{j+1}=
\frac{A_{i-1}^j}{\eta_{i-1}^j},\quad
G_i^{j+1}=
\frac{G_{i+1}^j}{\eta_{i+1}^j},\quad
C_i^{j+1}=
\frac{C_{i}^j}{\eta_{i}^j},
\end{align}
where 
\begin{align}\label{eq_eta}
\eta_i^j:=A_i^jC_i^j+C_i^jG_i^j-A_i^jG_i^j.
\end{align}
Now we introduce the $\tau$-function
\begin{align*}
A_i^j=\frac{z_1+z_2}{z_1-z_2}\frac{\tau_{i-1}^{j+1}\tau_i^{j+1}}{\tau_i^j\tau_{i-1}^{j+2}},\quad
G_i^j=\frac{z_2+z_3}{z_2-z_3}\frac{\tau_{i+1}^{j+1}\tau_i^{j+1}}{\tau_i^j\tau_{i+1}^{j+2}},\quad
C_i^j=\frac{z_3+z_1}{z_1-z_3}\frac{\tau_{i-1}^{j+1}\tau_{i+1}^{j+1}}{\tau_i^j\tau_{i}^{j+2}},
\end{align*}
where $z_i$, $i=1,2,3$ are constants.
Under this transformation,
the system \eqref{eq_et} can be bilinearized
\begin{align}\label{eq_tau}
y_1\tau_i^j\tau_{i}^{j+3}+y_2\tau_i^{j+1}\tau_i^{j+2}+y_3\tau_{i-1}^{j+1}\tau_{i+1}^{j+2}+y_4\tau_{i-1}^{j+2}\tau_{i+1}^{j+1}=0,
\end{align}
where
\begin{align*}
y_1=(z_1-z_2)(z_2-z_3)(z_3-z_1),\quad
y_2=(z_1+z_2)(z_2+z_3)(z_3-z_1),\\
y_3=(z_1+z_2)(z_3+z_1)(z_2-z_3),\quad
y_4=(z_2+z_3)(z_3+z_1)(z_1-z_2).
\end{align*}
Consider the transformation
\begin{align*}
(n,t)=(i,-\delta j),
\end{align*}
and the reduction
\begin{align*}
\frac{y_2}{y_1}=2\delta^2-1,\quad
\frac{y_3}{y_1}=\frac{y_4}{y_1}=-\delta^2,
\end{align*}
then the equation \eqref{eq_tau} becomes
\begin{align*}
\tau_n^t\tau_n^{t+3}+(2\delta^2-1)\tau_n^{t+1}\tau_n^{t+2}-\delta^2\tau_{n-1}^{t+1}\tau_{n+1}^{t+2}
-\delta^2\tau_{n-1}^{t+2}\tau_{n+1}^{t+1}=0.
\end{align*}
This is equivalent to the discrete-time Toda equation of BKP type
\begin{align*}
	\left[e^{\frac{3}{2}\delta D_t}-e^{-\frac{1}{2}\delta D_t}-\delta^2\left(
	e^{D_n-\frac{1}{2}\delta D_t}+e^{-D_n-\frac{1}{2}\delta D_t}-2e^{-\frac{1}{2}\delta D_t}
	\right)
	\right]
	\tau \cdot \tau=0,
\end{align*}
which was derived in
\cite{hirota_iwao_tsujimoto_2001}.
Here $D_t$, $D_n$ are known Hirota operators.
Thus,
we provide a geometric explanation for this equation.
\end{example}

\begin{remark}\label{re_cr}
Using the cross-ratio expressions \eqref{eq_cr},
the coordinates $a_i$, $b_i$, $c_i$, $f_i$, $g_i$, $h_i$ for the pair of polygons $(\bar{V}^1,\bar{V}^2)$ in \eqref{eq_acg} can be expressed in terms of $A$, $C$, $G$
\begin{align*}
a_i=\frac{C_{i-1}\left(C_i-A_i\right)}{A_iA_{i-1}},\quad
b_i=\frac{C_{i-1}\eta_{i-2}}{A_iA_{i-1}A_{i-2}},\quad
c_i=-\frac{\eta_{i-1}}{A_iA_{i-1}},\\
f_i=\frac{C_{i-1}}{A_{i-1}},\quad
g_i=\frac{C_{i-1}\left(C_i-G_i\right)\eta_{i-2}}{A_{i-1}A_{i-2}\eta_i},\quad
h_i=\frac{G_i\eta_{i-1}}{A_{i-1}\eta_i},
\end{align*}
where $\eta_i$ is defined in \eqref{eq_eta}.
Omitting the variables $A$, $C$, $G$ above,
we obtain three equations
\begin{align}\label{eq_con}
a_i=f_i\left(1+f_{i+1}\right),\quad
g_i=\frac{b_i}{c_ic_{i+1}}\cdot\frac{b_{i+1}f_{i+1}+c_ic_{i+1}}{1+f_{i+1}},\quad
h_i=\frac{c_ic_{i+1}-b_{i+1}}{c_{i+1}\left(1+f_{i+1}\right)}.
\end{align}
Through a straightforward calculation,
the Poisson brackets \eqref{eq_Poi} are not preserved under the reduction determined by \eqref{eq_con}.
Therefore the reduction to the discrete BKP-type equation is not  Poisson.
\end{remark}

\section{Discussion}
In this paper,
we introduce a new family of generalizations of the pentagram map related to $Q$-nets.
The map can be treated as a
refactorization mapping related to the pseudo-difference operator $\D_1^{-1}\D_2$, such that $\D_1$ is supported in $\{2,3\}$ and $\D_2$ is supported in $\{0,1\}$.
There are some interesting open questions.

Firstly,
it is natural to consider the refactorization mapping with 
a general form
\begin{align*}
\mathcal{L}:=\left(\D_3^{-1}\D_4\right)\left(\D_1^{-1}\D_2\right)\mapsto
\tilde{\mathcal{L}}:=\left(\D_1^{-1}\D_2\right)\left(\D_3^{-1}\D_4\right),
\end{align*}
where $\D_i$, $i=1,2,3,4$ are periodic difference operators with different supports.
What is a geometric interpretation in this case?

According to subsection \ref{subsec_other},
the geometric explanation for evolution \eqref{e2} remains unknown.
Besides,
a question is whether there are connections between $Q$-pentagram map and the inverse dented pentagram map.

As noted in Remark \ref{re_cr},
the Poisson bracket \eqref{eq_Poi} cannot be reduced directly to the BKP case.
How to construct Poisson structure for \eqref{eq_et}?

It is known the pentagram map can be described as sequences of cluster mutations \cite{GLICK20111019,cluster2024}.
It would be interesting to find a similar description for $Q$-type pentagram maps.

The  continuous limit of the classical pentagram map is the Boussinesq equation \cite{2010The},
and the continuous limits of some generalized pentagram maps were considered in \cite{dented,wang2023pentagram}.
It is worth studying the continuous limits of $Q$-type pentagram maps.

\section*{Acknowledgement}
The author would like to thank Anton Izosimov
for his useful comments.
The author was supported by  the National Natural Science Foundation of China (Grant Nos. 12201325 and 12235007).

\begin{appendix}

\section{Poisson brackets of the map $T_Q$}
\begin{align}\label{eq_Poi}
	\begin{split}
		[a_i,b_i]=a_ib_i,\quad
		[c_i,a_i]=c_ia_i,\quad
		[h_i,a_i]=h_if_i,\quad
		[c_i,b_i]=c_ib_i,\quad
		[f_i,b_i]=f_ig_i,\\
		[h_i,b_i]=g_ih_i,\quad
		[f_{i},g_i]=f_ig_i,\quad
		[h_{i},f_i]=h_if_i,\quad
		[h_{i},g_i]=h_ig_i,\\
		[a_{i},a_{i-1}]=a_{i-1}a_i,\quad
		[b_{i},a_{i-1}]=a_{i-1}b_i,\quad
		[f_{i},a_{i-1}]=a_{i-1}f_i,\quad
		[g_{i},a_{i-1}]=a_{i-1}g_i,\\
		[a_{i+1},b_i]=a_{i+1}b_i,\quad
		[a_{i+1},c_i]=a_{i+1}c_i,\quad
		[a_{i+1},f_i]=a_{i+1}f_i,\quad
		[a_{i+1},g_i]=a_{i+1}g_i,\\
		[a_{i+1},h_i]=a_{i+1}h_i,\quad
		[b_i,b_{i-1}]=b_ib_{i-1},\quad
		[f_i,b_{i-1}]=f_ib_{i-1},\quad
		[g_i,b_{i-1}]=g_ib_{i-1},\\
		[b_{i+1},c_i]=b_{i+1}c_i,\quad
		[b_{i+1},f_i]=b_{i+1}f_i,\quad
		[b_{i+1},g_i]=b_{i+1}g_i,\quad
		[b_{i+1},h_i]=b_{i+1}h_i,\\
		[f_{i},f_{i-1}]=f_if_{i-1},\quad
		[g_{i},f_{i-1}]=g_if_{i-1},\quad
		[f_{i+1},c_{i}]=f_{i+1}c_{i},\quad
		[f_{i+1},g_{i}]=f_{i+1}g_{i},\\
		[f_{i+1},h_{i}]=f_{i+1}h_{i},\quad
		[g_{i},g_{i-1}]=g_ig_{i-1},\quad
		[g_{i+1},c_i]=g_{i+1}c_i,\quad
		[g_{i+1},h_i]=g_{i+1}h_i,\\
		[b_i,g_i]=g_i\left(b_i-g_i\right),\quad
		[a_{i-2},g_i]=g_i\left(f_{i-2}-a_{i-2}\right),\quad
		[a_{i-2},b_i]=b_i\left(f_{i-2}-a_{i-2}\right),\\
		[a_i,f_i]=f_i\left(a_i-f_i\right),\quad
		[c_{i},a_{i-1}]=c_i\left(a_{i-1}-f_{i-1}\right),\quad
		[h_{i},a_{i-1}]=h_i\left(a_{i-1}-f_{i-1}\right),\\
		[h_i,c_i]=h_i\left(h_i-c_i\right),\quad
		[b_{i-2},g_i]=g_i\left(g_{i-2}-b_{i-2}\right),\quad
		[c_i,b_{i-1}]=c_i\left(b_{i-1}-g_{i-1}\right),\\
		[h_i,b_{i-1}]=h_i\left(b_{i-1}-g_{i-1}\right),\quad
		[b_{i+2},c_i]=b_{i+2}\left(c_i-h_i\right),\quad
		[c_{i},c_{i-1}]=c_i\left(c_{i-1}-h_{i-1}\right),\\
		[f_{i},c_i]=h_i\left(f_i-a_i\right)-f_ic_i,\quad
		[g_{i+2},c_i]=g_{i+2}\left(c_i-h_i\right),\quad
		[b_{i-2},b_i]=b_i\left(g_{i-2}-b_{i-2}\right),\\
		[h_{i+1},c_i]=h_{i+1}\left(c_i-h_i\right),\quad
		[g_{i},c_i]=h_i\left(g_i-b_i\right)-g_ic_i,\quad
		[a_i,g_i]=a_ig_i+f_i\left(b_i-g_i\right).
	\end{split}
\end{align}

\end{appendix}

%\bibliographystyle{plain}
%\bibliography{ref}

\end{document}